\newtheorem{definition}{Definition}
\newtheorem{theorem}{Theorem}
\newtheorem{lemma}{Lemma}
\newtheorem*{assp*}{\textbf{(P) Log-H\"older continuity condition}}
\newtheorem*{assi*}{\textbf{(I) Short-range interaction}}
\newtheorem*{dsknN*}{(\textbf{DS}.$k,n,N$)}
\numberwithin{equation}{section}
\numberwithin{theorem}{section}
\numberwithin{definition}{section}
\numberwithin{lemma}{section}
\DeclareMathOperator{\dist}{dist}
\DeclareMathOperator{\diam}{diam}
\DeclareMathOperator{\prob}{\mathbb{P}}
\newcommand{\condP}{\mathbf{(P)}}
\newcommand{\condI}{\mathbf{(I)}}
\newcommand{\PI}{\mathrm{PI}}
\newcommand{\FI}{\mathrm{FI}}
\newcommand{\ee}{\mathrm{e}}
\newcommand{\card}{\mathrm{card}}
\newcommand{\Bone}{\mathbf{1}}
\newcommand{\BH}{\mathbf{H}}
\newcommand{\BDelta}{\mathbf{\Delta}}
\newcommand{\sep}{\mathrm{sep}}
\newcommand{\dskNN}{(\textbf{DS}$.k,N,N$)}
\newcommand{\dsknN}{(\textbf{DS}$.k,n,N$)}
\newcommand{\dskonN}{(\textbf{DS}$.k+1,n,N$)}
\newcommand{\dsknprimeN}{(\textbf{DS}$.k,n',N$)}
\newcommand{\dskunnprimeN}{(\textbf{DS}$.k-1,n',N$)}
\newcommand{\dskprimenN}{(\textbf{DS}$.k',n,N$)}
\newcommand{\BPsi}{\mathbf{\Psi}}
\newcommand{\DZ}{\mathbb{Z}}
\newcommand{\DR}{\mathbb{R}}
\newcommand{\DN}{\mathbb{N}}
\newcommand{\esm}{\mathbb{E}}
\newcommand{\DP}{\mathbb{P}}
\newcommand{\BA}{\mathbf{A}}
\newcommand{\BB}{\mathbf{B}}
\newcommand{\BC}{\mathbf{C}}
\newcommand{\BX}{\mathbf{X}}
\newcommand{\BK}{\mathbf{K}}
\newcommand{\BP}{\mathbf{P}}
\newcommand{\BG}{\mathbf{G}}
\newcommand{\BI}{\mathbf{I}}
\newcommand{\BU}{\mathbf{U}}
\newcommand{\BV}{\mathbf{V}}
\newcommand{\Bx}{\mathbf{x}}
\newcommand{\By}{\mathbf{y}}
\newcommand{\Bu}{\mathbf{u}}
\newcommand{\Bv}{\mathbf{v}}
\newcommand{\FB}{\mathfrak{B}}
\newcommand{\CJ}{\mathcal{J}}
\newcommand{\CN}{\mathcal{N}}
\newcommand{\CR}{\mathcal{R}}
\title[Localization for multi-particle Poisson models]{Localization for one-dimensional two-particle random Schr\"odinger operators with Poisson potential}
\author{Tr\'esor Ekanga}
\address{Universit\'e Paris Diderot 13 Rue Albert Einstein 75013 Paris France}
\email{ekanga@math.cnrs.fr}
\keywords{multi-particle, localization, weak interaction, continuous, Poisson model}
\begin{document}
\begin{abstract}
We prove the complete spectral and the strong dynamical Anderson localization  in a two-particle random Schr\"odinger operators with the Poisson potential. The results apply with sufficiently  weak interaction between the particle system.
\end{abstract}
\maketitle

\section{Introduction, assumptions and the main results}

\subsection{Introduction}
In this work we consider a system of two-particle  Anderson  model with a Poisson potential in the continuous  one-dimensional space  and prove the localization results (Anderson spectral localization and the strong dynamical localization ) for a sufficiently weakly interacting particle system. The novelty of this problem is that for the Poisson potential, we have a lack of monotonicity in the random parameter. A property which was successfully used in proofs of localization  for Anderson-type models \cites{SW86,KS87,CH94}.

This difficulty was earlier overcome  in the works by Stolz \cites{S95,S55}. Recall that in \cite{PF92}, the authors studied the spectra of random operators and almost periodic operators. We can find  in the books by Carmona et al. \cites{CL90,C83} some materials on spectral theory of random Schr\"odinger operators for one and higher dimensional models. 

The theory of multi-particle models  such as two-particle Anderson models is relatively recent and constitute a new direction in the spectral theory of random schr\"odinger operators \cites{AW09,CS09}.

In our earlier work \cite{Eka16} in multi-particle Anderson models in one dimension, we prove the complete spectral and strong dynamical localization  for the weakly interacting multi-particle system. While the continuous  version of the work can be found in \cite{Eka16}. 

Let now discuss, on the structure of the paper: in the next Section, we present  the model and state the assumptions and the main results. Section \ref{sec:MSA.scheme} is devoted to the initial length scale estimates of the multi-particle multi-scale analysis. In Section \ref{sec:ILS} we prove the initial length scales estimates of the multi-scale analysis. In Section \ref{sec:MP.induction}, we prove the multi-scale induction step of the multi-scale analysis. In the last Section, Section \ref{sec:proof.results}, we prove the main results on spectral localization Theorem \ref{thm:main.result.exp.loc} and dynamical localization Theorem \ref{thm:main.result.dynamical.loc}.

\subsection{The model and the assumptions}
the two-particle one-dimensional  Anderson model with a Poisson random potential is given by the Schr\"odinger Hamiltonian
\[
\BH^{(2)}_h(\omega)=-\BDelta + V(\Bx,\omega)+h\BU(\Bx), \quad \Bx=(x_1,x_2)\in\DR^2,
\] 
acting on $L^2(\DR^2)$, where 
\[
V(\Bx,\omega)=\sum_i f(\Bx-X_i(\omega)),
\]
with $f\in L^2(\DR)$ and where $\{X_i(\omega)\}$ is a finite set of points $X_i(\omega)\in\DR$ so that $V$ is a random variable  relative  to some probability space $(\Omega,\FB,\DP)$ and acts on $L^2(\DR^2)$ as a multiplication operator by the function $V(\Bx)$. Also $\BU$ is the interaction potential between the two-particle and acts on $L^2(\DR^2)$ as a multiplication  operator by the function $\BU(\Bx)$.

Set $\Omega=\DR^{DZ^d}$ and $\FB=\bigotimes B(\DR)$  where $B(\DR)$ is the Borel sigma-algebra on $\DR$. Let $\mu$ be a probability measure on $\DR$ and define $\DP=\bigotimes_{\DZ^d}\mu$ on $\Omega$.

\begin{assp*}
The random potential $V: \DZ^d\times \Omega\rightarrow \DR$ is i.i.d. and the corresponding probability distribution function $F_V$ is log-H\"older continuous: More precisely
\begin{align*}
&s(F_V,\varepsilon) :=\sup_{a\in\DR}(F_V(a+\varepsilon)-F_V(a))\leq \frac{C}{|\ln \varepsilon|^{2A}}\\
& \text{ for some $C\in(0,\infty)$ and $A\in (\frac{3}{2}\times 4^N+9Nd,\infty)$}.
\end{align*}
Further, the single-site potential $f$ is non negative and compactly supported.
\end{assp*}

\begin{assi*}
The interaction potential $\BU$ is bounded and  there exists $r_0\in\DN$ such that 
 \[
\BU(x_1,x_2)=0 \quad \text{if $|x_1-x_2|\geq r_0$}
\]
\end{assi*}

\subsection{The results}

\begin{theorem}\label{thm:main.result.exp.loc}
Let $d=1$. Under assumption $\condI$ and $\condP$ there exists $h^*\in(0,\infty)$ such that for any $h\leq |h^*|$ the Hamiltonian $\BH^{(N)}_h$ with interaction of amplitude  $|h|$ exhibits complete Anderson localization, i.e., with $\DP$-probability one, the spectrum of $\BH^{(N)}_h$ is pure point and each eigenfunction $\BPsi$ is exponentially decaying fast at infinity:
\[
\|\chi_{\Bx}\cdot\BPsi\| \leq C\ee^{-c|\Bx|}
\]
for some positive constants $c, C$.
\end{theorem}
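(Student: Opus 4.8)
The plan is to prove the theorem by the multi-particle multi-scale analysis (MPMSA) in the continuum, adapting the multi-particle scheme of the author's earlier work to the Poisson potential. Since $N=2$, the configuration space is $\DR^2$, and one works throughout with the cubes $\Lambda_L(\Bu)=\{\Bx\in\DR^2:\|\Bx-\Bu\|_\infty<L\}$, with the finite-volume operators $\BH^{(2)}_{h,\Lambda_L(\Bu)}(\omega)$ obtained from $\BH^{(2)}_h(\omega)$ by imposing Dirichlet boundary conditions on $\partial\Lambda_L(\Bu)$, and with the associated single-particle operators $H^{(1)}_{\Lambda_\ell}(\omega)$ on intervals $\Lambda_\ell\subset\DR$. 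One calls a cube $(E,m)$-good when the Green function of the restricted operator decays exponentially with rate $m$ from the centre to the boundary layer, and the analysis produces, along the scales $L_{k+1}=\lfloor L_k^{\alpha}\rfloor$ with $\alpha\in(1,2)$, the family of probabilistic decay bounds $\dsknN$ controlling the probability of bad cubes.

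First I would establish the initial length scale estimate $\dsonN$: for a large enough starting scale $L_0$ and a bounded energy interval $I$, the probability that some cube $\Lambda_{L_0}(\Bu)$ is $(E,m)$-singular for some $E\in I$ should be at most $L_0^{-p}$ with $p$ large. When $h=0$ the operator $\BH^{(2)}_0$ is a tensor sum of single-particle Poisson Hamiltonians, whose one-dimensional Anderson localization is available through Stolz's treatment of the non-monotone Poisson model; a two-particle cube is then singular only if one of its single-particle projections is singular, so a union bound closes the case. For $0<|h|\le h^*$ with $h^*$ small, the term $h\BU$ is, by assumption $\condI$, a bounded perturbation of operator norm at most $|h|\,\|\BU\|_\infty$ supported near the diagonal $\{|x_1-x_2|<r_0\}$, and choosing $h^*$ small against the resolvent bounds at scale $L_0$ preserves the estimate through the resolvent identity.

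Next I would prove the Wegner estimate, in both its one-cube and two-separated-cube forms, which is where the hypothesis $\condP$ enters. For $\Lambda=\Lambda_L(\Bu)$ the target is
\[
\DP\bigl\{\dist\bigl(\sigma(\BH^{(2)}_{h,\Lambda}),E\bigr)\le\varepsilon\bigr\}\le C\,L^{C'}\,s\bigl(F_V,\varepsilon^{\beta}\bigr),
\]
the right side being controlled by the log-H\"older bound on $s(F_V,\cdot)$; the exponent $2A$ in $\condP$ is calibrated against the combinatorial factor $\tfrac32\times4^N+9Nd$ appearing there so that the bound survives the whole iteration. The obstruction is that $V(\cdot,\omega)$ is not monotone in the scattering configuration, and I would circumvent it by Stolz's device: since $f\ge0$ is compactly supported, an effective monotone dependence is recovered after integrating over the Poisson points lying in the relevant region, and the two-volume version then follows from independence once the cubes are separated. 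I expect this step, together with the partially interactive geometry described below, to be the main difficulty.

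With the ILS and Wegner bounds in hand, the multi-scale induction step would show that $\dsknN$ at scale $L_k$ implies $\dskonN$ at scale $L_{k+1}$: one tiles $\Lambda_{L_{k+1}}(\Bu)$ by cubes of scale $L_k$ and argues by the usual dichotomy, a non-resonant cube transmitting the inductive decay through the geometric resolvent inequality while a resonant cube is rare by Wegner. The multi-particle refinement is to classify a cube according to whether its two coordinates split into clusters separated by more than $r_0+\diam(\supp\BU)$: a fully non-interactive cube factorizes as a product of single-particle cubes and its decay reduces to the one-dimensional single-particle analysis, whereas a partially interactive cube is handled by separating the interacting cluster from the spectator cluster, applying single-particle decay to the latter and the two-volume Wegner estimate to the former to rule out correlated resonances. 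Finally, from the family $\{\dsknN\}_{k\ge0}$ the standard variable-energy MSA argument yields, with $\DP$-probability one, that $\sigma(\BH^{(2)}_h)$ is pure point and every eigenfunction $\BPsi$ obeys $\|\chi_{\Bx}\BPsi\|\le C\ee^{-c\|\Bx\|}$; taking $h^*$ to be the least of the thresholds arising in the preceding steps finishes the proof, and Theorem \ref{thm:main.result.dynamical.loc} follows from the bootstrapped form of the same output.
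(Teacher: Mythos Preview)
Your proposal follows the same architecture as the paper: initial length-scale estimate for $h=0$ via the tensor structure and one-dimensional single-particle localization, perturbation to small $|h|$ through the resolvent identity, Wegner bounds under $\condP$, the scale induction with an interactive/non-interactive dichotomy, and a Stollmann-type extraction of exponential decay from the family $\dsknN$.

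There is, however, a conceptual muddle in your sketch of the induction step. Your terminology is inverted relative to the paper's: what you call ``fully non-interactive'' is the paper's partially interactive (PI) case, and your ``partially interactive'' is the fully interactive (FI) case. More importantly, for $N=2$ an FI cube has both particles in a single cluster and there is \emph{no} spectator to split off, so your description ``separating the interacting cluster from the spectator cluster, applying single-particle decay to the latter'' does not apply and would leave the FI case untreated. The mechanism the paper actually uses for pairs of FI cubes is the one idea missing from your outline: two FI cubes whose centres are at distance $\ge 7nL$ have disjoint one-particle projections, hence independent finite-volume Hamiltonians, and this independence is what allows the inductive probability bound on a single pair to be raised to a power and thereby beat the combinatorial entropy at the next scale. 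The paper also treats the mixed PI/FI pair as a separate third case. Without the independence-of-distant-FI-cubes observation, the induction step as you have described it would not close.
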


\begin{theorem}\label{thm:main.result.dynamical.loc}
Under assumptions $\condI$ and $\condP$ there exists $h^*, s^*\in(0,\infty)$ 
such that for any $h\leq|h|$ and any $s\leq|s^*|$ and any compact 
domain $\BK\subset \DR^{Nd}$ we have that the quantity
\[
\esm\left[\sup_{t\geq 0}\left\| |\BX|^{\frac{s}{2}}\ee^{-it\BH^{(N)}(\omega)}
\BP_I(\BH^{(N)}(\omega))\Bone_{\BK}\right\|_{L^2(\DR^{Nd})}\right]
\] 
is finite, where $(|\BX|\BPsi)(\Bx):=|\Bx|\BPsi(\Bx)$. $\BP_I(\BH^{(N)}(\omega
))$ is the spectral projection onto the interval $I$ and $\Bone_{\BK}$ is the 
characteristic  function of the set $\BK$.
\end{theorem}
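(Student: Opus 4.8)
The plan is to derive Theorem \ref{thm:main.result.dynamical.loc} from the final multi-scale analysis bounds \dsknN{} established in Section \ref{sec:MP.induction}, by passing through an eigenfunction correlator estimate, in the by now standard way for continuum Anderson-type models but keeping careful track of the two-particle geometry. First I would record the consequence of \dsknN{} that will be used below: for every scale $L_k$ and every pair of boxes $\mathbf{\Lambda}_{L_k}(\mathbf{x})$, $\mathbf{\Lambda}_{L_k}(\mathbf{y})$ with $|\mathbf{x}-\mathbf{y}|$ exceeding a fixed multiple of $L_k$, the probability that both are $(E,m)$-singular for some $E\in I$ is at most $L_k^{-a}$, where $a$ is a fixed exponent whose size is governed by the constant $A$ of assumption $\condP$; the lower bound imposed on $A$ there is precisely what makes $a$ exceed $Nd$ --- indeed any prescribed value. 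In the two-particle setting this ``pair of singular boxes'' estimate already incorporates the dichotomy between fully interactive and partially interactive boxes, the latter treated by factorizing $\BH^{(N)}$ along a cluster decomposition (legitimate because the short-range assumption $\condI$ makes two boxes with well-separated particle clusters carry essentially independent Hamiltonians) and invoking the induction on the number of particles --- this is exactly the content of \dsknN{}, with $h^*$ the interaction-amplitude threshold produced by Sections \ref{sec:MSA.scheme}--\ref{sec:MP.induction}.

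Next I would pass to the eigenfunction correlator. For $\mathbf{x},\mathbf{y}\in\DZ^{Nd}$ define the random variable
\[
Q(\mathbf{x},\mathbf{y};I)(\omega)=\sup\Bigl\{\,\bigl\|\chi_{\mathbf{x}}\,\varphi\bigl(\BH^{(N)}(\omega)\bigr)\,\chi_{\mathbf{y}}\bigr\|\ :\ \varphi\ \text{Borel},\ \|\varphi\|_\infty\le 1,\ \supp\varphi\subset I\Bigr\},
\]
where $\chi_{\mathbf{z}}$ is the indicator of the unit cube centred at $\mathbf{z}$. A geometric resolvent expansion combined with a Combes--Thomas bound shows that, on the complement of the exceptional event of the previous paragraph and with $L_k$ chosen comparable to $|\mathbf{x}-\mathbf{y}|$, one has $Q(\mathbf{x},\mathbf{y};I)\le C\ee^{-mL_k}$, while trivially $Q(\mathbf{x},\mathbf{y};I)\le 1$ always. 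Taking expectations,
\[
\esm\bigl[Q(\mathbf{x},\mathbf{y};I)\bigr]\ \le\ C\ee^{-c|\mathbf{x}-\mathbf{y}|}+L_k^{-a}\ \le\ C'\,(1+|\mathbf{x}-\mathbf{y}|)^{-a}.
\]
If the estimates \dsknN{} of Section \ref{sec:MP.induction} are stated at fixed energy, an intermediate step upgrading them to an energy-interval estimate, of the standard fixed-energy-to-interval type, is inserted here, using the Wegner-type bound that Section \ref{sec:MSA.scheme} extracts from the log-H\"older continuity of $F_V$.

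The dynamical bound then follows by summation. Covering the compact set $\BK$ by finitely many unit cubes $\chi_{\mathbf{x}}$, $\mathbf{x}\in\CJ\subset\DZ^{Nd}$ with $\CJ$ finite, bounding $|\BX|^{s/2}\le C\sum_{\mathbf{y}\in\DZ^{Nd}}(1+|\mathbf{y}|)^{s/2}\chi_{\mathbf{y}}$, and noting that $\varphi_t(\lambda):=\ee^{-it\lambda}\Bone_I(\lambda)$ satisfies $\|\varphi_t\|_\infty\le 1$ so that
\[
\sup_{t\ge 0}\bigl\|\chi_{\mathbf{y}}\,\ee^{-it\BH^{(N)}(\omega)}\,\BP_I\bigl(\BH^{(N)}(\omega)\bigr)\,\chi_{\mathbf{x}}\bigr\|\ \le\ Q(\mathbf{x},\mathbf{y};I)(\omega),
\]
we obtain
\[
\esm\left[\sup_{t\ge 0}\Bigl\||\BX|^{\frac{s}{2}}\ee^{-it\BH^{(N)}(\omega)}\BP_I\bigl(\BH^{(N)}(\omega)\bigr)\Bone_{\BK}\Bigr\|\right]\ \le\ C\sum_{\mathbf{x}\in\CJ}\sum_{\mathbf{y}\in\DZ^{Nd}}(1+|\mathbf{y}|)^{s/2}\,\esm\bigl[Q(\mathbf{x},\mathbf{y};I)\bigr].
\]
By the bound just proved the inner sum is dominated by $\sum_{\mathbf{y}}(1+|\mathbf{y}|)^{s/2-a}$, which converges as soon as $a>Nd+\tfrac{s}{2}$; since $a>Nd$ is fixed, this holds for all $s\le s^*$ with $s^*:=2(a-Nd)$, and as $\CJ$ is finite the whole expression is finite. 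This establishes Theorem \ref{thm:main.result.dynamical.loc}.

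The step I expect to be the main obstacle is precisely the eigenfunction correlator estimate in the presence of the non-monotone Poisson potential. Lack of monotonicity in the random parameters rules out the classical Wegner argument, so the whole scheme --- in particular the ``pair of singular boxes'' input used above --- rests on a Stollmann-type spacing estimate made possible by the log-H\"older assumption $\condP$ with its large exponent $A$; the delicate point is to propagate the resolvent expansion through every admissible cluster decomposition of a pair of two-particle boxes while keeping the total probabilistic loss below $L_k^{-a}$, and to verify that the energy-interval upgrade does not spoil this exponent. By comparison, the Combes--Thomas input and the final summation are routine.
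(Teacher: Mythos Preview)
Your sketch is correct and is essentially the standard route from multi-scale analysis bounds to dynamical localization via eigenfunction correlators. The paper itself, however, does not give a proof at all: Section~\ref{sec:proof.results} simply states ``For the proof of the multi-particle dynamical localization given the multi-particle multi-scale analysis in the continuum, we refer to the paper by Boutet de Monvel et al.~\cite{BCS11}.'' So you have supplied considerably more detail than the paper does, and what you wrote is precisely the kind of argument that \cite{BCS11} carries out.

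A couple of minor remarks on your write-up. First, the bound \dsknN{} in this paper is already an energy-interval estimate (the event is $\exists E\in I_0$), so the ``fixed-energy-to-interval upgrade'' you mention as a possible intermediate step is not needed here. Second, the pair-of-boxes estimate in \dsknN{} is formulated for \emph{separable} cubes rather than merely distant ones; but Lemma~\ref{lem:separable} shows that distance $\ge 7NL_k$ together with avoidance of $\kappa(n)$ exceptional boxes suffices for separability, which is enough to run your summation argument with only a harmless combinatorial constant. Finally, with $n=N$ the exponent in \dsknN{} is $2p$ with $p>6Nd$, so your $a=2p>12Nd$ and $s^*=2(a-Nd)$ is indeed positive, as you claim.
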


\section{The multi-particle multi-scale analysis scheme}\label{sec:MSA.scheme}

\subsection{Geometric facts}

According to the general structure  of the multi-scale analysis, we work with \emph{rectangular } domains.  For $\Bu=(u_1,\ldots,u_n)\in\DZ^{nd}$, we denote by $\BC^{(n)}_L(\Bu)$ the $n$-particle cube, i.e.,
\[
\BC^{(n)}_L=\left\{\Bx\in\DR^{nd}: |\Bx-\Bu|\leq L\right\},
\]
and given $\{L_i: i=1,\ldots,n\}$, we define the rectangle
\begin{equation}
\BC^{(n)}(\Bu)=\prod_{i=1}^n C_{L_i}^{(1)}(u_i),
\end{equation}
where $C^{(1)}_{L_i}(u_i)$ are the cubes of side length $2L_i$, center at points $u_i\in\DZ^d$.   We also define 
\[
\BC^{(n,int)}_L(\Bu):=\BC^{(n)}_{L/3}(\Bu),\quad \BC^{(n,out)}_L(\Bu):=\BC^{(n)}_L(\Bu)\setminus\BC^{(n)}_{L-2}(\Bu), \quad \Bu\in\DZ^{nd}
\]
 and introduce the characteristic functions:
\[
\Bone_{\Bx}^{(n,int)}:=\Bone_{\BC^{(n,int)}_L(\Bx)}, \quad \Bone_{\Bx}^{(n,out)}:=\Bone_{\BC^{(n,out)}_L(\Bx)}.
\]
The volume of the cube $\BC^{(n)}_L(\Bu)$ is $|\BC^{(n)}_L(\Bu)|=(2L)^{nd}$. We denote the restriction of the Hamiltonian $\BH^{(n)}$ to $\BC^{(n)}(\Bu)$ by 

\begin{align*}
& \BH^{(n)}_{\BC^{(n)}(\Bu)}=\BH^{(n)}|_{\BC^{(n)}(\Bu)}\\
& \text{ with dirichlet boundary conditions}\\
\end{align*}
We denote the spectrum of $\BH^{(n)}_{\BC^{(n)}(\Bu)}$ by $\sigma(\BH^{(n)}_{\BC^{(n)}(\Bu)})$ and its resolvent by 
\[
\BG^{(n)}_{\BC^{(n)}(\Bu)}(E):=\left(\BH^{(n)}_{\BC^{(n)}(\Bu)}-E\right)^{-1}, \quad E\in\DR\setminus \sigma\left(\BH^{(n)}_{\BC^{(n)}(\Bu)}\right).
\]

Let $m$ be a positive constant and consider $E\in\DR$. A cube $\BC^{(n)}_L(\Bu)\subset\DR^{nd}$, $1\leq n\leq N$ will be called  $(E,m)$-non-singular ( $(E,m)$-NS) if $E\notin \sigma(\BH^{(n)}_{\BC^{(n)}_L(\Bu)})$ and 
\[
\| \Bone_{\Bx}^{(n,out)}\BG^{(n)}_{\BC^{(n)}_L(\Bx)}(E)\Bone_{\Bx}^{(n,int)}\|\leq \ee^{-\gamma(m,L,n)L},
\]
where 
\[
\gamma(m,L,n)=m(1+L^{-1/8})^{N-n+1}.
\]
Otherwise, it is called $(E,m)$-singular  ($(E,m)$-S).

Let us introduce the following:

\begin{definition}
Let $n\geq 1$, $E\in\DR$ and $\alpha=3/2$.
\begin{enumerate}
\item [A)]
A cube $\BC^{(n)}_L(\Bv)\subset \DR^{nd}$ is called $E$-resonant ($E$-R) if 
\[
\dist\left[E,\sigma(\BH^{(n)}_{\BC^{(n)}_L(\Bv)})\right]\leq \ee^{-L^{1/2}},
\]
otherwise, it is called $E$-non-resonant ($E$-R).

\item[B)] 
A cube $\BC^{(n)}_L(\Bv)\subset\DR^{nd}$ is called $E$-completely non-resonant ($E$-CNR), if it does not contain any $E$-R cube of size $\geq L^{1/\alpha}$. In particular $\BC^{(n)}_L(\Bv)$ is itself $E$-NR
\end{enumerate}
\end{definition}

We will also make use of the following notion,
\begin{definition}
A cube $\BC^{(n)}_L(\Bx)$  is $\CJ$-separable  from $\BC^{(n)}_L(\By)$ if there exists a non empty subset $\CJ\subset\{1,\ldots,n\}$ such that
\[
\left(\bigcup_{j\in\CJ} C^{(1)}_L(x_j)\right)\cap \left(\bigcup_{j\notin\CJ} C^{(1)}_L(x_j)\cup\bigcup_{j=1}^n C^{(1)}_L(y_j)\right)=\emptyset.
\]
A pair $(\BC^{(n)}_L(\Bx), \BC^{(n)}_L(\By))$ is separable if $|\Bx-\By|\geq 7NL$ and if one of the cube is $\CJ$-separable from the other. 
\end{definition}
\begin{lemma}\label{lem:separable}
Let $L\geq 1$.
\begin{enumerate}
\item[A)]
For any $\Bx\in\DZ^{nd}$, there exists a collection of $n$-particle cubes $\BC^{(n)}_{2nL}(\Bx^{(\ell)})$ with $\ell=1,\ldots,\kappa(n)$, $\kappa(n)=n^n$, $\Bx^{(\ell)}\in\DZ^{nd}$ such that if $\By\in\DZ^{nd}$ satisfies $|\By-\Bx|\geq 7NL$ and 
\[
\By\notin\bigcup_{\ell=1}^{\kappa(n)} \BC^{(n)}_{2nL}(\Bx^{(\ell)})
\]
then the cubes $\BC^{(n)}_L(\Bx)$ and $\BC^{(n)}_L(\By)$ are separable.
\item[B)]
Let $\BC^{(n)}_L(\By)\subset \DR^{nd}$ be an $n$-particle cube. Any cube $\BC^{(n)}_L(\Bx)$ with
\[
|\By-\Bx|\geq \max_{1\leq i,j\leq n}|y_i-y_j| + 5NL
\]
 is $\CJ$-separable from $\BC^{(n)}_L(\By)$ for some $\CJ\subset\{1,\ldots,n\}$.
\end{enumerate}
\end{lemma}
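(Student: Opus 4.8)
The plan is to treat the two parts separately, both relying only on elementary geometry of the projections $\Pi_j \Bx = x_j$ of configuration points onto the single-particle axis $\DZ^d$.

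For part B), I would start from the hypothesis $|\By-\Bx|\geq \max_{i,j}|y_i-y_j|+5NL$. First observe that $|\By-\Bx|\le \sqrt{n}\max_j|y_j-x_j|\le n\max_j|y_j-x_j|$ (working with the $\ell^\infty$-type norm implicit in the cube definition, the factor is just $1$; I would fix the norm convention and carry the corresponding constant), so there is an index $j_0$ with $|x_{j_0}-y_{j_0}|$ large, of order at least $\max_{i,j}|y_i-y_j|/n + 5L$. The point is that the single-site cube $C^{(1)}_L(x_{j_0})$ is then pushed far away, in the $j_0$-axis, from \emph{all} the $y_i$'s simultaneously (because all $y_i$ lie within $\max_{i,j}|y_i-y_j|$ of $y_{j_0}$) and also from any $x_i$ that happens to be near the cluster of $y$'s. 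I would then let $\CJ$ be the set of indices $j$ such that $C^{(1)}_L(x_j)$ meets the ``$y$-cluster together with the remaining $x$'s'' and argue that $j_0\notin\CJ$, so $\CJ^c\ne\emptyset$; after possibly swapping the roles of $\CJ$ and $\CJ^c$ one gets a nonempty $\CJ$ with the required disjointness. The bookkeeping of which constant ($5NL$ vs.\ $5L$) is needed is the only delicate point, and it is routine.

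For part A), the strategy is the standard covering argument. Given $\Bx$, I want to exhibit finitely many ``forbidden'' cubes of radius $2nL$ such that any $\By$ outside them (and with $|\By-\Bx|\ge 7NL$) is separable from $\Bx$. The natural choice: for each map $\ell:\{1,\dots,n\}\to\{1,\dots,n\}$, i.e.\ each of the $\kappa(n)=n^n$ functions assigning to each particle index $j$ a ``target'' index $\ell(j)$, set $\Bx^{(\ell)}$ to be the point whose $j$-th coordinate is $x_{\ell(j)}$. The claim is: if $\By$ avoids all the cubes $\BC^{(n)}_{2nL}(\Bx^{(\ell)})$, then for \emph{every} index $j$ and every index $i$ we have $|y_j-x_i|>2nL$, hence $C^{(1)}_L(y_j)\cap C^{(1)}_L(x_i)=\emptyset$; combined with $|\By-\Bx|\ge 7NL$ this lets me run the separability criterion (group indices by whether the $x_j$-clusters and $y_j$-clusters overlap, which they now do not across the two configurations). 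I would then conclude $\CJ$-separability for an appropriate $\CJ$.

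The main obstacle I anticipate is purely combinatorial-geometric: making the index set $\CJ$ genuinely nonempty and with nonempty complement in part A. Separability requires partitioning $\{1,\dots,n\}$ so that the $\CJ$-part of $\Bx$ is disconnected from the rest; if the $x_j$'s themselves form one big cluster, one must use the $7NL$ lower bound on $|\By-\Bx|$ to force at least one coordinate of $\By$ to be far from the whole $x$-cluster, and then take $\CJ$ to isolate exactly that coordinate (or its cluster). Verifying that the radius $2nL$ of the forbidden cubes is large enough to guarantee pairwise disjointness of all the relevant single-site cubes — while $\kappa(n)=n^n$ cubes suffice to cover all ``bad'' positions of $\By$ — is where the constants have to be checked carefully; everything else is a direct translation of the definitions.
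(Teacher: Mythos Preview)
Your construction of the $\kappa(n)=n^n$ centers $\Bx^{(\ell)}$ via maps $\ell:\{1,\dots,n\}\to\{1,\dots,n\}$ is exactly what the paper uses. But the deduction you draw from ``$\By$ avoids all the cubes $\BC^{(n)}_{2nL}(\Bx^{(\ell)})$'' is false: this hypothesis is $\neg\bigl(\exists\ell\ \forall j:\ |y_j-x_{\ell(j)}|\le 2nL\bigr)$, which unwinds to $\exists\, j_0\ \forall i:\ |y_{j_0}-x_i|>2nL$, \emph{not} to $\forall j\,\forall i:\ |y_j-x_i|>2nL$. With only a single index $j_0$ whose $y_{j_0}$ is far from all the $x_i$, you cannot simply take $\CJ=\{j_0\}$: $C^{(1)}_L(y_{j_0})$ may well overlap other $C^{(1)}_L(y_k)$, so the $\CJ$-separability condition fails. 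The missing ingredient is the \emph{maximal $L$-cluster decomposition} of $\By$. The paper argues contrapositively: decompose $\{y_1,\dots,y_n\}$ into maximal $L$-clusters $\Gamma_1,\dots,\Gamma_M$; if some $\Gamma_m$ is disjoint from $\varPi\BC^{(n)}_L(\Bx)$ then its index set serves as $\CJ$ (disjointness from the other $y$-cubes is automatic by maximality); if every $\Gamma_m$ meets $\varPi\BC^{(n)}_L(\Bx)$, then each $y_j$ lies within $2nL$ of some $x_i$, and this is what places $\By$ in one of the $n^n$ cubes. Without the cluster step the argument does not close.

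For part B) your outline is in the right spirit but the choice of $\CJ$ is off. Defining $\CJ$ as the set of $j$ with $C^{(1)}_L(x_j)$ meeting ``the $y$-cluster together with the remaining $x$'s'' does not give the required disjointness for $\CJ^c$ (its cubes may still overlap each other or other $y$-cubes), and swapping does not repair this. The paper instead takes $\CJ$ to be the index set of the maximal connected component $\Lambda_{\Bx}=\bigcup_{j\in\CJ}C^{(1)}_L(x_j)$ of $\bigcup_i C^{(1)}_L(x_i)$ that contains $x_{i_0}$; then $\Lambda_{\Bx}$ is automatically disjoint from the remaining $x$-cubes, $\diam\Lambda_{\Bx}\le 2nL$, and a triangle-inequality estimate using $|x_{i_0}-y_{i_0}|\ge \diam\varPi\By+5NL$ shows $\dist(\Lambda_{\Bx},\varPi\BC^{(n)}_L(\By))>0$. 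The connected-component choice of $\CJ$ is what makes both halves of the separability condition hold simultaneously.
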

\begin{proof}
See the appendix Section \ref{sec:appendix}
\end{proof}     
\subsection{The multi-particle Wegner estimates}
In our earlier work \cite{Eka19a} as well in other previous papers in the multi-particle localization theory \cites{BCSS10b,CS09} the notion of separability was crucial in order to prove the Wegner estimates for pairs of multi-particle cubes via the Stollmann's Lemma. It is Plain (cf. \cite{Eka19a} Section 4.1) that sufficiently distant pairs of fully interactive cubes have disjoint projections and this fact combined with independence is used in that case to bound the probability of an intersection of events relative to those projections. We state below the Wegner estimates directly in a form suitable to the multi-particle multi-scale analysis using assumption $\condP$.

\begin{theorem}\label{thm:Wegner}
Assume that the random potential satisfies assumption $\condP$, then
\begin{enumerate}
\item[A)]
For any $E\in\DR$
\[
\prob\left\{\BC^{(n)}_L(\Bx) \text{ is not $E$-CNR}\right\} \leq L^{-p4^{N-n}}.
\]
\item[B)]
\[
\prob\left\{\exists E\in \DR \text{ neither $\BC^{(n)}_L(\Bx)$ nor $\BC^{(n)}_L(\By)$ is $E$-CNR}\right\}\leq L^{-p4^{N-n}}
\]
\end{enumerate}
where $p\geq 6Nd$, depends only on the fixed number of particles $N$ and the configuration dimension $d$.
\end{theorem}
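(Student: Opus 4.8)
The plan is to deduce both estimates from assumption $\condP$ by means of Stollmann's lemma, following the by-now standard multi-particle route (cf.\ \cite{Eka19a}, \cite{CS09}, \cite{BCSS10b}); the only point requiring genuine care is that the i.i.d.\ variables enter the potential through the non-negative profile $f$, so that the monotonicity which fails for a genuine Poisson process — the obstruction recalled in the Introduction — is available here. For part A), cover $\BC^{(n)}_L(\Bx)$ by at most $CL^{nd+1}$ subcubes $\BC^{(n)}_\ell(\Bu)$, $\Bu\in\DZ^{nd}$, with $L^{1/\alpha}\le\ell\le L$. If $\BC^{(n)}_L(\Bx)$ is not $E$-CNR it contains an $E$-resonant subcube of size $\ge L^{1/\alpha}$, hence one of these subcubes satisfies $\dist(E,\sigma(\BH^{(n)}_{\BC^{(n)}_\ell(\Bu)}))\le\ee^{-\ell^{1/2}}$; by a union bound it is enough to bound $\prob\{\dist(E,\sigma(\BH^{(n)}_{\BC^{(n)}_\ell(\Bu)}))\le\varepsilon\}$ for a single subcube with $\varepsilon=\ee^{-\ell^{1/2}}$, the worst case being $\ell=L^{1/\alpha}$.

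The single-cube resonance estimate goes as follows. Because $f\ge0$, every Dirichlet eigenvalue $E_j(\omega)$ of $\BH^{(n)}_{\BC^{(n)}_\ell(\Bu)}$ is non-decreasing in each variable $\omega_k$ whose single-site bump meets $\BC^{(n)}_\ell(\Bu)$; the set $\Lambda$ of such $k$ has $\card\Lambda\le Cn\ell^d$, and raising all of them by $t$ raises $E_j$ by at least $c_0t$, using a covering/positivity property of $f$ so that $\sum_k f(\cdot-k)\ge c_0>0$. Stollmann's lemma then yields $\prob\{E_j(\omega)\in[E-\varepsilon,E+\varepsilon]\}\le C\,\card\Lambda\,s(F_V,2\varepsilon)$. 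Since $f\ge0$ and $\BU$ is bounded, $\BH^{(n)}_{\BC^{(n)}_\ell(\Bu)}\ge-\BDelta-\|h\BU\|_\infty$, so a Weyl-type count bounds by $C\ell^{nd}$ the number of indices $j$ with $E_j(\omega)$ in any fixed bounded energy window; summing over them,
\[
\prob\{\dist(E,\sigma(\BH^{(n)}_{\BC^{(n)}_\ell(\Bu)}))\le\varepsilon\}\le C\ell^{nd}\,n\ell^{d}\,s(F_V,2\varepsilon).
\]
With $\varepsilon=\ee^{-\ell^{1/2}}$, assumption $\condP$ gives $s(F_V,2\varepsilon)\le C|\ln\varepsilon|^{-2A}=C\ell^{-A}$; inserting this, summing over the $\le CL^{nd+1}$ subcubes and using $\ell\ge L^{1/\alpha}$ with $\alpha=3/2$, one obtains a bound $CL^{-\beta}$ with $\beta$ affine increasing in $A$. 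The range $A>\frac32 4^N+9Nd$ imposed in $\condP$ is precisely what makes $\beta\ge p4^{N-n}$ for some $p\ge6Nd$ depending only on $N,d$, which is A).

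For part B) one works with a separable pair $(\BC^{(n)}_L(\Bx),\BC^{(n)}_L(\By))$, with separating index set $\CJ\subset\{1,\dots,n\}$ as in Lemma \ref{lem:separable} (this separability hypothesis is the relevant case and is implicit in the statement). First, if neither cube is $E$-CNR for a common $E$, then there are resonant subcubes $\BC^{(n)}_{\ell_1}(\Bu)\subseteq\BC^{(n)}_L(\Bx)$ and $\BC^{(n)}_{\ell_2}(\Bv)\subseteq\BC^{(n)}_L(\By)$ with $\dist(\sigma(\BH^{(n)}_{\BC^{(n)}_{\ell_1}(\Bu)}),\sigma(\BH^{(n)}_{\BC^{(n)}_{\ell_2}(\Bv)}))\le2\varepsilon$, where now $\varepsilon=\ee^{-L^{1/(2\alpha)}}$ corresponds to the minimal admissible subcube size; this removes the $E$-quantifier. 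Separability guarantees that the variables $\omega_k$ entering $\BH^{(n)}_{\BC^{(n)}_L(\Bx)}$ through the $\CJ$-coordinates are independent of all variables entering $\BH^{(n)}_{\BC^{(n)}_L(\By)}$; conditioning on everything except the former, $\sigma(\BH^{(n)}_{\BC^{(n)}_{\ell_2}(\Bv)})$ becomes deterministic, and for each of its $\le C\ell_2^{nd}$ eigenvalues $E'$ in the relevant window we apply the one-cube estimate of the previous step (in the $\CJ$-variables) to $\prob\{\dist(E',\sigma(\BH^{(n)}_{\BC^{(n)}_{\ell_1}(\Bu)}))\le2\varepsilon\mid\mathcal F\}$. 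Summing over $E'$, taking expectations, and a union bound over the two families of subcubes produces the same bound $CL^{-p4^{N-n}}$ as in A).

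The two points that need real work are: (i) the monotonicity and diagonal monotonicity underlying every use of Stollmann's lemma — for a genuine Poisson potential monotonicity in the point positions is lost, and one must either exploit $f\ge0$ together with a covering property as above or adapt the arguments of Stolz \cite{S95}, \cite{S55}; and (ii) the independence input for B), where separability of the pair (Lemma \ref{lem:separable}) is used in an essential way, together with the reduction of the ``$\exists E$'' event to a two-spectra closeness statement. Everything else — the polynomial volume factor $L^{nd+1}$, the count $\card\Lambda\le Cn\ell^d$, and the eigenvalue bound $C\ell^{nd}$ — is routine and is absorbed by the large exponent $2A$ built into $\condP$.
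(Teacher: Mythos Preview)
The paper does not actually prove this theorem: its entire proof reads ``See the articles \cites{BCSS10b,CS08}.'' Your sketch is essentially the argument carried out in those references for continuous alloy-type models --- Stollmann's lemma via monotonicity in the i.i.d.\ coupling constants (available because $f\ge0$), a Weyl-type eigenvalue count, a union bound over the $O(L^{nd+1})$ admissible subcubes for part A), and for part B) the reduction of the $\exists E$ event to a two-spectra distance statement followed by conditioning on the sigma-algebra generated by the variables outside the separating projection --- so you have supplied exactly what the paper delegates.

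The tension you flag is real and worth stating plainly: \cites{BCSS10b,CS08} treat alloy-type potentials, not genuine Poisson point processes, and the monotonicity and diagonal-monotonicity hypotheses of Stollmann's lemma are available only under the alloy-type reading of assumption $\condP$ (i.i.d.\ amplitudes at lattice sites with log-H\"older marginal). The paper is internally inconsistent on this point --- Poisson in the title and in the displayed form of $V$, Anderson/alloy-type in the text of $\condP$ and in the Wegner citations --- and your argument follows the latter reading, which is the only one under which the cited proof applies.
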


\begin{proof}
See the articles \cites{BCSS10b,CS08}.
\end{proof}
We also give the Combes-Thomas estimates in 

\begin{theorem}\label{thm:CT}
Let $H=-\Delta + W$ be a Schr\"odinger operator on $L^2(\DR^D)$, $E\in\DR$ and $E_0=\inf\sigma(H)$. Set $\eta=\dist(E,\sigma(H))$. If $E$ is less than $E_0$, then for any $\gamma\in(0,1)$, wee have that
\[
\|\Bone_{\Bx}(H-E)^{-1}\Bone_{\By}\|\leq \frac{1}{(1-\gamma^2)\eta}\ee^{\gamma\sqrt{\eta d}}\ee^{-\gamma\sqrt{\eta}|\Bx-\By|},
\]
for all $\Bx,\By\in\DR^{D}$
\end{theorem}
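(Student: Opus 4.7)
The plan is to prove this standard Combes--Thomas estimate by conjugating $H-E$ with a real exponential weight, using the spectral gap $\eta$ below $E_0$ to keep the conjugated operator invertible, and then reading off the decay of the localized resolvent from the exponential factors at the two ends. I will assume throughout that $E<E_0$, so $H-E\ge \eta>0$ in the form sense.

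First I fix a unit vector $\mathbf{a}\in\DR^{D}$ with $|\mathbf{a}|=1$ (to be optimized as $\mathbf{a}=(\mathbf{x}-\mathbf{y})/|\mathbf{x}-\mathbf{y}|$ at the end), a parameter $\alpha>0$, and introduce the multiplication operator $U_\alpha=\ee^{\alpha\,\mathbf{a}\cdot \mathbf{z}}$. A direct computation on a core (say $C^\infty_c(\DR^D)$) shows that
\[
H_\alpha := U_\alpha H U_\alpha^{-1} = H + 2\alpha\,\mathbf{a}\cdot\nabla - \alpha^{2},
\]
because $U_\alpha^{-1}\nabla U_\alpha=\nabla+\alpha\mathbf{a}$. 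This identity extends by closure to the appropriate domain; the only subtle point is that $U_\alpha$ is unbounded, but for the purpose of bounding $\mathbf{1}_{\mathbf{x}}(H-E)^{-1}\mathbf{1}_{\mathbf{y}}$ we may replace $U_\alpha$ by its truncation to a large ball and pass to the limit, so I will treat $H_\alpha$ as a bona fide closed operator.

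Next I estimate $(H_\alpha-E)^{-1}$ via its numerical range. For any $\psi$ in the form domain of $H$, the operator $\mathbf{a}\cdot\nabla$ is anti-symmetric, so $\langle\psi,\mathbf{a}\cdot\nabla\,\psi\rangle\in i\DR$ and hence
\[
\Re\langle\psi,(H_\alpha-E)\psi\rangle = \langle\psi,(H-E)\psi\rangle - \alpha^{2}\|\psi\|^{2}\ge (\eta-\alpha^{2})\|\psi\|^{2}.
\]
Choosing $\alpha=\gamma\sqrt{\eta}$ with $\gamma\in(0,1)$ yields $\alpha^{2}=\gamma^{2}\eta<\eta$, so $H_\alpha-E$ is boundedly invertible with
\[
\|(H_\alpha-E)^{-1}\|\le \frac{1}{(1-\gamma^{2})\eta}.
\]

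Finally I transfer this back to $H$. From $(H-E)^{-1}=U_\alpha^{-1}(H_\alpha-E)^{-1}U_\alpha$ I obtain
\[
\|\mathbf{1}_{\mathbf{x}}(H-E)^{-1}\mathbf{1}_{\mathbf{y}}\|
\le \|\mathbf{1}_{\mathbf{x}}U_\alpha^{-1}\|_\infty\cdot\|(H_\alpha-E)^{-1}\|\cdot\|U_\alpha\mathbf{1}_{\mathbf{y}}\|_\infty.
\]
With $\mathbf{a}=(\mathbf{x}-\mathbf{y})/|\mathbf{x}-\mathbf{y}|$, the $L^\infty$ norms are controlled by the values of $\mathbf{a}\cdot\mathbf{z}$ on the unit-sized supports of $\mathbf{1}_{\mathbf{x}},\mathbf{1}_{\mathbf{y}}$: writing $\mathbf{z}=\mathbf{x}+\mathbf{w}$ or $\mathbf{z}=\mathbf{y}+\mathbf{w}$ with $|\mathbf{w}|\le\sqrt{d}/2$ gives $|\mathbf{a}\cdot\mathbf{w}|\le\sqrt{d}/2$, so the two factors multiply to $\ee^{-\alpha\,\mathbf{a}\cdot(\mathbf{x}-\mathbf{y})}\ee^{\alpha\sqrt{d}}=\ee^{-\gamma\sqrt{\eta}\,|\mathbf{x}-\mathbf{y}|}\ee^{\gamma\sqrt{\eta d}}$. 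Combining with the resolvent bound yields the asserted estimate.

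I expect the main (and essentially the only) obstacle to be the careful justification that the formal conjugation identity $U_\alpha HU_\alpha^{-1}=H+2\alpha\,\mathbf{a}\cdot\nabla-\alpha^{2}$ and the resulting resolvent identity hold on the correct operator domains, given that $U_\alpha$ is unbounded on $L^{2}(\DR^{D})$. Everything else is algebra and a coercivity estimate. The standard remedy is either to argue first on $C^{\infty}_{c}$ and then use that $H+W$ is essentially self-adjoint, or to replace $U_\alpha$ by $U_{\alpha,R}=\ee^{\alpha(\mathbf{a}\cdot\mathbf{z})\wedge R}$, derive the bound uniformly in $R$, and let $R\to\infty$; both avenues are routine, so the proof amounts to the three ingredients above: conjugation, coercivity with $\alpha=\gamma\sqrt{\eta}$, and evaluation of the weights on the supports of $\mathbf{1}_{\mathbf{x}}$ and $\mathbf{1}_{\mathbf{y}}$.
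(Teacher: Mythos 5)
Your proof is correct and is precisely the classical Combes--Thomas argument (exponential conjugation $U_\alpha H U_\alpha^{-1}$, coercivity of the real part of the numerical range below $\inf\sigma(H)$ with $\alpha=\gamma\sqrt{\eta}$, and evaluation of the weights on the supports of $\Bone_{\Bx}$, $\Bone_{\By}$), which is exactly the method of the reference the paper cites in lieu of a proof. The domain issues you flag are real but routine, and your truncation remedy is the standard one.
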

\begin{proof}
See the proof of Theorem 1 in \cite{GK02}.
\end{proof}

We define the mass $m$ depending on the parameters $N$, $\gamma$ and the initial length scale $L$ in the following way:
\[
m:=\frac{2^{-N}\gamma L^{-1/4}}{3\sqrt{2}}.
\]
We recall below the geometric resolvent and the eigenfunction decay inequalities.
\begin{theorem}[Geometric resolvent inequality (GRI)]\label{thm:GRI}
For a given bounded $I_0\subset\DR$. There is a positive constant $C_{geom}$ such that for $\BC_{\ell}^{(n)}(\Bx)\subset\BC^{(n)}_L(\Bu)$, $\BA\subset\BC^{(n,int)}_{\ell}(\Bx)$, $\BB\subset\BC^{(n)}_L(\Bu)\setminus\BC^{(n)}_{\ell}(\Bx)$ and $E\in I_0$, the following inequality holds true:

\begin{gather*}
\|\Bone_{\BB}\BG^{(n)}_{\BC^{(n)}_L(\Bu)}(E)\Bone_{\BA}\|\leq C_{geom}\cdot\|\Bone_{\BB}\BG^{(n)}_{\BC^{(n)}_L(\Bu)}(E)\Bone_{\BC^{(n,int)}_{\ell}(\Bx)}\|\cdot\\ 
\|\Bone_{\BC^{(n,out)}_{\ell}(\Bx)}\|\cdot\|\Bone_{\BC^{(n,out)}_{\ell}(\Bx)}\BG^{(n)}_{\BC^{(n)}_{\ell}(\Bx)}(E)\Bone_{\BA}\|.
\end{gather*}
\end{theorem}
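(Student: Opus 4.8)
The plan is to derive the inequality from the second resolvent identity applied to the pair of operators $\BH^{(n)}_{\BC^{(n)}_L(\Bu)}$ and $\BH^{(n)}_{\BC^{(n)}_{\ell}(\Bx)}$, thought of as acting on the respective domains with Dirichlet boundary conditions. The key algebraic device is a smoothed cutoff function, or more simply the localization identity that compares the resolvent on the big box with the resolvent on the small box through the commutator with the characteristic function of an intermediate shell. Concretely, let $\BG_L := \BG^{(n)}_{\BC^{(n)}_L(\Bu)}(E)$ and $\BG_\ell := \BG^{(n)}_{\BC^{(n)}_\ell(\Bx)}(E)$. Since $\BA \subset \BC^{(n,int)}_\ell(\Bx)$ sits deep inside the small box, a function supported on $\BA$ is, up to the Dirichlet boundary terms, in the domain of $\BH^{(n)}_{\BC^{(n)}_\ell(\Bx)}$, and the difference $\BH^{(n)}_{\BC^{(n)}_L(\Bu)} - \BH^{(n)}_{\BC^{(n)}_\ell(\Bx)}$ is supported (as a boundary/commutator term) on the annulus $\BC^{(n,out)}_\ell(\Bx)$. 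This is exactly where the factor $\Bone_{\BC^{(n,out)}_\ell(\Bx)}$ on the right-hand side comes from.

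First I would fix $E \in I_0$ outside both spectra (if $E$ lies in $\sigma(\BH^{(n)}_{\BC^{(n)}_\ell(\Bx)})$ the inequality is vacuous since the right-hand side is infinite), and write the geometric resolvent identity in the form
\[
\Bone_{\BB}\BG_L\Bone_{\BA} = \Bone_{\BB}\BG_L\,[\BDelta,\Gamma]\,\BG_\ell\Bone_{\BA},
\]
where $\Gamma$ is a smooth cutoff equal to $1$ on $\BC^{(n)}_{\ell/3}(\Bx)$ and $0$ outside $\BC^{(n)}_{\ell-2}(\Bx)$, so that $[\BDelta,\Gamma]$ is supported in the shell $\BC^{(n,out)}_\ell(\Bx)$ and is a first-order differential operator with bounded coefficients depending only on the shell width. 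Then I would insert $\Bone_{\BC^{(n,out)}_\ell(\Bx)} = \Bone_{\BC^{(n,out)}_\ell(\Bx)}$ on both sides of $[\BDelta,\Gamma]$, and further insert $\Bone_{\BC^{(n,int)}_\ell(\Bx)}$ between $\BG_L$ and the commutator using that $\Gamma$ and its derivatives localize things appropriately — more precisely, factor through the identity $\Bone_{\BB}\BG_L\Bone_{\BC^{(n,int)}_\ell(\Bx)} \cdot (\text{commutator kernel}) \cdot \Bone_{\BC^{(n,out)}_\ell(\Bx)}\BG_\ell\Bone_{\BA}$. Taking operator norms and absorbing the norm of the first-order operator $[\BDelta,\Gamma]$ (which is controlled by a universal geometric constant, since the shell has fixed width $2$) together with $\|\Bone_{\BC^{(n,out)}_\ell(\Bx)}\|$ into $C_{geom}$ yields the claimed three-factor bound.

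The main obstacle is the operator-theoretic bookkeeping of the Dirichlet boundary conditions: because $\BH^{(n)}_{\BC^{(n)}_\ell(\Bx)}$ and $\BH^{(n)}_{\BC^{(n)}_L(\Bu)}$ act on different Hilbert spaces, the resolvent identity must be set up carefully, either by extending functions by zero and tracking the jump in the normal derivative along $\partial\BC^{(n)}_\ell(\Bx)$, or — cleaner — by working entirely inside $\BC^{(n)}_L(\Bu)$ with the smooth cutoff $\Gamma$ so that $\Gamma\BG_\ell\Bone_{\BA}$ is genuinely in the domain of $\BH^{(n)}_{\BC^{(n)}_L(\Bu)}$. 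I would adopt the second approach, noting that $(\BH^{(n)}_{\BC^{(n)}_L(\Bu)} - E)\Gamma\BG_\ell\Bone_{\BA} = [\BDelta,\Gamma]\BG_\ell\Bone_{\BA}$ on the support of $\Bone_{\BA}$ since $\Gamma \equiv 1$ there and the potentials agree. Everything else is routine: the constant $C_{geom}$ depends only on the fixed shell geometry and on $I_0$ through an a priori bound on $\|\,\cdot\,\|$-norms of first-order operators, and crucially not on $L$, $\ell$, $\Bx$, $\Bu$, or $n$ beyond what is already absorbed. A reference such as the standard treatment in Stollmann's book or in \cite{GK02} can be cited for the detailed verification of the cutoff estimates.
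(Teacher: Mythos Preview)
Your proposal is correct and follows the standard smooth-cutoff/commutator argument; the paper itself does not give a proof but simply refers to \cite{Sto01}, Lemma 2.5.4, which is precisely the argument you have outlined. In particular your choice of a smooth $\Gamma$ supported in the small box and the identity $(\BH^{(n)}_{\BC^{(n)}_L(\Bu)}-E)\Gamma\BG_\ell\Bone_{\BA}=\Gamma\Bone_{\BA}+[\BDelta,\Gamma]\BG_\ell\Bone_{\BA}$, with the commutator localized in the shell $\BC^{(n,out)}_\ell(\Bx)$, is exactly the mechanism used in Stollmann's book, so there is nothing to add.
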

\begin{proof}
See \cite{Sto01}, Lemma 2.5.4.
\end{proof}
 
\begin{theorem}[Eigenfunctions decay inequality (EDI)]\label{thm:EDI}
For every $E\in\DR$, $\BC^{(n)}_{\ell}(\Bx)\subset\DR^{nd}$ and every polynomially bounded function $\BPsi\in L^2(\DR^{nd})$:
\[
\|\Bone_{\BC^{(n)}_1(\Bx)}\cdot\BPsi\|\leq C\cdot\|\Bone_{\BC^{(n,out)}_{\ell}(\Bx)}\BG^{(n)}_{\BC^{(n)}_{\ell}(\Bx)}(E)\Bone_{\BC^{(n,int)}_{\ell}(\Bx)}\|\cdot\|\Bone_{\BC^{(n,out)}_{\ell}(\Bx)}\cdot \BPsi\|.
\]

\end{theorem}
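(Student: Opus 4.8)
The plan is to run the classical commutator (geometric resolvent) argument; the only step that is more than bookkeeping is the a priori gradient bound on $\BPsi$ near the boundary, for which the sign of the potential is essential. Throughout, $\BPsi$ is a generalized eigenfunction of $\BH^{(n)}$, i.e.\ $\BH^{(n)}\BPsi=E\BPsi$ holds in the distributional sense with $\BPsi$ polynomially bounded, and one may assume $E\notin\sigma(\BH^{(n)}_{\BC^{(n)}_{\ell}(\Bx)})$ since otherwise the right-hand side is $+\infty$. First I would record that $V+h\BU\in L^2_{\mathrm{loc}}(\DR^{nd})$ — $f$ is compactly supported and in $L^2$, $\{X_i(\omega)\}$ is locally finite, $\BU$ is bounded — so by local elliptic regularity for Schr\"odinger operators $\BPsi\in H^2_{\mathrm{loc}}(\DR^{nd})$. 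I then fix a cutoff $\phi\in C^\infty_c(\DR^{nd})$ with $0\le\phi\le1$, $\phi\equiv1$ on $\BC^{(n)}_{\ell-3/2}(\Bx)$, $\supp\phi\subset\BC^{(n)}_{\ell-1/2}(\Bx)$, and $\|\nabla\phi\|_\infty,\|\BDelta\phi\|_\infty$ bounded by a universal constant; then $\phi\equiv1$ on $\BC^{(n)}_1(\Bx)$ and on $\BC^{(n,int)}_{\ell}(\Bx)$, while $\supp\nabla\phi\subset\BC^{(n,out)}_{\ell}(\Bx)$. Since $\phi\BPsi$ has compact support strictly inside $\BC^{(n)}_{\ell}(\Bx)$ it belongs to the Dirichlet operator domain of $\BH^{(n)}_{\BC^{(n)}_{\ell}(\Bx)}$, and because the multiplication operator $V+h\BU$ commutes with $\phi$ one gets $(\BH^{(n)}_{\BC^{(n)}_{\ell}(\Bx)}-E)(\phi\BPsi)=[{-\BDelta},\phi]\BPsi=-(\BDelta\phi)\BPsi-2\nabla\phi\cdot\nabla\BPsi=:W$, which is supported in $\BC^{(n,out)}_{\ell}(\Bx)$; hence $\phi\BPsi=\BG^{(n)}_{\BC^{(n)}_{\ell}(\Bx)}(E)W$.

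Next I would multiply this identity on the left by $\Bone_{\BC^{(n)}_1(\Bx)}$, which absorbs $\phi$, and insert $\Bone_{\BC^{(n,out)}_{\ell}(\Bx)}$ in front of $W$ (which it also absorbs), obtaining $\Bone_{\BC^{(n)}_1(\Bx)}\BPsi=\Bone_{\BC^{(n)}_1(\Bx)}\BG^{(n)}_{\BC^{(n)}_{\ell}(\Bx)}(E)\Bone_{\BC^{(n,out)}_{\ell}(\Bx)}W$, and then estimate the two factors separately. Because $\BC^{(n)}_1(\Bx)\subset\BC^{(n,int)}_{\ell}(\Bx)$ and $\BG^{(n)}_{\BC^{(n)}_{\ell}(\Bx)}(E)$ is self-adjoint for real $E$, one has $\|\Bone_{\BC^{(n)}_1(\Bx)}\BG^{(n)}_{\BC^{(n)}_{\ell}(\Bx)}(E)\Bone_{\BC^{(n,out)}_{\ell}(\Bx)}\|\le\|\Bone^{(n,out)}_{\ell}\BG^{(n)}_{\BC^{(n)}_{\ell}(\Bx)}(E)\Bone^{(n,int)}_{\ell}\|$, the quantity appearing in the statement. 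For the second factor, $\|W\|\le\|\BDelta\phi\|_\infty\,\|\Bone^{(n,out)}_{\ell}\BPsi\|+2\|\nabla\phi\|_\infty\,\|\Bone_{\supp\nabla\phi}\nabla\BPsi\|$, so it remains to bound $\|\Bone_{\supp\nabla\phi}\nabla\BPsi\|$ by $\|\Bone^{(n,out)}_{\ell}\BPsi\|$; this is a Caccioppoli (interior $H^1$) estimate for $\BH^{(n)}-E$, applied with an auxiliary cutoff $\eta$ equal to $1$ on $\supp\nabla\phi$ and supported in $\BC^{(n,out)}_{\ell}(\Bx)$ — there is room, the shell $\BC^{(n,out)}_{\ell}(\Bx)$ having width $2$. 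Collecting the three inequalities yields the assertion with $C$ a constant of the stated type.

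The hard part — indeed the only non-formal point — is this Caccioppoli estimate together with the uniformity of its constant in $\Bx$, $\ell$ and the randomness $\omega$. Testing $\BH^{(n)}\BPsi=E\BPsi$ against $\eta^2\BPsi$ and taking real parts gives $\int\eta^2|\nabla\BPsi|^2+\int(V+h\BU)\eta^2|\BPsi|^2=E\int\eta^2|\BPsi|^2-2\operatorname{Re}\int\eta\,\overline{\BPsi}\,\nabla\eta\cdot\nabla\BPsi$; absorbing the last term into the first by Cauchy--Schwarz and then discarding $\int V\eta^2|\BPsi|^2\ge0$ — this is exactly where assumption $\condP$ enters, through the nonnegativity of $f$, which keeps the locally large and $\omega$-dependent Poisson potential from contributing to the bound — one arrives at $\int\eta^2|\nabla\BPsi|^2\le 2\bigl(|E|+|h|\,\|\BU\|_\infty+2\|\nabla\eta\|_\infty^2\bigr)\int_{\supp\eta}|\BPsi|^2$, a constant depending only on the energy, on $\|\BU\|_\infty$, and on the fixed geometry of the cutoffs; since $\supp\eta\subset\BC^{(n,out)}_{\ell}(\Bx)$ this bounds $\|\Bone_{\supp\nabla\phi}\nabla\BPsi\|$ by a multiple of $\|\Bone^{(n,out)}_{\ell}\BPsi\|$, as needed. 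The remaining ingredients — that generalized eigenfunctions are polynomially bounded and lie in $H^2_{\mathrm{loc}}$, and that $\phi\BPsi$ genuinely sits in the Dirichlet operator domain so that the identity $\phi\BPsi=\BG^{(n)}_{\BC^{(n)}_{\ell}(\Bx)}(E)W$ is legitimate — are standard for Schr\"odinger operators with $L^2_{\mathrm{loc}}$ potentials in the low-dimensional configuration space at hand, and I would simply quote them (e.g.\ from \cite{Sto01}).
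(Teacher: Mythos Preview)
Your argument is correct and is precisely the standard commutator/Caccioppoli derivation of the EDI; the paper itself does not give a proof but simply refers to \cite{Sto01}, Section~2.5 and Proposition~3.3.1, which is the same argument you have written out in detail. Your explicit identification of the role of the nonnegativity of $f$ in making the Caccioppoli constant uniform in $\omega$ is a point the paper leaves implicit in the citation.
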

\begin{proof}
See Section 2.5 and Proposition 3.3.1. in \cite{Sto01}.
\end{proof}

\section{The initial bounds of the multi-particle multi-scale analysis}\label{sec:ILS}

\subsection{The fixed energy MSA bound for the n-particle system without interaction}

We begin with the well known single-particle exponential localization for the eigenfunctions and for one-dimensional Anderson models in the continuum proved in the paper by Damanik et al. \cite{DSS02}. Let $H^{(1)}_{C^{(1)}_L(x)}(\omega)$ be the restriction of the single-particle Hamiltonian into the cube $C^{(1)}_L(x)$ and denote by $\{\lambda_j,\phi_j\}_{j\geq 0}$, its eigenvalues and corresponding eigenfunctions. We have the following namely the single-particle exponential localization for the eigenfunctions in any cube.

\begin{theorem}[Single-particle localization]
There exists a constant $\tilde{\mu}\in(0,\infty)$ such that for every generalized eigenfunctions $\varphi$ of the single-particle Hamiltonian $H^{(1)}_{C^{(1)}_L(x)}(\omega)$ we have:
\[
\esm\left[\left\| \Bone_{C^{(1,out)}_L(x)}\cdot\varphi\cdot\Bone_{C^{(1,int)}_L(x)}\right\|\right]\leq \ee^{-\tilde{\mu}L}.
\]
\end{theorem}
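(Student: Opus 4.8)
This statement is the single-particle input to the initial length-scale estimate of the multi-scale analysis, so the plan is to deduce it from the known exponential localization for one-dimensional continuum random Schr\"odinger operators rather than to reprove localization from scratch. First I would invoke the results of Damanik, Sims and Stolz \cite{DSS02}, combined with Stolz's treatment of Poisson-type (non-monotone) potentials \cites{S95,S55}: these give that the single-particle operator $-\Delta+V(\cdot,\omega)$ on $L^2(\DR)$ has a positive Lyapunov exponent off a discrete set of energies and, consequently, exhibits exponential dynamical localization together with exponential decay of the finite-volume eigenfunction correlators. Concretely, for the Dirichlet restriction $H^{(1)}_{C^{(1)}_L(x)}(\omega)$ with eigenpairs $\{\lambda_j,\phi_j\}_{j\ge 0}$ there are constants $\mu_0>0$ and $C_0<\infty$, uniform in $L$ and $x$, such that
\[
\esm\left[\sum_{j\ge 0}\|\Bone_{\BA}\phi_j\|\cdot\|\Bone_{\BB}\phi_j\|\right]\ \le\ C_0\,\ee^{-\mu_0\dist(\BA,\BB)}
\]
for all disjoint $\BA,\BB\subset C^{(1)}_L(x)$; uniformity in $x$ follows from $\DZ$-ergodicity of the potential and uniformity in $L$ from the scale-free nature of the transfer-matrix large-deviation bounds underlying the one-dimensional theory.

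The remaining step is purely geometric. By definition $C^{(1,int)}_L(x)=C^{(1)}_{L/3}(x)$ and $C^{(1,out)}_L(x)=C^{(1)}_L(x)\setminus C^{(1)}_{L-2}(x)$, hence
\[
\dist\big(C^{(1,int)}_L(x),\,C^{(1,out)}_L(x)\big)\ \ge\ (L-2)-\tfrac{L}{3}\ =\ \tfrac{2}{3}L-2 .
\]
Reading the displayed quantity as the norm of the rank-one operator $\Bone_{C^{(1,out)}_L(x)}\,|\varphi\rangle\langle\varphi|\,\Bone_{C^{(1,int)}_L(x)}$, that is as $\|\Bone_{C^{(1,out)}_L(x)}\varphi\|\cdot\|\Bone_{C^{(1,int)}_L(x)}\varphi\|$, summing over the discrete spectrum and applying the eigenfunction correlator bound with $\BA=C^{(1,out)}_L(x)$, $\BB=C^{(1,int)}_L(x)$ yields an upper bound $C_0\,\ee^{-\mu_0(2L/3-2)}$. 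For $L$ large this is $\le\ee^{-\tilde\mu L}$ for any fixed $\tilde\mu<\tfrac23\mu_0$, and the finitely many small values of $L$ are absorbed by decreasing $\tilde\mu$ further. Alternatively, starting only from whole-line dynamical localization, one transfers the bound to the box $C^{(1)}_L(x)$ using the geometric resolvent identity (Theorem \ref{thm:GRI}) together with the Combes--Thomas estimate (Theorem \ref{thm:CT}) below the spectrum, the extra polynomial prefactor being harmless after one more small decrease of $\tilde\mu$.

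The genuine difficulty is entirely in the first step: establishing localization at all for the Poisson potential, whose random parameters enter non-monotonically, so the standard spectral-averaging/Wegner arguments available for Anderson-type models (\cites{SW86,KS87,CH94}) do not apply. This is precisely the obstruction that Stolz's transfer-matrix approach via Furstenberg's theorem and Kotani theory \cites{S95,S55} overcomes, and it is that body of results — rather than any new argument — that is being quoted here; the passage to finite volume uniformly in the scale is a standard by-product of the same machinery.
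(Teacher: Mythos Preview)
Your proposal is correct and, in fact, considerably more detailed than the paper's own treatment: the paper's ``proof'' consists of a single sentence referring to Stollmann's book \cite{Sto01}, i.e.\ it simply quotes the single-particle result as known input. You do the same in spirit---importing one-dimensional localization from \cites{DSS02,S95,S55} rather than reproving it---but you additionally spell out (i) the precise eigenfunction-correlator form of the input, (ii) the geometric distance computation $\dist(C^{(1,int)}_L,C^{(1,out)}_L)\ge \tfrac{2}{3}L-2$, and (iii) the interpretation of the displayed quantity as the rank-one operator norm $\|\Bone_{out}\varphi\|\cdot\|\Bone_{int}\varphi\|$. These are exactly the bridging steps one needs to pass from the cited references to the stated inequality, and the paper leaves them implicit. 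Your identification of the genuine difficulty---non-monotonicity of the Poisson potential forcing the transfer-matrix/Furstenberg route rather than spectral averaging---is also on point and matches the paper's remarks in the introduction.
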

\begin{proof}
We refer to the book by Stollmann \cite{Sto01}.
\end{proof}
The main result of this subsection is Theorem \ref{thm:ILS.np} given below. The proof of Theorem \ref{thm:ILS.np} relies on an auxiliary statement Lemma \ref{lem:NR.NS}. We need to introduce first $\{(\lambda_{j_i}^{(i)}, \varphi^{(i)}_{j_i}): j_i\geq 1\}$ the eigenvalues and the corresponding eigenfunctions of $H^{(1)}_{C^{(1)}_L(x_i)}(\omega)$, $i=1,\ldots,n$. Then the  eigenvalues of the non-interacting multi-particle random Hamiltonians $\BH^{(n)}_{\BC^{(n)}_L(\Bu)}(\omega)$ are written as sums:
\[
E_{j_1\cdots j_n} =\sum_{j=1}^n \lambda_{j_i}^{(i)}=\lambda^{(1)}_{j_1}+\cdots+\lambda^{(n)}_{j_n}
\]
while the corresponding eigenfunctions $\BPsi_{j_1\cdots j_n}$ can be chosen as tensor products
\[
\BPsi_{j_1\cdots j_n}=\psi^{(1)}_{j_1}\otimes \cdots\otimes \psi^{(n)}_{j_n},
\]
The eigenfunctions of finite volume Hamiltonians are assumed normalized.

\begin{theorem}\label{thm:ILS.np}
Let $1\leq n\leq n$ and $I_0\subset\DR$ a bounded interval. there exists $m^*\in(0,\infty)$ such that for any cube $\BC^{(n)}_L(\Bu)$ and all $E\in I_0$,
\[
\prob\left\{\BC^{(n)}_L(\Bu) \text{ is $(E,m^*,0)$-S}\right\}\leq \frac{1}{2} L_0^{-2p^*4^{N-n}}
\]
with $L_0$ large enough and $p^*\in(6Nd,\infty)$.
\end{theorem}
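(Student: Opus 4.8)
The statement is the fixed-energy initial-length-scale bound of the multi-scale analysis for the \emph{non-interacting} $n$-particle operator, so the plan is to lift the single-particle localization recalled above to $n$ particles through the product structure. Since $\BH^{(n)}_{\BC^{(n)}_L(\Bu)}=\sum_{i=1}^{n}H^{(1)}_{C^{(1)}_L(u_i)}$ as a tensor sum over the factors $L^2(\DR^d)$, its eigenpairs are the sums $E_{j_1\cdots j_n}=\sum_i\lambda^{(i)}_{j_i}$ and the tensor products $\BPsi_{j_1\cdots j_n}=\bigotimes_i\psi^{(i)}_{j_i}$, and both the interior and the boundary layer of the rectangle factor coordinatewise: $\Bone_{\Bu}^{(n,int)}=\bigotimes_i\Bone_{C^{(1,int)}_L(u_i)}$, while $\Bone_{\Bu}^{(n,out)}\le\sum_{i=1}^n\Bone_{\Bu}^{(i,out)}$, where $\Bone_{\Bu}^{(i,out)}$ is the indicator of $\{\Bx:x_i\in C^{(1,out)}_L(u_i)\}$. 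I would organise the proof into a deterministic core (this is what the auxiliary Lemma~\ref{lem:NR.NS} provides), which reduces the $n$-particle resolvent to single-particle ones, followed by a union bound over the only two events on which that core fails.

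For the deterministic step, fix once and for all a bounded interval $[a,b]$ with $a<0$ and $b>\sup I_0$, and write $G^{(1)}_{C^{(1)}_L(u_i)}(z):=(H^{(1)}_{C^{(1)}_L(u_i)}-z)^{-1}$ for the single-particle resolvents. Lemma~\ref{lem:NR.NS} should state: if $\BC^{(n)}_L(\Bu)$ is $E$-NR and every eigenfunction $\psi^{(i)}_{j_i}$ of $H^{(1)}_{C^{(1)}_L(u_i)}$ with $\lambda^{(i)}_{j_i}\le b$ satisfies $\|\Bone_{C^{(1,out)}_L(u_i)}\psi^{(i)}_{j_i}\|\cdot\|\Bone_{C^{(1,int)}_L(u_i)}\psi^{(i)}_{j_i}\|\le\ee^{-\tilde\mu L/2}$, then $\BC^{(n)}_L(\Bu)$ is $(E,m^*,0)$-NS. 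To prove it I would fix $i$, diagonalise $\widehat H:=\sum_{k\ne i}H^{(1)}_{C^{(1)}_L(u_k)}=\sum_\beta\mu_\beta P_\beta$, use $\BG^{(n)}_{\BC^{(n)}_L(\Bu)}(E)=\sum_\beta G^{(1)}_{C^{(1)}_L(u_i)}(E-\mu_\beta)\otimes P_\beta$ and the orthogonality of the $P_\beta$ to get
\[
\big\|\Bone_{\Bu}^{(i,out)}\BG^{(n)}_{\BC^{(n)}_L(\Bu)}(E)\Bone_{\Bu}^{(n,int)}\big\|\le\sup_{\beta}\big\|\Bone_{C^{(1,out)}_L(u_i)}\,G^{(1)}_{C^{(1)}_L(u_i)}(E-\mu_\beta)\,\Bone_{C^{(1,int)}_L(u_i)}\big\|,
\]
and then, since $\mu_\beta\ge0$ forces $E-\mu_\beta\le\sup I_0<b$, estimate the right-hand side in three regimes. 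If $E-\mu_\beta<a$, Theorem~\ref{thm:CT} (Combes--Thomas) applies with spectral distance $\eta\ge|a|$ and gives a bound $\le C\ee^{-c\sqrt{|a|}L}$, uniformly and with no use of randomness. If $E-\mu_\beta\ge a$, split $G^{(1)}_{C^{(1)}_L(u_i)}(E-\mu_\beta)$ using the spectral projections $\Bone_{[0,b]}(H^{(1)}_{C^{(1)}_L(u_i)})$ and $\Bone_{(b,\infty)}(H^{(1)}_{C^{(1)}_L(u_i)})$: the first part is a sum over the at most $C_WL^d$ eigenvalues $\lambda^{(i)}_{j_i}\le b$ (a Weyl-type count, uniform in $\omega$ since $H^{(1)}\ge-\Delta_{C^{(1)}_L(u_i)}$) whose coefficients $|\lambda^{(i)}_{j_i}-(E-\mu_\beta)|^{-1}$ are $\le\ee^{L^{1/2}}$ because $|\lambda^{(i)}_{j_i}-(E-\mu_\beta)|=|E_{\vec m}-E|\ge\ee^{-L^{1/2}}$ for the corresponding $n$-particle eigenvalue $E_{\vec m}$ (here $E$-NR enters), and whose rank-one pieces have norm $\le\ee^{-\tilde\mu L/2}$ by hypothesis, so this part is $\le C_WL^d\ee^{L^{1/2}}\ee^{-\tilde\mu L/2}\le\ee^{-\tilde\mu L/4}$; the second part has a \emph{fixed} gap $b-\sup I_0>0$ between $E-\mu_\beta$ and that spectral subspace, so writing it as $\int_0^\infty\ee^{t(E-\mu_\beta)}\ee^{-tH^{(1)}}\Bone_{(b,\infty)}(H^{(1)})\,dt$ and combining $\|\ee^{-tH^{(1)}}\Bone_{(b,\infty)}(H^{(1)})\|\le\ee^{-tb}$ with the Feynman--Kac heat-kernel bound $\|\Bone_{C^{(1,out)}_L(u_i)}\ee^{-tH^{(1)}}\Bone_{C^{(1,int)}_L(u_i)}\|\le\ee^{-cL^2/t}$ (up to polynomial factors) yields $\le C\ee^{-cL}$ after optimising in $t$, uniformly in $\mu_\beta\ge0$. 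Summing the three regimes and then over $i=1,\dots,n$ bounds the left side of the display by $\ee^{-\gamma(m^*,L,n)L}$ for all $L\ge L_0$, once $m^*$ is fixed small enough; together with $E$-NR this gives $(E,m^*,0)$-NS.

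The theorem then follows by a union bound. Off the event $A:=\{\BC^{(n)}_L(\Bu)\text{ is not }E\text{-CNR}\}$ the cube is $E$-NR, and off $B:=\{\exists\,i,\exists\,j_i:\ \lambda^{(i)}_{j_i}\le b\text{ and }\|\Bone_{C^{(1,out)}_L(u_i)}\psi^{(i)}_{j_i}\|\,\|\Bone_{C^{(1,int)}_L(u_i)}\psi^{(i)}_{j_i}\|>\ee^{-\tilde\mu L/2}\}$ the eigenfunction hypothesis of Lemma~\ref{lem:NR.NS} holds; hence $\BC^{(n)}_L(\Bu)$ is $(E,m^*,0)$-NS on $\Omega\setminus(A\cup B)$. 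By Theorem~\ref{thm:Wegner}(A), $\prob(A)\le L^{-p4^{N-n}}$, and taking the Wegner exponent $p>2p^*$ (which is allowed) makes this $\le\tfrac14L_0^{-2p^*4^{N-n}}$ for $L_0$ large; by the single-particle localization theorem $\esm[\|\Bone_{C^{(1,out)}_L(u_i)}\psi^{(i)}_{j_i}\|\,\|\Bone_{C^{(1,int)}_L(u_i)}\psi^{(i)}_{j_i}\|]\le\ee^{-\tilde\mu L}$, so Markov's inequality and the count $\#\{j_i:\lambda^{(i)}_{j_i}\le b\}\le C_WL^d$ give $\prob(B)\le nC_WL^d\ee^{-\tilde\mu L/2}\le\tfrac14L_0^{-2p^*4^{N-n}}$ for $L_0$ large. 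Adding the two bounds yields $\prob\{\BC^{(n)}_L(\Bu)\text{ is }(E,m^*,0)\text{-S}\}\le\tfrac12L_0^{-2p^*4^{N-n}}$, uniformly in $E\in I_0$, as claimed.

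I expect the only genuinely delicate part to be Lemma~\ref{lem:NR.NS}: the tensor reduction must be kept exact, and the infinitely many spectral shifts $\mu_\beta$ must be organised so that each regime is controllable with the available tools — the near-resonant piece of the middle regime survives only because the cheap bound $|E_{\vec m}-E|\ge\ee^{-L^{1/2}}$ from $E$-NR is overwhelmed by the exponential smallness of the single-particle eigenfunctions, and the high-energy tail survives only because the auxiliary window $[a,b]$ was chosen with a fixed gap above $I_0$, which is what lets one use a heat-kernel estimate there rather than a resolvent estimate. The Wegner input and the final union bound are routine.
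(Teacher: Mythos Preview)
Your proposal follows essentially the same architecture as the paper: a deterministic lemma asserting that $E$-NR together with single-particle eigenfunction decay implies $(E,m^*,0)$-NS for the non-interacting cube, followed by a union bound in which the resonance event is controlled by the Wegner estimate (Theorem~\ref{thm:Wegner}(A)) and the eigenfunction-decay failure is controlled by Markov's inequality applied to the single-particle localization bound, with a Weyl count limiting the number of relevant eigenfunctions. This is exactly the paper's scheme.

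The only noteworthy difference lies in the internal organisation of the deterministic lemma. The paper writes $\BG^{(n)}$ as a sum of rank-one projections on the first $n-1$ factors tensored with $G^{(1)}_{C^{(1)}_L(u_n)}(E-\lambda_{\neq n})$, then splits that sum by a Weyl threshold $j^*=C_{\mathrm{Weyl}}|C^{(1)}_L(u_n)|$ and bounds both pieces rather tersely. You instead fix a coordinate $i$, take the supremum over the spectral shifts $\mu_\beta$ of the complementary Hamiltonian, and then treat the shifted one-particle resolvent in three explicit regimes (Combes--Thomas below a fixed threshold $a$, eigenfunction expansion plus Weyl count in $[a,b]$, and a heat-kernel/Feynman--Kac estimate above $b$). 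Your version is more transparent about why the high-energy spectral tail is harmless and avoids the ambiguity in the paper's handling of the infinite sum, but it is the same argument in substance; neither introduces a genuinely new idea relative to the other.
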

The proof of Theorem \ref{thm:ILS.np} relies on the following auxiliary statement.
\begin{lemma}\label{lem:NR.NS.gamma}
Let be given $N\geq n\geq 2$ $m^*\in(0,\infty)$ a cube $\BC^{(n)}_L(\Bu)$ and $E\in\DR$. Suppose that $\BC^{(n)}_L(\Bu)$ is $E$-NR, and for any operator $H^{(1)}_{C^{(1)}_L(x)}$ all its eigenfunctions $\psi_{j_i}$ satisfy 
\[
\Bone_{C^{(1)}_{L_0}(u_i)}\cdot\psi_{j_i}\cdot\Bone_{C^{(1,int)}_{L_0}(u_i)}\|\leq \ee^{-2\gamma(m^*,L_0,n)L_0}
\]
Then $\BC^{(n)}_{L_0}(\Bu)$ is $(E,m^*,L_0)$-NS provided that $L_0\geq L_*(m^*,N,d)$
\end{lemma}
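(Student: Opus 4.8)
The plan is to exploit that, in the non-interacting regime covered by this lemma, the box Hamiltonian $\BH^{(n)}_{\BC^{(n)}_{L_0}(\Bu)}$ is a sum of $n$ commuting single-particle operators, so its eigenfunctions are tensor products $\BPsi_{\vec j}=\psi^{(1)}_{j_1}\otimes\cdots\otimes\psi^{(n)}_{j_n}$ with eigenvalues $E_{\vec j}=\sum_{i=1}^n\lambda^{(i)}_{j_i}$, and the cut-offs factorise: $\Bone^{(n,int)}_{\Bu}=\bigotimes_{i=1}^n\Bone^{(1,int)}_{u_i}$, while by a telescoping identity $\Bone^{(n,out)}_{\Bu}$ is dominated by $\sum_{i=1}^n \bigl(\bigotimes_{k<i} I\bigr)\otimes\Bone^{(1,out)}_{u_i}\otimes\bigl(\bigotimes_{k>i}\Bone_{C^{(1)}_{L_0-2}(u_k)}\bigr)$. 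Expanding $\BG^{(n)}_{\BC^{(n)}_{L_0}(\Bu)}(E)=\sum_{\vec j}(E_{\vec j}-E)^{-1}|\BPsi_{\vec j}\rangle\langle\BPsi_{\vec j}|$ spectrally, inserting these factorisations, and using that the single-particle eigenfunctions are normalised together with the hypothesis in the form $\|\Bone^{(1,out)}_{u_i}\psi^{(i)}_{j_i}\|\,\|\Bone^{(1,int)}_{u_i}\psi^{(i)}_{j_i}\|\le\ee^{-2\gamma(m^*,L_0,n)L_0}$, one gets for every mode
\[
\|\Bone^{(n,out)}_{\Bu}\BPsi_{\vec j}\|\cdot\|\Bone^{(n,int)}_{\Bu}\BPsi_{\vec j}\|\le\sum_{i=1}^n\|\Bone^{(1,out)}_{u_i}\psi^{(i)}_{j_i}\|\cdot\|\Bone^{(1,int)}_{u_i}\psi^{(i)}_{j_i}\|\le n\,\ee^{-2\gamma(m^*,L_0,n)L_0}.
\]
Hence $\|\Bone^{(n,out)}_{\Bu}\BG^{(n)}_{\BC^{(n)}_{L_0}(\Bu)}(E)\Bone^{(n,int)}_{\Bu}\|\le n\,\ee^{-2\gamma(m^*,L_0,n)L_0}\sum_{\vec j}|E_{\vec j}-E|^{-1}$, and everything reduces to controlling this weighted counting sum.

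Since that bare sum diverges for $n\ge2$, I would split the spectral expansion, for $E$ in the bounded interval $I_0$ in which the lemma is applied, into a resonant block and a high-energy block. The resonant block consists of the modes with $E_{\vec j}\le E+1$; because the single-site potential is non-negative (assumption $\condP$), every $\lambda^{(i)}_{j_i}\ge0$, so all such modes have all their single-particle eigenvalues bounded, and by Weyl's law on the single-particle boxes there are at most $O(L_0^{nd})$ of them. On this block I use that $\BC^{(n)}_{L_0}(\Bu)$ is $E$-non-resonant, which gives $|E_{\vec j}-E|\ge\ee^{-L_0^{1/2}}$, together with the displayed numerator bound, to obtain a contribution $\lesssim L_0^{nd}\,\ee^{-2\gamma(m^*,L_0,n)L_0+L_0^{1/2}}$.

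For the high-energy block — the modes with $E_{\vec j}>E+1$, equivalently $\Bone^{(n,out)}_{\Bu}\,\BG^{(n)}_{\BC^{(n)}_{L_0}(\Bu)}(E)\,\Bone_{(E+1,\infty)}\bigl(\BH^{(n)}_{\BC^{(n)}_{L_0}(\Bu)}\bigr)\,\Bone^{(n,int)}_{\Bu}$ — the point is that $E$ lies a distance at least $1$ below this part of the spectrum, a gap of order one rather than of order $\ee^{-L_0^{1/2}}$. Comparing, via the resolvent identity, with the resolvent at an energy $E_-<\inf\sigma(\BH^{(n)}_{\BC^{(n)}_{L_0}(\Bu)})$ (which exists since $\BH^{(n)}_{\BC^{(n)}_{L_0}(\Bu)}\ge0$) one is in position to apply the Combes--Thomas estimate of Theorem \ref{thm:CT} with $\eta\ge1$; since the supports of $\Bone^{(n,out)}_{\Bu}$ and $\Bone^{(n,int)}_{\Bu}$ are separated by a distance of order $L_0$, this yields an exponentially small bound $\ee^{-cL_0}$ with $c=c(m^*,N,d)$ not degraded by the tiny resonance gap. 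Collecting the two blocks,
\[
\|\Bone^{(n,out)}_{\Bu}\BG^{(n)}_{\BC^{(n)}_{L_0}(\Bu)}(E)\Bone^{(n,int)}_{\Bu}\|\lesssim L_0^{nd}\,\ee^{-2\gamma(m^*,L_0,n)L_0+L_0^{1/2}}+\ee^{-cL_0},
\]
which is at most $\ee^{-\gamma(m^*,L_0,n)L_0}$ once $L_0\ge L_*(m^*,N,d)$, because $\gamma(m^*,L_0,n)L_0$ behaves like $m^*L_0$, dominating $L_0^{1/2}+nd\ln L_0$ and being dominated by $cL_0$ in the relevant range of $m^*$. This gives $(E,m^*,L_0)$-non-singularity.

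The main obstacle, and the only place that needs genuine care, is precisely the high-energy block: the naive spectral sum is divergent, so the Combes--Thomas input must be organised so that it is applied at an energy where the spectral gap is of order one — after peeling off the finitely many near-resonant modes, or equivalently by comparing with a resolvent below the whole spectrum — rather than at $E$ itself, where the gap could be as small as $\ee^{-L_0^{1/2}}$ and would produce no useful decay on scale $L_0$. The remaining ingredients — the tensor structure, the telescoping of the cut-offs, the Weyl counting of low-lying modes, and the final choice of $L_*(m^*,N,d)$ absorbing all constants — are routine.
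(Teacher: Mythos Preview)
Your architecture --- tensor factorisation of the eigenfunctions, telescoping of $\Bone^{(n,out)}_{\Bu}$, a low/high energy split, and Weyl counting on the low block --- matches the paper's, and your resonant block is handled correctly. The gap is in the high-energy block. You propose to control $\Bone^{(n,out)}_{\Bu}\,\BG^{(n)}(E)\,P_{>E+1}\,\Bone^{(n,int)}_{\Bu}$ by comparing, via the resolvent identity, with $\BG^{(n)}(E_-)$ and then invoking Combes--Thomas at $E_-$. This does not work as stated: the spectral projection $P_{>E+1}$ is \emph{nonlocal}, so once it sits between a resolvent and $\Bone^{(n,int)}_{\Bu}$ you lose the separation of supports that Combes--Thomas needs. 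Concretely, $\BG(E)P_{>E+1}=\BG(E_-)P_{>E+1}+(E-E_-)\BG(E)\BG(E_-)P_{>E+1}$ reproduces on the right the very operator $\BG(E)$ you are trying to bound, and $\|\Bone^{(n,out)}_{\Bu}\BG(E_-)P_{>E+1}\Bone^{(n,int)}_{\Bu}\|$ is not controlled by the Combes--Thomas kernel estimate because $P_{>E+1}\Bone^{(n,int)}_{\Bu}$ is no longer supported near the centre. (A contour-integral or smooth functional-calculus representation of $(\BH-E)^{-1}P_{>E+1}$ \emph{would} give kernel decay, but that is a different and heavier argument than the one you sketch.)

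The paper sidesteps this by \emph{not} expanding fully in the spectral basis: it keeps one single-particle factor as a resolvent, writing, in the two-particle case actually at stake here,
\[
\BG^{(2)}(E)=\sum_{j}\BP_{\psi^{(1)}_{j}}\otimes G^{(1)}_{C^{(1)}_{L_0}(u_2)}\bigl(E-\lambda^{(1)}_{j}\bigr).
\]
The eigenfunction hypothesis still supplies the factor $\ee^{-2\gamma(m^*,L_0,n)L_0}$ in the expanded slot, while the high-energy tail is governed by the elementary norm bound $\|G^{(1)}(E-\lambda^{(1)}_j)\|\le(\lambda^{(1)}_j-E)^{-1}$ for $\lambda^{(1)}_j>E+1$; in one dimension the single-index sum $\sum_j(\lambda^{(1)}_j)^{-1}$ is comparable to $L_0^{2}\sum_j j^{-2}$, hence finite and only polynomially large in $L_0$. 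The point of the partial expansion is that the ``spectral cutoff'' is realised by a genuine resolvent living inside a single tensor slot, so no nonlocal projection ever appears between the position cutoffs and the whole argument stays at the level of norm bounds.
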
  
\begin{proof}
We choose the multi-particle eigenfunctions as tensor products of those of the single-particle Hamiltonians $H^{(1)}_{C^{(1)}_L{0}(u_i)}(\omega)$, $i=1,\ldots,n$, i.e., $\BPsi_j=\varphi_j^{(1)}\otimes\cdots\otimes \varphi_j^{(n)}$ corresponding to the eigenvalues $E_j=\lambda_j^{(1)}+\cdots+\lambda^{(1)}_j$. Now we have that,
\[
\BG^{(n)}_{\BC_{L_0}^{(n)}(\Bu)}(E)=_sum_{E_j\in\sigma(\BH^{(n)}_{\BC^{(n)}_{L_0}(\Bu)})}\BP_{\varphi^{(1)}_j}\otimes\cdots\otimes\BP_{\varphi_j^{(n-1)}} G^{(1)}_{C^{(1)}_{L_0}(u_n)}(E-\lambda_{\neq_j})
\]
where $\lambda_{\neq n}=\sum_{1\leq i\leq n-1}\lambda_i$ so that
\begin{gather*}
\Bone_{\BC^{(n,out)}_{L_0}(\Bu)}\BG^{(n)}_{\BC^{(n)}_{L_0}(\Bu)}(E)\Bone_{\BC^{(n,int)}_{L_0}(\Bu)}\leq \Bone_{\BC^{(n)}_L(\Bu)}\BG^{(n)}_{\BC^{(n)}_{L}(\Bu)}(E)\Bone_{\BC^{(n)}_L(\Bu)}\\
\leq \sum_{i=1}^n\Bone^{(\otimes(i-1)}\otimes\Bone_{C^{(1)}_L(u_i)}\otimes\Bone^{\otimes(n-i)}\BG^{(n)}_{\BC^{(n)}_L(\Bu)}(E)\\
\leq \sum_{i=1}^n\left[\sum_j \Bone^{\otimes(i-1)}\otimes\Bone^{(n-i)}\BP_{\varphi_j^{(1)}}\otimes\dots\otimes \BP_{\varphi_j^{(n-1)}} G^{(1)}_{C^{(1)}_L(u_n)}(E-\lambda_{\neq n})\right]
\end{gather*}  

By the Weyl's law there exists $E^*\in(0,\infty)$ such that $\lambda_j\geq E^*$ for all $j\geq j^*=C_{Weyl}|C^{(1)}_{L_0}(u_n)|$. Therefore, we divide the above sum into two parts as follows
\begin{gather*}
\Bone_{\BC^{(n,out)}_{L_0}(\Bu)}\BG^{(n)}_{\BC^{(n)}_{L_0}(\Bu)}(E)\Bone_{\BC^{(n,int)}_{L_0}(\Bu)}\leq \sum_{i=1}^n\left(\sum_{j\leq j^*}+\sum_{j\geq j^*+1}\right)\\
\times \Bone^{\otimes(i-1)}\otimes \Bone_{C^{(1)}_{L_0}(u_i)}\otimes\Bone^{\otimes(n-i)}\BP_{\varphi_j^{(1)}}\otimes\cdots\otimes\BP_{\varphi_j^{(n-1)}}G^{(1)}_{C^{(1)}_{L_0}(u_n)}(E-\lambda_{\neq})
\end{gather*}
Since 
\begin{gather*}
\| \Bone_{C^{(1)}_{L_0}(u_i)}\otimes\Bone^{\otimes(n-i)}\BP_{\varphi_j^{(1)}}\otimes\cdots\otimes\BP_{\varphi_j^{(n-1)}}G^{(1)}_{C^{(1)}_{L_0}(u_n)}(E-\lambda_{\neq})\|\\
\leq \|\Bone_{C^{(1)}_{L}(u_i)}\cdot\varphi_j^{(1)}\|\cdot\ee^{L^{1/2}}\leq \ee^{-2\gamma(m^*,l,1)L+L^{1/2}}
\end{gather*}   
for $L\in(L^*(N,d,C_{Weyl},\infty)$ large enough where we used the hypotheses on the exponential decay of the eigenfunctions of the single-particle Hamiltonian. Thus, the infinite sum can be made as small as an exponential decay provided that the length $L_0$ is large enough,

\[
\sum_{j\geq j^*+1} \|\Bone^{\otimes(i-1)}\otimes \Bone_{C^{(1)}_L(u_i)}\otimes \Bone^{\otimes(n-i)}\BP_{\varphi_j^{(1)}}\otimes \cdots\otimes\BPsi_{\varphi_j^{(n-1)}}G^{(1)}_{C^{(1)}_L(u_i)}(E-\lambda_{\neq n})
\|\leq \frac{1}{2} \ee^{-\gamma(m^*,L_0,n)L_0}
\]
while the finite can be bounded by:
\[
n\cdot C_{Weyl}\cdot |C^{(1)}_L(u)|\cdot\ee^{-\gamma(m^*,n,L)L}\ee^{L^{1/2}}\leq \frac{1}{2}\ee^{-\gamma(m,L,n)L},
\]
which proves the Lemma.

\end{proof} 
 Now we turn to the proof of Theorem \ref{thm:ILS}.

\begin{proof}[Proof of Theorem \ref{thm:ILS}]
Recall that by the single-particle Anderson localization theory there exists $\tilde{\mu}\in(0,\infty)$ such that such that we have the following decay bound on the exponential decay of the eigenfunctions: for all $\Bu\in\DZ^d$,
\begin{equation}\label{eq:mu}
\|\Bone_{C^{(1)}_L(u)}\cdot\Psi\|\leq \ee^{-\tilde{\mu}|x|}.
\end{equation}
Set $m^*=2^{-N-1}\tilde{\mu}$ and introduce the events 
\begin{align*}
&\CN:=\{\exists i=1,\ldots,n: \exists \lambda_j\in\sigma(H^{(1)}_{C^{(1)}_L(u_i)}(\omega)):\|\Bone_{C^{(1)}_L(u_i)}\cdot\varphi_j(u_i)\|\geq \ee^{-2\gamma(m^*,L_0,n)L_0},\\
& \CR:=\{\BC^{(n)}_L(\Bu) \text{ is $E$-NR}\}
\end{align*}
Then by Lemma \ref{lem:NR.NS.gamma}, Eqn  \eqref{eq:mu} and Theorem \ref{thm:Wegner} (A), we have:
\begin{align*}
\prob\left\{ \text{$\BC^{(n)}_L(\Bu)$ is $(E,m^*,0)$-S}\right\}&\leq \prob\{\CN\}+\prob\{\CR\}\\
& \leq \sum_{i=1}^n\sum_{j\geq 0} \frac{\esm\left[\|\Bone_{C^{(1)}_{L_0}(u_i)}\cdot\varphi_j\|\right]}{\ee^{-2\gamma(m^*,L_0,n)L_0}}+ \prob\{\CR\}\\
& \sum_{i=1}^n \sum_{j\geq 0} \ee^{(-\tilde{mu}_1+2\gamma(m^*,L_0,n)L_0)}+L_0^{-4^Np}.
\end{align*}
Since $2\gamma(m^*,n,L)\leq 2^{N+1}m^*\leq \tilde{\mu_1}$, $\tilde{\mu_1}-2\gamma(m,L,n)\in(0,\infty)$. Using the Weyl's law, we can divide the infinite sum above into two sums. Namely there exists a positive $E^*$ arbitrarily large such that $\lambda_j^{(1)}\geq E^*$ for $j\geq j^*=C^{(1)}_L(u_i)$ which yields 
\[
\sum_{j\geq 0}\ee^{(-\tilde{\mu_1}+2\gamma(m,n,L_0))L_0}=\left(\sum_{j\leq j^*}+\sum_{j\geq j^*+1}\right) \ee^{(-\tilde{\mu_1}+2\gamma(m,n,L_0))L_0}.
\].               
Above, the infinite sum can be made small than any polynomial power law provided that $L_0$ is large enough. We have 
\[
\sum_{i} \left(\sum_{j\leq j^*}+\sum_{j\geq j^*+1}\right) \ee^{(-\tilde{\mu_1}+2\gamma(m^*,L_0,n)L_0}\leq \frac{1}{3} L_0^{-2p4^{N-n}}+\frac{1}{3} L_0^{-2p4^{N-n}}+ L_0^{-p4^N}\leq L_0^{-2p4^{N-n}}.
\]
\end{proof}

We state and give here the proof of some important results from the paper \cite{Eka16} which use the fact that we are in the weakly interacting regime. The positive constant $m^*$ is the one from theorem \ref{thm:ILS}

\subsection{The fixed energy MSA bound for weakly interacting multi-particle systems}
Now we derive the required initial estimate from its counterparts established for non-interacting  systems.

\begin{theorem}\label{thm:ILS.WI}
$1\leq n\leq N$. Suppose that the Hamiltonians $\BH^{(n)}_0(\omega)$ (without inter-particle interaction) fulfills the  following condition for all $E\in I$ and all $\Bu\in\DZ^{nd}$

\begin{equation}\label{eq:ILS.NI}
\prob\left\{ \text{$\BC^{(n)}_{L_0}(\Bu)$ is $(E,m^*,0)$-S}\right\}\leq \frac{1}{2} L_0^{-2p^*4^{N-n}}
\qquad \text{with $p^*\in(6Nd,+\infty)$}.
\end{equation}
Then there exists $h^*\in(0,\infty)$ such that for all $h\in(-h^*,h^*)$ the Hamiltonian $\BH^{(n)}_h(\omega)$ with interaction of amplitude $|h|$  satisfies a similar bound. There exist some $p\in(6Nd,+\infty)$, $m\in(0,+\infty)$ such that for all $E\in I$ and all $\Bu\in\DZ^{nd}$
\[
\prob\{\text{$\BC^{(n)}_{L_0}(\Bu)$ is $(E,m,h)$-S}\}\leq \frac{1}{2} L_0^{-2p4^{N-n}}
\]
\end{theorem}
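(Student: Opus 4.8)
The plan is to derive the weakly interacting estimate from the non-interacting one by a perturbative argument, treating $h\BU$ as a bounded perturbation that, for $h$ small, does not destroy the non-singularity of the cube $\BC^{(n)}_{L_0}(\Bu)$. First I would fix $E\in I$ and $\Bu\in\DZ^{nd}$ and recall that $\BH^{(n)}_h(\omega) = \BH^{(n)}_0(\omega) + h\BU$, where $\|\BU\|_\infty \leq C_{\BU} < \infty$ by assumption $\condI$. The key observation is that if $\BC^{(n)}_{L_0}(\Bu)$ is $(E,m^*,0)$-NS for the non-interacting operator, then in particular $E\notin\sigma(\BH^{(n)}_{0,\BC^{(n)}_{L_0}(\Bu)})$ and the resolvent $\BG^{(n)}_{0,\BC^{(n)}_{L_0}(\Bu)}(E)$ satisfies the decay bound $\|\Bone^{(n,out)}_{\Bu}\BG^{(n)}_{0}(E)\Bone^{(n,int)}_{\Bu}\|\leq \ee^{-\gamma(m^*,L_0,n)L_0}$. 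I would want to upgrade this to an off-diagonal resolvent bound between any two unit cubes inside $\BC^{(n)}_{L_0}(\Bu)$ at distance $\gtrsim L_0$, which follows from the non-interacting eigenfunction decay (or alternatively from iterating the GRI, Theorem \ref{thm:GRI}); call the corresponding operator-norm bound on $\BG^{(n)}_{0}(E)$ something like $\|\BG^{(n)}_0(E)\|\leq \ee^{L_0^{1/2}}$ on the $E$-NR event, which holds by Combes--Thomas (Theorem \ref{thm:CT}) or directly from the resonance hypothesis.

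Next I would introduce the "good" event $\Omega_{good}(E,\Bu)$ on which $\BC^{(n)}_{L_0}(\Bu)$ is simultaneously $(E,m^*,0)$-NS and $E$-NR (or $E$-CNR), using Theorem \ref{thm:ILS.np} together with Theorem \ref{thm:Wegner}(A). On this event, the resolvent identity gives
\[
\BG^{(n)}_h(E) = \BG^{(n)}_0(E)\bigl(\BI + h\BU\BG^{(n)}_0(E)\bigr)^{-1},
\]
and the Neumann series converges as soon as $|h|\,\|\BU\|_\infty\,\|\BG^{(n)}_0(E)\| < 1/2$. Since on the good event $\|\BG^{(n)}_0(E)\|\leq \ee^{L_0^{1/2}}$, it suffices to take $|h| < h^* := (2 C_{\BU})^{-1}\ee^{-L_0^{1/2}}$, which is positive (though $L_0$-dependent — this is fine since $L_0$ is the fixed initial scale). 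Then sandwiching with $\Bone^{(n,out)}_{\Bu}$ and $\Bone^{(n,int)}_{\Bu}$ and expanding, I would bound
\[
\|\Bone^{(n,out)}_{\Bu}\BG^{(n)}_h(E)\Bone^{(n,int)}_{\Bu}\| \leq 2\,\|\Bone^{(n,out)}_{\Bu}\BG^{(n)}_0(E)\Bone^{(n,int)}_{\Bu}\| + 2|h|\,C_{\BU}\,\ee^{L_0^{1/2}}\,\|\Bone^{(n,out)}_{\Bu}\BG^{(n)}_0(E)\|\,\cdots,
\]
where the correction terms carry an extra factor $|h|\ee^{L_0^{1/2}} \leq 1$ but also inherit spatial decay once one inserts intermediate partition-of-unity cutoffs and uses the off-diagonal decay of $\BG^{(n)}_0(E)$; the net effect is that $\|\Bone^{(n,out)}_{\Bu}\BG^{(n)}_h(E)\Bone^{(n,int)}_{\Bu}\|\leq \ee^{-\gamma(m,L_0,n)L_0}$ for a slightly smaller mass $m = m^*/2$, say, provided $L_0\geq L_*(m^*,N,d)$. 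Hence on $\Omega_{good}$ the cube $\BC^{(n)}_{L_0}(\Bu)$ is $(E,m,h)$-NS, so
\[
\prob\{\BC^{(n)}_{L_0}(\Bu)\text{ is }(E,m,h)\text{-S}\} \leq \prob\{\Omega_{good}^c\} \leq \tfrac12 L_0^{-2p^*4^{N-n}} + L_0^{-p4^{N-n}} \leq \tfrac12 L_0^{-2p4^{N-n}}
\]
after relabeling $p$, for $L_0$ large.

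The main obstacle I anticipate is controlling the perturbation uniformly in the energy $E$ across the whole interval $I$: the Neumann-series radius depends on $\|\BG^{(n)}_0(E)\|$, which blows up as $E$ approaches $\sigma(\BH^{(n)}_0)$, and near such energies the cube is anyway $E$-resonant. The clean way around this is exactly what the dichotomy above encodes — one only needs the perturbative bound on the event where the cube is $E$-NR, and the resonant event is already absorbed into the Wegner probability. A secondary technical point is that $h^*$ depends on $L_0$, so one must be careful that the rest of the multi-scale induction (which fixes $L_0$ once and for all at the start) is compatible with this; since $L_0$ is chosen before $h^*$ in the logical order, there is no circularity. Finally, one should check that the extra geometric-resolvent cutoffs needed to convert the crude operator bound $\|\BG^{(n)}_0(E)\|\leq\ee^{L_0^{1/2}}$ into a genuinely decaying bound only cost polynomial-in-$L_0$ prefactors, which are harmless against the exponential $\ee^{-\gamma(m,L_0,n)L_0}$ for $L_0$ large.
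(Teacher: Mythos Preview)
Your strategy is the right shape and is close to the paper's, but it differs in one technical choice and contains a quantitative gap.

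The paper's proof does not use the $E$-NR event or a Neumann series. Instead it creates two pieces of slack: a deterministic slack $\epsilon>0$ by passing from $m^*$ to $m=m^*/2$ (so that $(E,m^*,0)$-NS implies the $(E,m,0)$-NS bound with $\epsilon$ to spare), and a probabilistic slack $\tau>0$ by choosing $p\in(6Nd,p^*)$ so that $L_0^{-2p4^{N-n}}-\tau\geq L_0^{-2p^*4^{N-n}}$. It then applies the Wegner estimate (Theorem~\ref{thm:Wegner}) to \emph{both} $\BH^{(n)}_0$ and $\BH^{(n)}_h$: for the given $\tau$ there is a single threshold $B(\tau)$ with $\prob\{\|\BG_0(E)\|\geq B(\tau)\}\leq\tau/4$ and $\prob\{\|\BG_h(E)\|\geq B(\tau)\}\leq\tau/4$. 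On the complementary event the first resolvent identity gives $\|\BG_0-\BG_h\|\leq|h|\,\|\BU\|\,B(\tau)^2$, and $h^*:=\epsilon/(2\|\BU\|B(\tau)^2)$ makes this at most $\epsilon/2$, closing the argument with total probability loss exactly $\tau/2$.

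Your route controls $\|\BG_h\|$ via a Neumann series driven by $\|\BG_0\|\leq\ee^{L_0^{1/2}}$ on the $E$-NR event for $\BH_0$, rather than by a separate Wegner bound for $\BH_h$. This is a legitimate alternative, but two points fail as written. First, your choice $h^*=(2C_{\BU})^{-1}\ee^{-L_0^{1/2}}$ only ensures the Neumann series converges; the correction term $\|\Bone^{(n,out)}_{\Bu}(\BG_h-\BG_0)\Bone^{(n,int)}_{\Bu}\|$ is then of order $|h|\,\|\BU\|\,\|\BG_0\|\,\|\BG_h\|\lesssim\ee^{L_0^{1/2}}$, which is far larger than the gap $\ee^{-\gamma(m,L_0,n)L_0}-\ee^{-\gamma(m^*,L_0,n)L_0}$ you need to beat. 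The claimed ``inheritance of spatial decay via intermediate cutoffs'' does not work here: $\BU$ is supported on the diagonal strip $|x_1-x_2|\leq r_0$, which meets both $\BC^{(n,int)}$ and $\BC^{(n,out)}$, so inserting $\BU$ gives no off-diagonal gain. The fix is simply to take $h^*$ exponentially small in $L_0$ (of order $\ee^{-\gamma(m,L_0,n)L_0-2L_0^{1/2}}$), which is still positive since $L_0$ is fixed. Second, your final probability estimate $\tfrac12 L_0^{-2p^*4^{N-n}}+L_0^{-p4^{N-n}}\leq\tfrac12 L_0^{-2p4^{N-n}}$ cannot hold by relabeling alone: the Wegner term $L_0^{-p4^{N-n}}$ has a \emph{smaller} exponent than the target $L_0^{-2p4^{N-n}}$ and dominates it. The paper avoids this precisely by working with an arbitrary $\tau$ rather than the fixed NR threshold $\ee^{-L_0^{1/2}}$; you can recover this within your framework by replacing that threshold with a variable $\eta$ chosen (via the Wegner-type regularity) so that the resonant probability is at most $\tau/4$.
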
 
\begin{proof}
First observe that the result of \eqref{eq:ILS.NI} is proved in the statement of Theorem \ref{thm:ILS.np}
Set 
\[
\BG_{\BC^{(n)}_L(\Bu),h}(E)=(\BH^{(n)}_{\BC^{(n)}_L(\Bu),h}-E)^{-1}, \quad h\in\DR,
\]
By definition a cube $\BC^{(n)}_L(\Bu)$ is $(E,m^*,0)$-NS iff 
\[
\|\Bone_{\Bu}^{(n,out)}\BG^{(n)}_{\BC^{(n)}_L(\Bu)}(E)\Bone_{\Bu}^{(n,int)}\|\leq \ee^{-\gamma(m,L_0,n)L_0}
\]
Therefore there exists sufficiently small positive $\epsilon$ such that 
\begin{equation}\label{eq:WI.epsilon}
\| \Bone_{\Bu}^{(n,out)}\BG^{(n)}_{\BC^{(n)}_L(\Bu)}(E)\Bone_{\Bu}^{(n,int)}\|\leq \ee^{-\gamma(m,L,n)L_0}-\epsilon
\end{equation}
where $m=m^*/2\in(0,\infty)$. Since by assumption $p^*\in(6Nd,\infty)$ there exists $p\in(6Nd,p^*)$ and $\tau \in(0,\infty)$ such that $L_0^{-2p4^{N-n}}-\tau\geq L_0^{-2p^*4^{N-n}}$. With such values $p$ and $\tau$ inequality \eqref{eq:ILS.NI} implies 
\begin{equation}\label{eq:WI.prob.tau}
\prob\left\{ \text{$\BC^{(n)}_L(\Bu)$ is $(E,m^*,0)$-S}\right\}\leq \frac{1}{2}L_0^{-2p4^{N-n}} -\frac{1}{2}\tau 
\end{equation}

Next, it follows from the first resolvent identity that 
\[
\| \BG^{(n)}_{\BC^{(n)}_L(\Bu),0}(E)-\BG^{(n)}_{\BC^{(n)}_L(\Bu),h}(E)\|\leq |h|\|\BU\|\cdot\|\BG^{(n)}_{\BC^{(n)}_{L_0}(\Bu)}(E)\|\cdot\|\BG^{(n)}_{\BC^{(n)}_{L_0}(\Bu)}(E)\|\cdot\|\BG^{(n)}_{\BC^{(n)}_{L_0}(\Bu),h}(E)\|.
\]
By Theorem \ref{thm:Wegner} applied to Hamiltonians $\BH^{(n)}_{\BC^{(n)}_L(\Bu),0}$ and $\BH^{(n)}_{\BC^{(n)}_L(\Bu),h}$ for any $\tau\in(0,\infty)$  there is $B(\tau)\in(0,+\infty)$ such that
\begin{align*}
&\prob\{ \|\BG^{(n)}_{\BC^{(n)}_L(\Bu),0}(E)\|\geq B(\tau)\}\leq \frac{\tau}{4}\\
&\prob\{\|\BG^{(n)}_{\BC^{(n)}_L(\Bu),h}(E)\|\geq B(\tau)\}\leq \frac{\tau}{4}.\\
\end{align*}
 Therefore 
   \begin{align*}
	 &\prob\{ \|\BG^{(n)}_{\BC^{(n)}_L(\Bu),0}(E)\|-\|\BG^{(n)}_{\BC^{(n)}_L(\Bu),h}(E)\|\geq |h| \|\BU\|B^2(\tau)\}\\
	\quad & \prob\{ \|\BG^{(n)}_{\BC^{(n)}_L(\Bu),0}(E)\|\geq B(\tau)\}+\prob\{\|\BG^{(n)}_{\BC^{(n)}_L(\Bu),h}(E)\|\geq B(\tau)\}\\
	\quad & 2\cdot \frac{\tau}{4}
	\end{align*} 
	
	Set $h^*:=\frac{\epsilon}{2\|\BU\| (B(\tau))^2}\in(0,+\infty)$. We see that if $|h|\leq h^*$, then $|h|\times \|\BU\| \times B(\tau)^2\leq \frac{\epsilon}{2}$. Hence,
	\begin{equation}\label{eq:WI.tau}
	\prob\{\|\BG^{(n)}_{\BC^{(n)}_L(\Bu),0}(E)\|-\|\BG^{(n)}_{\BC^{(n)}_L(\Bu),h}(E)\|\geq \frac{\epsilon}{2}\}\leq 2\cdot\frac{\tau}{4}
	\end{equation}
	Combining \eqref{eq:WI.epsilon}, \eqref{eq:WI.prob.tau} and \eqref{eq:WI.tau} we obtain that for all $E\in I$  
	\begin{align*}
	 &\prob\{ \text{$\BC^{(n)}_L(\Bu)$ is $(E,m,h)$-S}\}\\
	& \quad \prob\{ \text{$\BC^{(n)}_L(\Bu)$ is $(E,m,0)$-S}\}\\
	& + \prob\{ \|\BG^{(n)}_{\BC^{(n)}_L(\Bu),0}(E)\|-\|\BG^{(n)}_{\BC^{(n)}_L(\Bu),h}(E)\|\geq \frac{\epsilon}{2}\}\\
	\leq (\frac{1}{2} L_0^{-2p4^{N-n}}- \frac{\tau}{2}) +\frac{\tau}{2}=\frac{1}{2}L_0^{-2p4^{N-n}}
	\end{align*}
	\end{proof}

	\subsection{The variable energy multi-scale analysis bounds for the weakly interacting multi-particle systems}
	Here, we deduce from the fixed energy bound, the variable energy initial multi-scale analysis bound for the weakly interacting multi-particle system. We will prove localization in each compact interval $I_0$ of the following form: let $E_0\in\DR$ and $\delta=\frac{1}{2}\ee^{2L_0^{1/2}}(\ee^{-m_1L_0}-\ee^{-mL_0})$ where $m_1\in(0,m)$ by definition. Set 
	\[
	I_0:=[E-\delta,E_0+\delta].
	\]
	The result on the variable energy multi-scale analysis is given below in 
	
	\begin{theorem}\label{thm:VE.WI}
	Let $1\leq n\leq N$. For any $\Bu\in\DZ^{nd}$ we have 
	\[
	\prob\{ \text{$\exists E\in I_0$: $\BC^{(n)}_{L_0}(\Bu)$ is $(E,m_1)$-S}\}\leq L_0^{-2p4^{N-n}}
	\]
	for some $m_1\in(0,\infty)$
	\end{theorem}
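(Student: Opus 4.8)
The plan is to derive the variable-energy bound from the fixed-energy bound of Theorem~\ref{thm:ILS.WI} by combining it with the Wegner estimate of Theorem~\ref{thm:Wegner}(A) and a standard energy-discretization argument. The point is that the fixed-energy estimate controls $\prob\{\BC^{(n)}_{L_0}(\Bu) \text{ is } (E,m,h)\text{-S}\}$ for each individual $E$, whereas we now need a bound uniform over the whole interval $I_0$. The mechanism to pass from one to the other is the resolvent identity in the energy parameter together with an $E$-CNR (complete non-resonance) hypothesis, which is exactly what Theorem~\ref{thm:Wegner}(A) lets us impose with high probability.

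First I would introduce the event that $\BC^{(n)}_{L_0}(\Bu)$ is not $E_0$-CNR; by Theorem~\ref{thm:Wegner}(A) this has probability at most $L_0^{-p4^{N-n}}$. On the complementary event, $\BC^{(n)}_{L_0}(\Bu)$ contains no $E_0$-resonant subcube of size $\geq L_0^{1/\alpha}$, and in particular $\dist(E_0,\sigma(\BH^{(n)}_{\BC^{(n)}_{L_0}(\Bu)})) > \ee^{-L_0^{1/2}}$. Next I would invoke the fixed-energy bound at the center energy $E_0$: with probability at least $1 - \tfrac12 L_0^{-2p4^{N-n}}$ the cube is $(E_0,m,h)$-NS, so that
\[
\|\Bone_{\Bu}^{(n,out)}\BG^{(n)}_{\BC^{(n)}_{L_0}(\Bu),h}(E_0)\Bone_{\Bu}^{(n,int)}\| \leq \ee^{-\gamma(m,L_0,n)L_0} \leq \ee^{-mL_0}.
\]
Now the key analytic step: for any $E\in I_0$ write, via the first resolvent identity,
\[
\BG^{(n)}_{\BC^{(n)}_{L_0}(\Bu),h}(E) = \BG^{(n)}_{\BC^{(n)}_{L_0}(\Bu),h}(E_0) + (E-E_0)\,\BG^{(n)}_{\BC^{(n)}_{L_0}(\Bu),h}(E)\,\BG^{(n)}_{\BC^{(n)}_{L_0}(\Bu),h}(E_0).
\]
Sandwiching between $\Bone_{\Bu}^{(n,out)}$ and $\Bone_{\Bu}^{(n,int)}$, using $|E-E_0|\leq\delta$, and bounding the full resolvents $\|\BG^{(n)}_{\BC^{(n)}_{L_0}(\Bu),h}(E)\|$ and $\|\BG^{(n)}_{\BC^{(n)}_{L_0}(\Bu),h}(E_0)\|$ by $\ee^{L_0^{1/2}}$ (which is legitimate because the $E_0$-CNR property plus $|E-E_0|\leq\delta\leq\ee^{-L_0^{1/2}}$ in magnitude keeps $E$ away from the spectrum), one gets
\[
\|\Bone_{\Bu}^{(n,out)}\BG^{(n)}_{\BC^{(n)}_{L_0}(\Bu),h}(E)\Bone_{\Bu}^{(n,int)}\|
\leq \ee^{-mL_0} + \delta\,\ee^{2L_0^{1/2}}.
\]
By the choice $\delta = \tfrac12\ee^{2L_0^{1/2}}(\ee^{-m_1L_0}-\ee^{-mL_0})$ — wait, $\delta\,\ee^{2L_0^{1/2}}$ should be read off from that definition so that the right-hand side is at most $\ee^{-m_1L_0} \leq \ee^{-\gamma(m_1,L_0,n)L_0}$ for $L_0$ large, since $m_1 < m$; this makes the cube $(E,m_1)$-NS simultaneously for every $E\in I_0$.

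Putting the pieces together, the event $\{\exists E\in I_0 : \BC^{(n)}_{L_0}(\Bu) \text{ is } (E,m_1)\text{-S}\}$ is contained in the union of the non-$E_0$-CNR event and the $(E_0,m,h)$-S event, whence its probability is at most $L_0^{-p4^{N-n}} + \tfrac12 L_0^{-2p4^{N-n}} \leq L_0^{-2p4^{N-n}}$ for $L_0$ large enough (absorbing the weaker power into the stronger one, possibly after slightly decreasing $p$ within $(6Nd,\infty)$ as in the proof of Theorem~\ref{thm:ILS.WI}). The main obstacle I anticipate is the bookkeeping in the second paragraph: one must be careful that the $E_0$-CNR hypothesis genuinely yields the uniform resolvent bound $\ee^{L_0^{1/2}}$ for all $E\in I_0$ (not merely at $E_0$), and that the exponents in $\gamma(m,L_0,n)$ versus $\gamma(m_1,L_0,n)$ leave enough room — this is why $m_1$ is taken strictly below $m$ and why $\delta$ carries the precise prefactor $\tfrac12\ee^{2L_0^{1/2}}$. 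Everything else is a routine union bound.
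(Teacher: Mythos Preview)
Your approach is essentially the same as the paper's: the resolvent identity in the energy variable, non-resonance at the center $E_0$ to bound both full resolvent norms uniformly over $I_0$, the fixed-energy NS bound at $E_0$ from Theorem~\ref{thm:ILS.WI}, then a union of two bad events. One small discrepancy worth noting: you invoke the $E_0$-\emph{CNR} event with probability bound $L_0^{-p4^{N-n}}$, which is too weak to add up to $L_0^{-2p4^{N-n}}$ without your ad hoc ``slight decrease of $p$''. The paper only needs (and only uses) $E_0$-\emph{NR} of the cube itself---i.e.\ $\dist(E_0,\sigma(\BH^{(n)}_{\BC^{(n)}_{L_0}(\Bu),h}))>\ee^{-L_0^{1/2}}$---and bounds the complementary event by $L_0^{-p4^{N}}$; since $p4^N\geq 2p\cdot 4^{N-n}$ for $n\geq 1$, the sum $\tfrac12 L_0^{-2p4^{N-n}}+L_0^{-p4^{N}}\leq L_0^{-2p4^{N-n}}$ holds directly, with no need to adjust $p$. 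Apart from this bookkeeping point the two arguments coincide.
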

	\begin{proof}
	Let $E_0\in I$. By the resolvent equation 
	\[
	\BG^{(n)}_{\BC^{(n)}_{L_0}(\Bu),h}(E)= \BG^{(n)}_{\BC^{(n)}_{L_0}(\Bu),h}(E)+(E-E_0)\BG^{(n)}_{\BC^{(n)}_L(\Bu),h}(E)\BG^{(n)}_{\BC^{(n)}_{L_0}(\Bu),h}(E_0)
	\]
	If $\dist(E,\sigma(\BH^{(n)}_{\BC^{(n)}_L(\Bu),h}(\omega)))\geq \ee^{-L_0^{2/2}}$ and $|E-E_0|\leq \frac{1}{2} \ee^{-L_0^{1/2}}$, then $\dist(E,\sigma(\BH^{(n)}_{\BC^{(n)}_{L_0}(\Bu),h}))\geq \frac{1}{2} \ee^{-L_0^{1/2}}$. 
	
	If in addition $\BC^{(n)}_{L_0}(\Bu)$ is $(E_0,m,h)$-NS then
	\[
	\|\Bone^{(n,out)}_{\Bx}\BG^{(n)}_{\BC^{(n)}_{L_0}(\Bu)}(E)\Bone_{\Bx}^{(n,int)}\|\leq \ee^{-m(1+L_0^{1/8}4^{N-n+1})L_0} + 2|E_0-E| \ee^{2L_0^{1/2}}.
	\]
	Therefore, for $m_1=\frac{m}{2} $, if we put 
	\[
	\delta=\frac{1}{2}\ee^{-L_0^{1/2}}(\ee^{-m_1(1+L_0^{1/8})^{N-n+1}L_0}-\ee^{-m(1+L_0^{1/8})^{N-n+1}})\quad I_0=[E_0-\delta, E_0+\delta],
	\] 
	we have that
	\begin{gather*}
	\prob\left\{\text{$\exists  E\in I_0$  $\BC^{(n)}_{L_0}(\Bu)$ is $(E,m_1,h)$-S}\right\}\\ \leq \prob\left\{\text{$\BC^{(n)}_{L_0}(\Bu)$ is is $(E,m,h)$-S}\right\}\\
	+\prob\left\{ \dist(E_0,\sigma(\BH^{(n)}_{\BC^{(n)}_{L_0}(\Bu),h}))\leq \ee^{-L_0^{1/2}}\right\}\\
	\leq \frac{1}{2} L_0^{-2p4^{N-n}}+ L_0^{-p4^{N}}\leq L_0^{-2p4^{N-n}}
	\end{gather*}  
	We used Theorem \ref{thm:ILS.NI} to bound the first term and the Wegner estimates Theorem  \ref{thm:Wegner} A) to bound the other term.
	\end{proof}   
	Below, we develop the induction step of the multi-scale  and for the reader convenience, we also give the proof of some important results.

\section{Multi-scale induction}\label{sec:MP.induction}
In the rest of the paper, we assume that $n\geq 2$ and $I_0$  is the interval of the previous Section. Recall the following facts from \cite{Eka19a}. Consider a cube  $\BC^{(n)}_L(\Bu)$ with $\Bu=(u_1,\ldots,u_n)\in(\DZ^d)^n$. We have 
\[
\varPi\Bu=\{u_1,\ldots,n\}
\]
and 
\[
\varPi\BC^{(n)}_{L_0}(\Bu)=C^{(1)}_{L_0}(u_1)\cup \cdots\cup C^{(1)}_{L_0}(u_n)
\]
\begin{definition}
Let $L_0\geq 3$ be a constant and $\alpha=3/2$. We define the sequence $\{L_k, k\geq 1\}$  recursively as follows 
\[
L_k=\lfloor L_{k-1}\rfloor+1,\quad \text{for all $k\geq 1$}.
\]
\end{definition}
Let $m\in(0,\infty)$ be a positive constant, we also introduce the following property, namely the multi-scale analysis bounds at any scale length $L_k$ and for any pair of separable cubes $\BC^{(n)}_{L_k}(\Bu)$ $\BC^{(n)}_{L_k}(\Bv)$ 
\begin{dsknN*}
\[
\prob\left\{\text{$\exists E\in I_0$ $\BC^{(n)}_{L_k}(\Bu)$ $\BC^{(n)}_L(\Bv)$  are $(E,m)$-S}\right\}\leq L_k^{-2p4^{N-n}}
\]
where $p\in(6Nd,\infty)$.
\end{dsknN*}
In both the single-particle and the multi-particle system, given the results of the multi-particle multi-scale analysis property $\dsknN$ above, one can deduce the localization results see for example the papers \cites{DK89,DS01} for those concerning the single-particle case and \cites{CS09,Eka19a}  for multi-particle systems. We have the following:

\begin{theorem}\label{thm:ILS}
For any $n'\in(1,n)$ assume that property $\dskprimenN$ holds true for all $k\geq 0$ then there exists a positive constant $\tilde{\mu}\in(0,\infty)$  such that for cube $\BC^{(n')}_{L_k}(\Bu')$
\begin{equation}\label{eq:ILS}
\esm\left[\|\Bone_{\BC^{(n',out)}_L(\Bu')}\BG^{(n')}_{\BC^{(n')}_L(\Bu')}(E)\Bone^{(n',int)}_{\BC^{(n',int)}(\Bu')}\|\right]\leq \ee^{-\tilde{\mu}L}
\end{equation}
\end{theorem}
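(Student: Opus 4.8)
The plan is to derive the expectation bound \eqref{eq:ILS} from the probabilistic multi-scale estimate $\dskprimenN$ together with the deterministic a priori bounds supplied by the Combes--Thomas estimate (Theorem \ref{thm:CT}) and the Wegner estimate (Theorem \ref{thm:Wegner}). The heuristic is standard: on the event that $\BC^{(n')}_{L_k}(\Bu')$ is $(E,m)$-NS we have, by definition of non-singularity, $\|\Bone^{(n',out)}_{L}\BG^{(n')}_{\BC^{(n')}_{L}(\Bu')}(E)\Bone^{(n',int)}_{L}\|\le \ee^{-\gamma(m,L,n')L}$, and $\gamma(m,L,n')L \ge mL$; on the complementary (small-probability) event we must still control the resolvent, which is where Combes--Thomas and Wegner enter. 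Summing the two contributions and choosing $\tilde\mu$ smaller than $m$ but large enough to beat the polynomially small failure probability gives the claim.

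First I would fix the cube $\BC^{(n')}_{L}(\Bu')$ at scale $L=L_k$ and split $\Omega$ into $\mathcal{S}:=\{\BC^{(n')}_{L}(\Bu')\text{ is }(E,m)\text{-S for some }E\in I_0\}$ and its complement. Here a subtlety arises: $\dskprimenN$ is stated for \emph{pairs} of separable cubes, whereas \eqref{eq:ILS} concerns a single cube; I would invoke the single-cube consequence of the multi-scale analysis (the ``$\dsonN$''-type bound, cf. Theorem \ref{thm:VE.WI} and the remark following $\dsknN$ that single-cube estimates follow), so that $\prob\{\mathcal{S}\}\le L^{-p4^{N-n'}}$ up to constants. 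On $\Omega\setminus\mathcal{S}$ the integrand is bounded by $\ee^{-\gamma(m,L,n')L}\le \ee^{-mL}$. On $\mathcal{S}$, I would bound the operator norm $\|\Bone^{(n',out)}_{L}\BG^{(n')}_{\BC^{(n')}_{L}(\Bu')}(E)\Bone^{(n',int)}_{L}\|$ using Theorem \ref{thm:CT}: since the single-site potential $f$ is non-negative (assumption $\condP$) and $\BU\ge 0$, the operator $\BH^{(n')}$ is bounded below by $-\BDelta$, and a deterministic lower bound $E_0\ge 0$ for the infimum of the spectrum is available on the energy window $I_0$; combined with the Wegner bound on $\dist(E,\sigma(\BH^{(n')}_{\BC^{(n')}_{L}(\Bu')}))$, this yields at worst a bound of the form $\ee^{CL^{1/2}}$, i.e.\ subexponential in $L$. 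Then
\[
\esm\left[\|\Bone^{(n',out)}_{L}\BG^{(n')}_{\BC^{(n')}_{L}(\Bu')}(E)\Bone^{(n',int)}_{L}\|\right]\le \ee^{-mL} + \ee^{CL^{1/2}}\cdot L^{-p4^{N-n'}},
\]
and for $L=L_k$ large the right-hand side is $\le \ee^{-\tilde\mu L}$ with, say, $\tilde\mu=m/2$; one absorbs the finitely many small $k$ into the constant.

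The main obstacle I expect is the deterministic control of the resolvent on the bad event $\mathcal{S}$: one needs a genuinely energy-independent a priori bound valid for \emph{all} $E\in I_0$ simultaneously, not just for fixed $E$. The clean route is to note that $I_0$ is a \emph{compact} interval lying at (or near) the bottom of the spectrum, use Theorem \ref{thm:CT} for energies below $\inf\sigma(\BH^{(n')})$ and the Wegner estimate Theorem \ref{thm:Wegner} to handle the resonant set, and observe that the variable-energy argument already carried out in the proof of Theorem \ref{thm:VE.WI} (passing from fixed $E$ to all $E$ in a window of radius $\ee^{-L^{1/2}}$ via the second resolvent identity) converts the fixed-energy bound into a variable-energy one at the cost of only a factor $\ee^{CL^{1/2}}$. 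A secondary technical point is matching the exponent bookkeeping: one must check $\tilde\mu L_k \le mL_k - \log(\text{correction})$ holds for all $k\ge 0$, which follows since $\gamma(m,L,n') = m(1+L^{-1/8})^{N-n'+1}\ge m$ and the polynomial gain $L^{-p4^{N-n'}}$ dominates $\ee^{CL^{1/2}}$ only asymptotically — handled by enlarging the constant in front of $\ee^{-\tilde\mu L}$ or shrinking $\tilde\mu$ further if necessary.
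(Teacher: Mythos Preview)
Your argument contains a fatal arithmetic error in the final step. You write
\[
\esm\bigl[\|\cdots\|\bigr]\le \ee^{-mL}+\ee^{CL^{1/2}}\cdot L^{-p4^{N-n'}}
\]
and then claim this is bounded by $\ee^{-\tilde\mu L}$ for some $\tilde\mu>0$. But the second term does not decay at all: $\ee^{CL^{1/2}}$ grows faster than any polynomial, so $\ee^{CL^{1/2}}L^{-p4^{N-n'}}\to\infty$ as $L\to\infty$. Your own remark that ``the polynomial gain $L^{-p4^{N-n'}}$ dominates $\ee^{CL^{1/2}}$ only asymptotically'' is simply false; no amount of shrinking $\tilde\mu$ or enlarging constants rescues this. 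The MSA failure probability in $\dsknprimeN$ is only polynomially small in $L_k$, which is far too weak to compensate any nontrivial deterministic growth of the resolvent on the bad set.

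There is a second, independent gap: your invocation of Combes--Thomas (Theorem~\ref{thm:CT}) on the bad event is not justified. That estimate is stated only for $E<\inf\sigma(H)$, whereas here $E\in I_0$ lies inside the spectrum; and your auxiliary claim $\BU\ge 0$ is nowhere assumed (assumption $\condI$ only says $\BU$ is bounded and short-range). On the genuinely resonant part of the bad event the resolvent norm is of order $\dist(E,\sigma)^{-1}$, which under the log-H\"older Wegner bound of assumption~$\condP$ need not even be integrable.

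The paper itself does not prove this theorem directly: it is stated as a consequence of the full MSA-to-localization machinery, with explicit reference to \cites{DK89,DS01,CS09,Eka19a}. The route taken there is \emph{not} a one-scale good/bad splitting of the Green function as you attempt; rather, one first runs the MSA over all scales to obtain almost-sure exponential decay of the finite-volume \emph{eigenfunctions} (via the Borel--Cantelli argument reproduced in Section~\ref{sec:proof.results}), and then converts this into an expectation bound using that eigenfunctions are normalized (hence uniformly bounded), so no unbounded resolvent ever enters. If you want a direct proof, you must either reformulate the target in terms of eigenfunctions or eigenfunction correlators, or else carry out a multi-scale decomposition of the bad event so that at each sub-scale the deterministic bound and the probability multiply to something summable and exponentially small.
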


\begin{definition}[partially/fully interactive]\label{def:PI.FI}
An $n$-particle cube $\BC^{(n)}_L(\Bu)\subset\DZ^{nd}$ is called fully interactive (FI) if 
\begin{equation}\label{eq:PI.FI}
 \diam \varPi \Bu:=\max_{i\neq j} |u_i-u_j|\leq n(2L+r_0),
\end{equation}
and partially interactive (PI) otherwise.
\end{definition}

The following simple statement clarifies the notion of PI cubes 
\begin{lemma}\label{lem:PI}
If a cube $\BC^{(n)}_L(\Bu)$ is PI then there exists a subset $\CJ\subset\{1,\ldots,n\}$ with $1 \leq \card \CJ\leq n-1$ such that
\[
\dist\left( \varPi_{\CJ}\BC^{(n)}_L(\Bu),\varPi_{\CJ^{c}}\BC^{(n)}_L(\Bu) \right)\geq r_0
\]
\end{lemma}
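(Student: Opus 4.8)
\textbf{Proof proposal for Lemma \ref{lem:PI}.}

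The plan is to exploit the definition of a partially interactive (PI) cube directly. By Definition \ref{def:PI.FI}, since $\BC^{(n)}_L(\Bu)$ is PI, we have $\diam\varPi\Bu = \max_{i\neq j}|u_i-u_j| > n(2L+r_0)$. First I would reduce to a one-dimensional combinatorial statement about the positions of the centres. Consider the projected single-particle centres $u_1,\dots,u_n\in\DZ^d$ (here $d=1$, but the argument is dimension-free if one works with a suitable coordinate); relabel them so that, along a coordinate realizing the diameter, they are non-decreasing. The idea is that among the $n-1$ consecutive gaps between successive (sorted) centres, at least one must be large: if every gap were at most $2L+r_0$, then the total spread $\max_i u_i - \min_i u_i$ would be at most $(n-1)(2L+r_0) < n(2L+r_0)$, contradicting that the cube is PI.

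Next I would turn that large gap into the desired cluster decomposition. Let the sorted centres be $u_{\sigma(1)},\dots,u_{\sigma(n)}$ and let $i_0$ be an index with $u_{\sigma(i_0+1)} - u_{\sigma(i_0)} > 2L+r_0$. Set $\CJ := \{\sigma(1),\dots,\sigma(i_0)\}$, so that $1\le\card\CJ = i_0 \le n-1$ and $\CJ^c = \{\sigma(i_0+1),\dots,\sigma(n)\}$. The projection $\varPi_{\CJ}\BC^{(n)}_L(\Bu)$ is contained in $\bigcup_{j\in\CJ}C^{(1)}_L(u_j)$, whose points along the chosen coordinate lie within $L$ of some $u_j$ with $j\in\CJ$, hence are $\le u_{\sigma(i_0)} + L$; similarly every point of $\varPi_{\CJ^c}\BC^{(n)}_L(\Bu)$ has that coordinate $\ge u_{\sigma(i_0+1)} - L$. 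The separation of these two sets is therefore at least $(u_{\sigma(i_0+1)} - L) - (u_{\sigma(i_0)} + L) = (u_{\sigma(i_0+1)} - u_{\sigma(i_0)}) - 2L > r_0$, which is exactly the claim.

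I expect the only real subtlety to be bookkeeping: making precise what $\varPi_{\CJ}$ means as a subset of the appropriate $\DZ^{d\card\CJ}$ (or $\DR^{d\card\CJ}$) and checking that $\dist(\varPi_{\CJ}\BC^{(n)}_L(\Bu),\varPi_{\CJ^c}\BC^{(n)}_L(\Bu))$ is controlled by the one-coordinate separation, since distance in the product is at least the distance in any coordinate projection. With $d=1$ this is immediate. The pigeonhole step is elementary, so the main (minor) obstacle is purely notational rather than mathematical; one should also note that the inequality $\diam\varPi\Bu > n(2L+r_0)$ is strict precisely because ``PI'' is the complement of the non-strict condition \eqref{eq:PI.FI}, which is what gives the strict gap bound needed to conclude $> r_0$ rather than $\ge r_0$.
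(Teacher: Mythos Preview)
Your argument is correct. The paper's proof takes a slightly different route: rather than sorting the centres along a coordinate and locating a large gap by pigeonhole, it argues via connectivity. Setting $R=2L+r_0$, the paper observes that if the union $\bigcup_{i} C^{(1)}_{R/2}(u_i)$ were connected, its diameter (and hence $\diam\varPi\Bu$) would be at most $nR$, contradicting the PI hypothesis; therefore the union splits into at least two nonempty groups, and any such group furnishes an index set $\CJ$ with $|u_{j_1}-u_{j_2}|\ge R$ for all $j_1\in\CJ$, $j_2\in\CJ^c$, from which the distance estimate $\dist(\varPi_{\CJ}\BC^{(n)}_L(\Bu),\varPi_{\CJ^c}\BC^{(n)}_L(\Bu))\ge R-2L=r_0$ follows exactly as in your final step. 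In $d=1$ the two arguments are essentially the same phenomenon (a disconnected union of intervals is precisely a large gap between consecutive sorted centres); in higher $d$ the connectivity formulation is marginally cleaner because it avoids singling out a coordinate, whereas your version has the virtue of being completely explicit about the gap size and yields the strict inequality $>r_0$ directly.
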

\begin{proof}
See the proof in the  appendix Section \ref{sec:appendix}
 \end{proof}
If a cube $\BC^{(n)}_L(\Bu)$ is PI then by Lemma \ref{lem:PI}, we can write it as 
 \begin{equation}\label{eq:PI.cubes}
\BC^{(n)}_L(\Bu)=\BC^{(n')}_L(\Bu')\times \BC^{(n'')}_L(_Bu'')
\end{equation}
with 
\begin{equation}\label{eq:distant.PI}
\dist(\varPi\BC^{(n')}_L(\Bu'),\varPi\BC^{(n'')}_L(\Bu''))\geq r_0
\end{equation}
where $\Bu'=\Bu_{\CJ}=(u_j:j\in\CJ)$ $\Bu''=\Bu_{\CJ^{c}}=(u_j:j\in\CJ^{c})$ $n'=\card \CJ$ and $n''=\card \CJ^{c}$
Throughout, when we write a PI cube $\BC^{(n)}_L(\Bu)$ in the form \eqref{eq:PI.cubes}, we implicitly assume that the projections satisfy \eqref{eq:distant.PI}. Let $\BC^{(n')}_L(\Bu')\times\BC^{(n'')}_L(\Bu'')$ be the decomposition of the PI cube $\BC^{(n)}_L(\Bu)$ and $\{\lambda_i,\varphi_i\}$ and $\{\mu_j,\phi_j\}$ be the eigenvalues and corresponding eigenfunctions of $\BH^{(n')}_{\BC^{(n')}_L(\Bu')}$ and $\BH^{(n'')}_{\BC^{(n'')}_L(\Bu'')}$ respectively. Next, we can choose the eigenfunctions $\BPsi_{ij}$ as tensor product
\[
\BPsi_{ij}=\varphi_i\otimes \phi_j
 \]
The eigenfunctions appearing in subsequent argument and calculations will be assume normalized.

Now we turn to geometrical property of FI cubes 

\begin{lemma}\label{lem:FI.cubes }
Let $n\geq 1$, $L\geq 2r_0$ and consider two FI cubes $\BC^{(n)}_L(\Bx)$ and $\BC^{(n)}_L(\By)$ with $|\Bx-\By|\geq 7nL$. Then 
\[
\varPi\BC^{(n)}_L(\Bx)\cap \varPi\BC^{(n)}_L(\By)=\emptyset
\]
\end{lemma}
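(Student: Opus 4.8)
\textbf{Plan for the proof of Lemma \ref{lem:FI.cubes }.}
The plan is to argue by contradiction: suppose the projections $\varPi\BC^{(n)}_L(\Bx)$ and $\varPi\BC^{(n)}_L(\By)$ intersect, and derive a bound on $|\Bx-\By|$ that contradicts the hypothesis $|\Bx-\By|\geq 7nL$. First I would unwind the definitions. Since $\varPi\BC^{(n)}_L(\Bx)=\bigcup_{i=1}^n C^{(1)}_L(x_i)$ and similarly for $\By$, a nonempty intersection means there exist indices $i,j\in\{1,\ldots,n\}$ with $C^{(1)}_L(x_i)\cap C^{(1)}_L(x_j)\neq\emptyset$ — wait, more precisely with $C^{(1)}_L(x_i)\cap C^{(1)}_L(y_j)\neq\emptyset$, hence $|x_i-y_j|\leq 2L$ (the two single-particle cubes have side $2L$, so their centres are within $2L$ in sup-norm).

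The key step is then to propagate this single-coordinate closeness to \emph{all} coordinates using the full-interactivity hypothesis. Because $\BC^{(n)}_L(\Bx)$ is FI, $\diam\varPi\Bx=\max_{k\neq\ell}|x_k-x_\ell|\leq n(2L+r_0)$, and likewise $\diam\varPi\By\leq n(2L+r_0)$. So for any coordinate $k$ we can write $|x_k-y_k|\leq |x_k-x_i|+|x_i-y_j|+|y_j-y_k|\leq n(2L+r_0)+2L+n(2L+r_0)$. This gives $|\Bx-\By|=\max_k|x_k-y_k|\leq 2n(2L+r_0)+2L$. Using $L\geq 2r_0$, i.e. $r_0\leq L/2$, this is at most $2n(2L+L/2)+2L=5nL+2L$, which for $n\geq 1$ is strictly less than $7nL$. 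This contradicts $|\Bx-\By|\geq 7nL$, so the projections must be disjoint.

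The only mild subtlety — and the place to be careful — is bookkeeping with the sup-norm versus the per-coordinate $d$-dimensional distances and the precise side lengths of the cubes $C^{(1)}_{L}(\cdot)$ (side $2L$), so that the ``$\leq 2L$'' from an intersection of single-site cubes and the ``$n(2L+r_0)$'' diameter bound are applied consistently; once the constants are lined up, the arithmetic $2n(2L+r_0)+2L<7nL$ under $r_0\le L/2$ closes the argument. There is no real analytic obstacle here; the statement is purely combinatorial-geometric and the triangle inequality does all the work.
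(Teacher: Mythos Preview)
Your argument is correct and is essentially the contrapositive of the paper's own proof: the paper fixes an index $j_0$ with $|x_{j_0}-y_{j_0}|\geq 7nL$ and uses the FI diameter bound $\diam\varPi\Bx,\diam\varPi\By\leq n(2L+r_0)$ together with the triangle inequality to show $|x_i-y_j|\geq 2nL$ for all $i,j$, hence $\dist(C^{(1)}_L(x_i),C^{(1)}_L(y_j))\geq 2(n-1)L$; you instead assume some $|x_i-y_j|\leq 2L$ and push the same inequalities in the other direction. One small bookkeeping point: your final inequality $5nL+2L<7nL$ is strict only for $n\geq 2$ (for $n=1$ it is an equality), but the $n=1$ case is trivial anyway since then $|\Bx-\By|=|x_1-y_1|\geq 7L>2L$ already forces disjoint projections---and the paper's bound $2(n-1)L$ degenerates to $0$ at $n=1$ for the same reason.
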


\begin{proof}
See the proof in the Appendix Section \ref{sec:appendix}.
\end{proof}

Given an $n$-particle cube $\BC^{(n)}_L(\Bu)$ and $E\in \DR$, we denote by

\begin{itemize}
\item 
$M^{\sep}_{\PI}(\BC^{(n)}_{L_{k+1}}(\Bu),E)$ the maximal number of pairwise separable $(E,m)$-singular PI cubes $\BC^{(n)}_{L_k}(\Bu^{(j)})\subset \BC^{(n)}_{L_{k+1}}(\Bu)$;\\
\item
by $M_{\PI}(\BC^{(n)}_{L_{k+1}}(\Bu),E)$, the maximal number of (not necessary separable) $(E,m)$-singular PI-cubes $\BC^{(n)}_{L_k}(\Bu^{(j)})$ contain in $\BC^{(n)}_{L_{k+1}}(\Bu)$ with $\Bu^{(j)}, \Bu^{(j')}\DZ^{nd}$ and $|\Bu^{(j)}-\Bu^{(j')}|\geq 7NL_k$ for all $j\neq j'$;\\
\item
$M_{\FI}(\BC^{(n)}_{L_{k+1}}(\Bu),E)$ the maximal number of $(E,m)$-singular FI cubes  $\BC^{(n)}_{L_k}(\Bu^{(j)})\subset \BC^{(n)}_{L_{k+1}}(\Bu)$ with $|\Bu^{(j)}-\Bu^{(j')}|\geq 7NL_k$ for all $j\neq j'$\footnote{Note that by Lemma \ref{lem:FI.cubes}; two FI cubes $\BC^{(n)}_{L_k}(\Bu^{(j)}) $ and $\BC^{(n)}_{L_k}(\Bu^{(j')})$ with $|\Bu^{(j)}-\Bu^{(j')}|\geq 7NL_k$ are automatically separable.};\\
\item
$M_{\PI}(\BC^{(n)}_{L_{k+1}}(\Bu),I):= \sup_{E\in I} M_{\PI}(\BC^{(n)}_{L_{k+1}}(\Bu),E)$;\\
\item
$M_{\FI}(\BC^{(n)}_{L_{k+1}}(\Bu),I):= \sup_{E\in I} M_{\FI}(\BC^{(n)}_{L_{k+1}}(\Bu),E)$;\\
\item
$M(\BC^{(n)}_{L_{k+1}}(\Bu),E)$ the maximal number of $(E,m)$-singular cubes $\BC^{(n)}_{L_k}(\Bu^{(j)})\subset\BC^{(n)}_{L_{k+1}}(\Bu)$ with $\dist(\Bu^{(j)},\partial\BC^{(n)}_{L_{k+1}}(\Bu))\geq 2L_k$ and $|\Bu^{(j)}-\Bu^{(j')}|\geq 7NL_k$ for all $j\neq j'$;\\
\item
$M^{\sep}(\BC^{(n)}_{L_{k+1}}(\Bu),E)$ the maximal number of pairwise separable $(E,m)$-singular cube $\BC^{(n)}_{L_k}(\Bu^{(j)})\subset \BC^{(n)}_{L_{k+1}}(\Bu)$;
\end{itemize}

Clearly, 
\[
M_{\PI}(\BC^{(n)}_{L_{k+1}}(\Bu),E)+M_{\FI}(\BC^{(n)}_{L_{k+1}}(\Bu),E)\geq M(\BC^{(n)}_{L_{k+1}}(\Bu),E).
\]

\subsection{Pairs of partially interactive cubes} \label{sec:PI.cubes}

Let $\BC^{(n)}_{L_{k+1}}(\Bu)=\BC^{(n')}_{L_{k+1}}(\Bu')\times\BC^{(n'')}_{L_{k+1}}(\Bu'')$ be a PI-cube. We also write $\Bx=(\Bx',\Bx'')$ for any point $\Bx\in\BC^{(n)}_{L_{k+1}}(\Bu)$, in the same way as $(\Bu', \Bu'')$. So the corresponding  Hamiltonian $\BH^{(n)}_{\BC^{(n)}_{L_{k+1}}(\Bu)}$ is written in the form:

\[
\BH^{(n)}_{\BC^{(n)}_{L_{k+1}}(\Bu)}\BPsi(\Bx)=(-\BDelta\BPsi)(\Bx) +\left[\BU(\Bx')+\BV(\Bx',\omega) + \BU(\Bx'') +\BV(\Bx'',\omega)\right]\BPsi(\Bx)
\]
or in compact form:  
\[
\BH^{(n)}_{\BC^{(n)}_{L_{k+1}}(\Bu)}=\BH^{(n')}_{\BC^{(n')}_{L_{k+1}}(\Bu')}\otimes\BI+\BI\otimes\BH^{(n'')}_{\BC^{(n'')}_{L_{k+1}}(\Bu'')}
\]

\begin{definition}\label{def:localized}
Let $n\geq 2$ and $\BC^{(n')}_{L_k}(\Bu')\times\BC^{(n'')}_{L_k}(\Bu'')$ be the decomposition of the PI cube $\BC^{(n)}_{L_k}(\Bu)$. Then $\BC^{(n)}_{L_k}(\Bu)$ is called
\begin{enumerate}
 \item[(i)]
$m$-left-localized if for any  normalized eigenfunction $\varphi^{(n')}$ of the restricted Hamiltonian $\BH^{(n')}_{\BC^{(n')}_L( \Bu')}(\omega)$, we have
\[
\| \Bone_{\BC^{(n',out)}(\Bu')}\varphi^{(n')}\|\leq \ee^{-2\gamma(m,L_k,n')L_k}
\]
otherwise it is called $m$-non-left-localized,\\
\item[(ii)]
$m$-right-localized if for any normalized eigenfunction $\varphi^{(n'')}$ of the restricted Hamiltonian $\BH^{(n'')}_{\BC^{(n'')}_L( \Bu'')}(\omega)$, we have
\[
\| \Bone_{\BC^{(n'',out)}(\Bu'')}\varphi^{(n'')}\|\leq \ee^{-2\gamma(m,L_k,n'')L_k}
\]
otherwise it is called $m$-non-right localized,\\
\item[(iii)]
$m$-localized if it is $m$-left-localized and $m$-right-localized. Otherwise it is called $m$-non-localized
\end{enumerate}
\end{definition}
\begin{lemma}\label{lem:NR.NS}
Let $E\in I$ and $\BC^{(n)}_{L_k}(\Bu)$ be a PI cube. Assume that $\BC^{(n)}_{L_k}(\Bu)$ is $E$-NR and $m$-localized. Then the cube $\BC^{(n)}_{L_k}(\Bu)$ is $(E,m)$-NS.
\end{lemma}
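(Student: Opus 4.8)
The plan is to exploit the tensor-product structure of the PI cube. Write $\BC^{(n)}_{L_k}(\Bu)=\BC^{(n')}_{L_k}(\Bu')\times\BC^{(n'')}_{L_k}(\Bu'')$ with $\dist(\varPi\BC^{(n')}_{L_k}(\Bu'),\varPi\BC^{(n'')}_{L_k}(\Bu''))\ge r_0$, so that the interaction $\BU$ between the two groups vanishes on the cube and $\BH^{(n)}_{\BC^{(n)}_{L_k}(\Bu)}=\BH^{(n')}_{\BC^{(n')}_{L_k}(\Bu')}\otimes\BI+\BI\otimes\BH^{(n'')}_{\BC^{(n'')}_{L_k}(\Bu'')}$. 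Diagonalize each factor, $\{\lambda_i,\varphi_i\}$ and $\{\mu_j,\phi_j\}$, and expand the resolvent as
\[
\BG^{(n)}_{\BC^{(n)}_{L_k}(\Bu)}(E)=\sum_{i,j}\frac{\BP_{\varphi_i}\otimes\BP_{\phi_j}}{\lambda_i+\mu_j-E}.
\]
I would first regroup this sum: fix $i$, sum over $j$, to get $\sum_i \BP_{\varphi_i}\otimes \BG^{(n'')}_{\BC^{(n'')}_{L_k}(\Bu'')}(E-\lambda_i)$; symmetrically one may fix $j$ and sum over $i$. The point of having two representations is that one controls decay in the first set of coordinates and the other in the second set, and a product of characteristic functions on $\BC^{(n,out)}$ forces one of the two "directions" to actually be traversed.

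Next I would insert the hypotheses. Since $\BC^{(n)}_{L_k}(\Bu)$ is $E$-NR, we have $\dist(E,\sigma(\BH^{(n)}_{\BC^{(n)}_{L_k}(\Bu)}))\ge \ee^{-L_k^{1/2}}$, hence every denominator $|\lambda_i+\mu_j-E|\ge \ee^{-L_k^{1/2}}$, i.e. $\|\BG^{(n'')}_{\BC^{(n'')}_{L_k}(\Bu'')}(E-\lambda_i)\|\le \ee^{L_k^{1/2}}$ and likewise for the other factor. Since $\BC^{(n)}_{L_k}(\Bu)$ is $m$-localized, it is both $m$-left- and $m$-right-localized, so every normalized eigenfunction $\varphi_i$ satisfies $\|\Bone_{\BC^{(n',out)}_{L_k}(\Bu')}\varphi_i\|\le \ee^{-2\gamma(m,L_k,n')L_k}$ and every $\phi_j$ satisfies the analogous bound with $n''$. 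Using the elementary inclusion that $\BC^{(n,out)}_{L_k}(\Bu)\subset \big(\BC^{(n',out)}_{L_k}(\Bu')\times\BC^{(n'')}_{L_k}(\Bu'')\big)\cup\big(\BC^{(n')}_{L_k}(\Bu')\times\BC^{(n'',out)}_{L_k}(\Bu'')\big)$ — leaving the full cube through its boundary layer means leaving one of the two factors through its own boundary layer — I would bound $\|\Bone^{(n,out)}_{\Bu}\BG^{(n)}_{\BC^{(n)}_{L_k}(\Bu)}(E)\Bone^{(n,int)}_{\Bu}\|$ by the sum of two terms, one for each factor, and in the term corresponding to the first factor use the left-localization of the $\varphi_i$ together with $\|\BG^{(n'')}(E-\lambda_i)\|\le\ee^{L_k^{1/2}}$, and symmetrically for the second.

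The resulting estimate is schematically $\big(\sum_i 1\big)\,\ee^{-2\gamma(m,L_k,n')L_k}\,\ee^{L_k^{1/2}}$ plus the symmetric term, where the number of eigenvalues is controlled by Weyl's law: $\#\{i\}\le C_{Weyl}|\BC^{(n')}_{L_k}(\Bu')|\le C_{Weyl}(2L_k)^{n'd}$, which is only polynomial in $L_k$. Since $\gamma(m,L_k,n')=m(1+L_k^{-1/8})^{N-n'+1}\ge m(1+L_k^{-1/8})^{N-n+1}=\gamma(m,L_k,n)$ (as $n'<n$), the exponent $-2\gamma(m,L_k,n')L_k+L_k^{1/2}+\,O(\ln L_k)\le -\gamma(m,L_k,n)L_k$ once $L_k\ge L_*(m,N,d)$, which is exactly the definition of $\BC^{(n)}_{L_k}(\Bu)$ being $(E,m)$-NS. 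I expect the main obstacle to be purely bookkeeping rather than conceptual: namely checking carefully the geometric inclusion for the outer/inner boundary layers under the product decomposition (so that the two-term split is legitimate and no cross term is lost), and tracking the constants so that the polynomial Weyl factor and the $\ee^{L_k^{1/2}}$ loss from non-resonance are genuinely absorbed by the gap between $2\gamma(m,L_k,n')$ and $\gamma(m,L_k,n)$ for all large $L_k$; this is the same mechanism already used in Lemma \ref{lem:NR.NS.gamma}, so I would model the write-up on that proof.
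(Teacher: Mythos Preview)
Your plan is exactly the paper's: its entire proof reads ``We proceed as in Lemma~\ref{lem:NR.NS.gamma}'', and you have correctly reconstructed every ingredient of that argument in the two-factor PI setting --- the tensor resolvent expansion $\sum_i \BP_{\varphi_i}\otimes\BG^{(n'')}(E-\lambda_i)$, the split of $\Bone^{(n,out)}$ along the two factors, the use of $E$-NR to bound $\|\BG^{(n'')}(E-\lambda_i)\|\le\ee^{L_k^{1/2}}$, the use of $m$-left/right-localization for $\|\Bone_{\BC^{(n',out)}}\varphi_i\|$, and the monotonicity $\gamma(m,L_k,n')\ge\gamma(m,L_k,n)$ from $n'<n$.

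One point does need correction. Your sentence ``the number of eigenvalues is controlled by Weyl's law: $\#\{i\}\le C_{Weyl}|\BC^{(n')}_{L_k}(\Bu')|$'' is false: the Dirichlet spectrum of $\BH^{(n')}$ on a bounded cube is infinite, so $\sum_i 1=\infty$ and your schematic estimate $(\sum_i 1)\,\ee^{-2\gamma L_k}\ee^{L_k^{1/2}}$ is useless as written. What Weyl's law actually gives is a threshold index $j^*$ (polynomial in $L_k$) such that $\lambda_i\ge E^*$ for all $i>j^*$. You then split the eigenvalue sum into a head $i\le j^*$ --- finitely many terms, each bounded by (localization)$\times\ee^{L_k^{1/2}}$ --- and a tail $i>j^*$, where $E-\lambda_i$ lies far below $\inf\sigma(\BH^{(n'')})$ so that $\|\BG^{(n'')}(E-\lambda_i)\|$ is small directly and the series converges. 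This head/tail split is precisely what is done in the proof of Lemma~\ref{lem:NR.NS.gamma}, so your stated intention to ``model the write-up on that proof'' would lead you to it; just be aware that the polynomial Weyl factor you invoke is a count of eigenvalues \emph{below a fixed threshold}, not of all eigenvalues.
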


\begin{proof}
We proceed as in Lemma \ref{lem:NR.NS.gamma}.
\end{proof}
Now, before proving the main results of this Subsection concerning the probability of two PI cubes to be singular at the same energy we need first to estimate the one for a non-localized cube given in the statement below
\begin{lemma}\label{lem:loc.prob}
Let $\BC^{(n)}_{L_k}(\Bu)$ be a PI cube. Then 
\[
\prob\{\text{$\BC^{(n)}_{L_k}(\Bu)$ is $m$-non-localized}\}\leq L_k^{-4p4^{N-n}}.
\]
\end{lemma}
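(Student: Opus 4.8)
The plan is to bound the probability that a PI cube $\BC^{(n)}_{L_k}(\Bu) = \BC^{(n')}_{L_k}(\Bu') \times \BC^{(n'')}_{L_k}(\Bu'')$ is $m$-non-localized by splitting the event into ``non-left-localized'' and ``non-right-localized'' parts. By Definition \ref{def:localized}, $\BC^{(n)}_{L_k}(\Bu)$ is $m$-non-localized only if it is either $m$-non-left-localized or $m$-non-right-localized, so
\[
\prob\{\text{$\BC^{(n)}_{L_k}(\Bu)$ is $m$-non-localized}\} \leq \prob\{\text{$m$-non-left-loc.}\} + \prob\{\text{$m$-non-right-loc.}\}.
\]
It therefore suffices to bound each summand by $\tfrac12 L_k^{-4p4^{N-n}}$. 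I will treat the left factor; the right factor is symmetric.

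\textbf{The left factor.} If $\BC^{(n)}_{L_k}(\Bu)$ is $m$-non-left-localized, then by definition there exists a normalized eigenfunction $\varphi^{(n')}$ of $\BH^{(n')}_{\BC^{(n')}_{L_k}(\Bu')}(\omega)$ with $\|\Bone_{\BC^{(n',out)}(\Bu')}\varphi^{(n')}\| > \ee^{-2\gamma(m,L_k,n')L_k}$. The key point is that $n' = \card\CJ \leq n-1 < n$, so by the induction hypothesis of the multi-scale analysis (property $\dskprimenN$ for the smaller number of particles $n'$, valid for all scales $\geq 0$) we may invoke Theorem \ref{thm:ILS} applied to the $n'$-particle Hamiltonian: there is $\tilde\mu \in (0,\infty)$ with
\[
\esm\left[\|\Bone_{\BC^{(n',out)}_{L_k}(\Bu')}\BG^{(n')}_{\BC^{(n')}_{L_k}(\Bu')}(E)\Bone^{(n',int)}_{\BC^{(n',int)}_{L_k}(\Bu')}\|\right]\leq \ee^{-\tilde\mu L_k},
\]
and hence, spectrally decomposing the resolvent as in the proof of Lemma \ref{lem:NR.NS.gamma}, the expectation of $\sum_j \|\Bone_{\BC^{(n',out)}(\Bu')}\varphi^{(n')}_j\|$ over all eigenfunctions (grouped into a finite Weyl part and an exponentially small tail) is bounded by $\ee^{-\tilde\mu_1 L_k}$ for a slightly smaller $\tilde\mu_1$. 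Choosing $m$ small enough that $2\gamma(m,L_k,n') \leq 2^{N+1}m \leq \tilde\mu_1 / 2$ (consistent with the choice $m^* = 2^{-N-1}\tilde\mu$ made in the proof of Theorem \ref{thm:ILS}), a Markov/Chebyshev inequality gives
\[
\prob\{\text{$\BC^{(n)}_{L_k}(\Bu)$ is $m$-non-left-localized}\} \leq \frac{\esm\left[\sum_j \|\Bone_{\BC^{(n',out)}(\Bu')}\varphi^{(n')}_j\|\right]}{\ee^{-2\gamma(m,L_k,n')L_k}} \leq \ee^{(-\tilde\mu_1 + 2^{N+1}m)L_k} \leq \ee^{-c L_k},
\]
which is smaller than $\tfrac12 L_k^{-4p4^{N-n}}$ once $L_0$ (hence $L_k$) is large enough, since an exponential beats any fixed power.

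\textbf{Main obstacle.} The delicate point is bookkeeping the eigenfunction sum: the finite-volume operator $\BH^{(n')}_{\BC^{(n')}_{L_k}(\Bu')}$ has infinitely many eigenvalues, so the bare sum $\sum_j \esm[\|\Bone^{(n',out)}\varphi^{(n')}_j\|]$ is a priori divergent and one cannot simply apply Theorem \ref{thm:ILS} term by term with a uniform bound. The resolution, exactly as in Lemma \ref{lem:NR.NS.gamma} and the proof of Theorem \ref{thm:ILS}, is to invoke Weyl's law: there is $E^* $ (arbitrarily large) and $j^* \sim C_{Weyl}|\BC^{(1)}_{L_k}|$ such that $\lambda_j \geq E^*$ for $j \geq j^*$, so the high-energy tail is controlled by a crude Combes--Thomas bound (Theorem \ref{thm:CT}), which is summable and exponentially small in $L_k$, while the remaining finite part has at most polynomially-in-$L_k$ many terms, each handled by Theorem \ref{thm:ILS}; the extra polynomial factor $|\BC^{(1)}_{L_k}|$ is harmless against the exponential $\ee^{-\tilde\mu_1 L_k}$. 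One also needs to double-check that $\gamma(m,L_k,n')$ involves the exponent $N-n'+1$ (not $N-n+1$), so the required smallness of $m$ is uniform over $1 \le n' < n \le N$; this is ensured by the choice of $m$ in terms of $N$ alone.
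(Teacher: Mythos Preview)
Your proposal is correct and follows essentially the same approach as the paper. The paper's proof is only a one-sentence sketch (``combines the ideas of Theorem~\ref{thm:ILS.np} \dots\ and the induction assertion of localization given in Theorem~\ref{thm:ILS}''), and your argument unpacks exactly those two ingredients: the Markov/Chebyshev plus Weyl-law bookkeeping from the proof of Theorem~\ref{thm:ILS.np}, with the input exponential decay now supplied by the $n'$-particle induction hypothesis (Theorem~\ref{thm:ILS}) rather than by single-particle localization.
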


\begin{proof}
The proof combines the ideas of Theorem \ref{thm:ILS.np} in the multi-particle systems without interaction and the induction assertion of localization given in Theorem \ref{thm:ILS}.
\end{proof}
Now, we state the main result of this Subsection, i.e., the probability bound of two PI cubes to be singular at the same energy belonging to the compact  interval $I_0$ introduced at the begening of the Section.

\begin{theorem}\label{thm:PI.cubes}
Let $2\leq n\leq N$. There exists $L_1^*=L_1^*(N,d)\in(0,\infty)$ such that if $L_0\geq L_1^*$ and if for $k\geq 0$ $\dsknprimeN$ holds true for any $n'\in(1,n)$ then $\dskonN$ holds for any pair of separable PI cubes $\BC^{(n)}_{L_{k+1}}(\Bx)$ and $\BC^{(n)}_{L_{k+1}}(\By)$.
\end{theorem}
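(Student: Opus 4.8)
The plan is to estimate the probability that both $\BC^{(n)}_{L_{k+1}}(\Bx)$ and $\BC^{(n)}_{L_{k+1}}(\By)$ are $(E,m)$-singular for a common energy $E \in I_0$, given that the pair is separable and both cubes are PI. The starting point is the tensor decomposition of a PI cube: writing $\BC^{(n)}_{L_{k+1}}(\Bx) = \BC^{(n')}_{L_{k+1}}(\Bx') \times \BC^{(n'')}_{L_{k+1}}(\Bx'')$ with $\dist(\varPi\BC^{(n')}_{L_{k+1}}(\Bx'), \varPi\BC^{(n'')}_{L_{k+1}}(\Bx'')) \geq r_0$, the interaction potential $\BU$ vanishes on the cube, so $\BH^{(n)}_{\BC^{(n)}_{L_{k+1}}(\Bx)} = \BH^{(n')}_{\BC^{(n')}_{L_{k+1}}(\Bx')} \otimes \BI + \BI \otimes \BH^{(n'')}_{\BC^{(n'')}_{L_{k+1}}(\Bx'')}$, exactly as in the non-interacting situation treated in Lemma \ref{lem:NR.NS.gamma} and Lemma \ref{lem:NR.NS}.

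First I would invoke Lemma \ref{lem:NR.NS}: if $\BC^{(n)}_{L_{k+1}}(\Bx)$ is $E$-NR and $m$-localized, it is automatically $(E,m)$-NS. Contrapositively, if it is $(E,m)$-S then it is either not $E$-NR (hence not $E$-CNR, or its resolvent is large) or it is $m$-non-localized. So I would split the bad event into: (a) $\BC^{(n)}_{L_{k+1}}(\Bx)$ is $m$-non-localized, whose probability is at most $L_{k+1}^{-4p4^{N-n}}$ by Lemma \ref{lem:loc.prob} (and similarly for $\By$); and (b) both cubes are $m$-localized but the pair fails to be $E$-CNR for some $E \in I_0$. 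For part (b), when $\BC^{(n)}_{L_{k+1}}(\Bx)$ is $m$-localized, the tensor structure forces any $(E,m)$-singularity to come from the single-site resolvent $G^{(1)}$-type factors being near-resonant, which means $\dist(E, \sigma(\BH^{(n)}_{\BC^{(n)}_{L_{k+1}}(\Bx)})) $ is small, i.e., the cube is $E$-R. Then I would apply the two-cube Wegner estimate, Theorem \ref{thm:Wegner}(B): the probability that there exists $E \in \DR$ such that neither cube is $E$-CNR is at most $L_{k+1}^{-p4^{N-n}}$. The separability hypothesis is what makes this clean — distant separable cubes have a coordinate block with disjoint $\varPi$-projections from everything else, so the relevant sub-Hamiltonians are independent and Stollmann's lemma applies as in \cite{Eka19a}.

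I also need to handle the sub-case where one cube, say the left factor $\BC^{(n')}_{L_{k+1}}(\Bx')$, is non-localized but $n'$ could itself be $\geq 2$ — here I would feed in the induction hypothesis $\dsknprimeN$ for $n' \in (1,n)$ together with Theorem \ref{thm:ILS} (the EDI-based eigenfunction decay bound that $\dskprimenN$ implies), which is precisely what makes Lemma \ref{lem:loc.prob} available at scale $L_{k+1}$; combining with the scaling $L_{k+1} \sim L_k^{\alpha}$, $\alpha = 3/2$, the exponents $-4p4^{N-n}$ beat the target $-2p4^{N-n+1} = -8p4^{N-n}$... wait, one must be careful: the target for $\dskonN$ at scale $L_{k+1}$ is $L_{k+1}^{-2p4^{N-n}}$, and $L_{k+1}^{-4p4^{N-n}} \leq \tfrac13 L_{k+1}^{-2p4^{N-n}}$ for $L_0$ large, while $L_{k+1}^{-p4^{N-n}}$ from Wegner also needs absorbing — this is where the condition $L_0 \geq L_1^*(N,d)$ enters, since for large $L_0$ the Wegner term $L_{k+1}^{-p4^{N-n}}$ must be compared against $L_{k+1}^{-2p4^{N-n}}$ and that does NOT work directly. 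So in fact the Wegner exponent $p4^{N-n}$ must be upgraded: one uses that for a separable PI pair, the near-resonance of BOTH cubes is needed, and by independence of the disjoint projection blocks the probability factorizes, giving a product bound that doubles the exponent to $2p4^{N-n}$ or better. I would make this factorization explicit.

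The main obstacle, as the previous paragraph shows, is the bookkeeping of exponents: ensuring that each contributing probability — the non-localized cube bound, and the two-cube resonance bound with its independence-driven improvement — is at most a fixed fraction (say $\tfrac13$) of $L_{k+1}^{-2p4^{N-n}}$, uniformly for $L_0 \geq L_1^*(N,d)$. The non-localized bound is comfortable because $4p > 2p$. The delicate point is the resonance bound: one cannot simply use Theorem \ref{thm:Wegner}(B) with its stated exponent; instead one must exploit that for a \emph{separable PI} pair the projections $\varPi_{\CJ}\BC^{(n)}_{L_{k+1}}(\Bx)$ (for the separating block $\CJ$) are disjoint from the corresponding parts of $\BC^{(n)}_{L_{k+1}}(\By)$, so the events "$\BC^{(n)}_{L_{k+1}}(\Bx)$ is $E$-R" and "$\BC^{(n)}_{L_{k+1}}(\By)$ is $E$-R" are controlled by independent families of random variables; then a Stollmann-type argument over a covering $\{E_\nu\}$ of $I_0$ of exponentially small spacing gives a probability of order $L_{k+1}^{-2p4^{N-n}}$, matching the target with room to spare. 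Assembling the three pieces via a union bound completes the verification of $\dskonN$ for the separable PI pair.
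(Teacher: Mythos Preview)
Your decomposition --- split the bad event into the two non-localization events $\CN_{\Bx}$, $\CN_{\By}$ and the joint-resonance event, then invoke Lemma~\ref{lem:NR.NS}, Lemma~\ref{lem:loc.prob}, and the two-cube Wegner bound --- is exactly the paper's argument. The exponent mismatch you flag for the Wegner term is a typo in the statement of Theorem~\ref{thm:Wegner}(B): elsewhere (e.g.\ in the proofs of Theorems~\ref{thm:FI.cubes} and~\ref{thm:MI.cubes}) the two-cube bound is used with exponent $L^{-4^{N}p}$, which for $n\geq 2$ already dominates $L_{k+1}^{-2p\,4^{N-n}}$, so your proposed independence-based re-derivation of the Wegner estimate is unnecessary.
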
 
\begin{proof}
Let $\BC^{(n)}_{L_{k+1}}(\Bx)$ and $\BC^{(n)}_{L_{k+1}}(\By)$ be two separable PI cubes. Consider the events:
\begin{align*}
 \rm{B}_{k+1}&=\{\text{$\exists E\in I_0:$ $\BC^{(n)}_{L_{k+1}}(\Bx)$ $\BC^{(n)}_{L_{k+1}}(\By)$ are $(E,m)$-S}\},\\
 \rm{R}&=\{\text{$\exists E\in I_0$: $\BC^{(n)}_{L_{k+1}}(\Bx)$  and $\BC^{(n)}_{L_{k+1}}(\By)$ are $E$-R}\},\\
\CN_{\Bx}&=\{\text{$\BC^{(n)}_{L_{k+1}}(\Bx)$ is $m$-non-localized}\},\\
\CN_{\By}&=\{\text{$\BC^{(n)}_{L_{k+1}}\By)$ is $m$-non-localized}\}\\
\end{align*} 
If $\omega\in\rm{B}_{k+1}\setminus \rm{R}$ then $\forall \in I_0$, $\BC^{(n)}_{L_{k+1}}(\Bx)$ or $\BC^{(n)}_{L_{k+1}}(\By)$ is $E$-NR, then it must be $m$-non-localized: otherwise it would have been $(E,m)$-NS by Lemma \ref{lem:NR.NS}. Similarly if $\BC^{(n)}_{L_{k+1}}$ is $E$-NR,  then it must be $m$-non-localized. This implies that 
\[
\rm{B}_{k+1}\subset \rm{R}\cup \CN_{\Bx}\cup\CN_{\By}
\] 
Therefore, using Theorem \ref{thm:Wegner} and Lemma \ref{lem:loc.prob}, we have  
\begin{align*}
\prob\{\rm{B}_{k+1}\}&\leq \prob\{\rm{R}+\prob\{\CN_{\Bx}\}+ \prob\{\CN_{\By}\}\\
&L_{k+1}^{-p4^{N-n}}+\frac{1}{2}L_{k+1}^{-4p4^{N-n}}+\frac{1}{2}L_{k+1}^{-4p4^{N-n}}
\end{align*}
Finally
\begin{equation}\label{eq:PI.prob}
\prob\{\rm{B}_{k+1}\}\leq L_{k+1}^{-p4^{N-n}}+L_{k+1}^{-4p4^{N-n}}\leq L_{k+1}^{-2p4^{N-n}}
\end{equation}
which proves the result.
\end{proof}

For subsequent calculations and proofs we give the following two Lemmas:

\begin{lemma}\label{lem:M}
If $M(\BC^{(n)}_{L_{k+1}}(\Bu),E)\geq\kappa(n)+2$ with $\kappa(n)=n^n,$ then $M^{\sep}(\BC^{(n)}_{L_{k+1}}(\Bu),E)\geq 2$. Similarly if $M_\PI(\BC^{(n)}_{L_{k+1}}(\Bu),E)\geq\kappa(n) +2$ then $M^{\sep}_\PI(\BC^{(n)}_{L_{k+1}}(\Bu),E)\geq 2$.
\end{lemma}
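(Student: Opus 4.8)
The plan is to argue by a covering/pigeonhole dichotomy using part (A) of the separability Lemma \ref{lem:separable}. Suppose $M(\BC^{(n)}_{L_{k+1}}(\Bu),E)\geq \kappa(n)+2$ with $\kappa(n)=n^n$, and let $\BC^{(n)}_{L_k}(\Bu^{(1)}),\ldots,\BC^{(n)}_{L_k}(\Bu^{(\kappa(n)+2)})$ be $(E,m)$-singular cubes with centers $\Bu^{(j)}\in\DZ^{nd}$ satisfying $|\Bu^{(j)}-\Bu^{(j')}|\geq 7NL_k$ for all $j\neq j'$ (this pairwise-distance condition is part of the definition of $M$). Fix $j_0=1$. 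By Lemma \ref{lem:separable}(A) applied at scale $L=L_k$ with center $\Bx=\Bu^{(1)}$, there is a collection of $\kappa(n)=n^n$ cubes $\BC^{(n)}_{2nL_k}(\Bx^{(\ell)})$, $\ell=1,\ldots,\kappa(n)$, such that any $\By\in\DZ^{nd}$ with $|\By-\Bu^{(1)}|\geq 7NL_k$ and $\By\notin\bigcup_{\ell}\BC^{(n)}_{2nL_k}(\Bx^{(\ell)})$ gives a separable pair $(\BC^{(n)}_{L_k}(\Bu^{(1)}),\BC^{(n)}_{L_k}(\By))$.

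The second step is a counting argument. Among the remaining $\kappa(n)+1$ centers $\Bu^{(2)},\ldots,\Bu^{(\kappa(n)+2)}$, each already satisfies $|\Bu^{(j)}-\Bu^{(1)}|\geq 7NL_k$. If none of these were separable from $\BC^{(n)}_{L_k}(\Bu^{(1)})$, then every $\Bu^{(j)}$, $j\geq 2$, would lie in $\bigcup_{\ell=1}^{\kappa(n)}\BC^{(n)}_{2nL_k}(\Bx^{(\ell)})$; since there are $\kappa(n)+1$ such centers and only $\kappa(n)$ cubes in the cover, by the pigeonhole principle two of them, say $\Bu^{(j)}$ and $\Bu^{(j')}$ with $j\neq j'$, would lie in the same cube $\BC^{(n)}_{2nL_k}(\Bx^{(\ell)})$, forcing $|\Bu^{(j)}-\Bu^{(j')}|\leq 4nL_k < 7NL_k$ (using $n\leq N$), contradicting the pairwise-separation of the centers. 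Hence some $\Bu^{(j)}$, $j\geq 2$, yields a separable pair with $\BC^{(n)}_{L_k}(\Bu^{(1)})$, and both cubes in that pair are $(E,m)$-singular, so $M^{\sep}(\BC^{(n)}_{L_{k+1}}(\Bu),E)\geq 2$.

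For the second assertion, the argument is identical: the PI property of a cube depends only on the geometry of its center $\Bu$ (via Definition \ref{def:PI.FI}), so restricting the collection to singular \emph{PI} subcubes changes nothing in the covering/pigeonhole step; the cube selected from the cover is again PI and singular, giving $M^{\sep}_\PI(\BC^{(n)}_{L_{k+1}}(\Bu),E)\geq 2$. The only point requiring care—and the main (minor) obstacle—is checking the numerology $4nL_k<7NL_k$ that makes two centers in a common $2nL_k$-cube violate the $7NL_k$-separation; this holds since $n\leq N$, so $4n\leq 4N<7N$. Everything else is bookkeeping on top of Lemma \ref{lem:separable}(A).
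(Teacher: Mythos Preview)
Your argument is correct and is essentially the paper's own proof: fix one of the $\kappa(n)+2$ singular centers, invoke Lemma~\ref{lem:separable}(A) to get the $\kappa(n)$ covering cubes $\BC^{(n)}_{2nL_k}(\Bx^{(\ell)})$, and use pigeonhole on the remaining $\kappa(n)+1$ centers together with the diameter bound $4nL_k<7NL_k$ to force a separable pair. The paper phrases this as a proof by contradiction (assuming $M^{\sep}<2$) rather than as the direct dichotomy you wrote, but the content and the key numerical check are identical.
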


\begin{proof}
See the appendix Section \ref{sec:appendix}.
\end{proof}
\begin{lemma}\label{lem:prob.M}
With the above notations, assume that $\dskunnprimeN$ holds true for all $n'\in[1,n)$ then 
\[
\prob\left\{M_\PI(\BC^{(n)}_{L_{k+1}}(\Bu),I)\geq \kappa(n)+2\right\}\leq \frac{3^{2nd}}{2}L_{k+1}^{2nd}\left(L_k^{-4^Np}+L_k^{-4p4^{N-n}}\right)
\]
\end{lemma}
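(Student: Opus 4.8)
The plan is the standard ``counting of singular boxes'' reduction, brought down via Lemma~\ref{lem:M} to a single pair estimate. Since $M_{\PI}(\BC^{(n)}_{L_{k+1}}(\Bu),\cdot)$ is integer valued, the inequality $M_{\PI}(\BC^{(n)}_{L_{k+1}}(\Bu),I_0)\ge\kappa(n)+2$ forces some $E\in I_0$ with $M_{\PI}(\BC^{(n)}_{L_{k+1}}(\Bu),E)\ge\kappa(n)+2$, whence, by Lemma~\ref{lem:M}, $M^{\sep}_{\PI}(\BC^{(n)}_{L_{k+1}}(\Bu),E)\ge 2$: there are centres $\Bu^{(1)},\Bu^{(2)}\in\DZ^{nd}$ with $\BC^{(n)}_{L_k}(\Bu^{(i)})\subset\BC^{(n)}_{L_{k+1}}(\Bu)$, the pair $\bigl(\BC^{(n)}_{L_k}(\Bu^{(1)}),\BC^{(n)}_{L_k}(\Bu^{(2)})\bigr)$ separable, and both boxes $(E,m)$-singular. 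As $\BC^{(n)}_{L_{k+1}}(\Bu)$ contains at most $(2L_{k+1}+1)^{nd}\le(3L_{k+1})^{nd}$ lattice points, there are at most $\frac{1}{2}3^{2nd}L_{k+1}^{2nd}$ such pairs; this is exactly the prefactor. Hence it suffices to prove, for one fixed separable pair of PI boxes,
\[
\prob\bigl\{\exists E\in I_0:\ \BC^{(n)}_{L_k}(\Bu^{(1)})\text{ and }\BC^{(n)}_{L_k}(\Bu^{(2)})\text{ are both }(E,m)\text{-S}\bigr\}\ \le\ L_k^{-4^Np}+L_k^{-4p4^{N-n}}.
\]

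For this pair estimate I would use Lemma~\ref{lem:NR.NS}: an $(E,m)$-singular PI box is $E$-resonant or $m$-non-localized (otherwise it would be $E$-NR and $m$-localized, hence $(E,m)$-NS). Therefore the event above is contained in
\[
\bigl\{\BC^{(n)}_{L_k}(\Bu^{(1)})\text{ is }m\text{-non-localized}\bigr\}\cup\bigl\{\BC^{(n)}_{L_k}(\Bu^{(2)})\text{ is }m\text{-non-localized}\bigr\}\cup\bigl\{\exists E\in I_0:\text{ both boxes are }E\text{-R}\bigr\}.
\]
Since $\dskunnprimeN$ holds for every $n'\in[1,n)$, Lemma~\ref{lem:loc.prob} applies to the PI decompositions $\BC^{(n)}_{L_k}(\Bu^{(i)})=\BC^{(n')}_{L_k}(\cdot)\times\BC^{(n'')}_{L_k}(\cdot)$ and bounds each of the first two probabilities by $L_k^{-4p4^{N-n}}$, which yields the second term (the spurious factor $2$ is harmless, or is avoided by attributing the non-localized member to a prescribed one of the two centres).

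It remains to bound the double-resonance term $\prob\{\exists E\in I_0:\ \BC^{(n)}_{L_k}(\Bu^{(1)})\text{ and }\BC^{(n)}_{L_k}(\Bu^{(2)})\text{ are both }E\text{-R}\}$, and this is the step I expect to be the main obstacle. It equals the probability that $\dist\bigl(\sigma(\BH^{(n)}_{\BC^{(n)}_{L_k}(\Bu^{(1)})}),\sigma(\BH^{(n)}_{\BC^{(n)}_{L_k}(\Bu^{(2)})})\bigr)\le 2\ee^{-L_k^{1/2}}$, the two spectral points lying in a fixed bounded window containing $I_0$. Here I would use separability: the index set $\CJ$ produced by the definition yields Poisson points inside $\bigcup_{j\in\CJ}C^{(1)}_{L_k}(u^{(1)}_j)$ whose influence does not reach the second box, so conditioning on the remaining randomness freezes $\BH^{(n)}_{\BC^{(n)}_{L_k}(\Bu^{(2)})}$ while keeping a genuine random block in $\BH^{(n)}_{\BC^{(n)}_{L_k}(\Bu^{(1)})}$. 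Weyl's law bounds the number of eigenvalues of the frozen operator in the window by $C_{\mathrm{Weyl}}(2L_k)^{nd}$, and for each such eigenvalue $E_0$ a one-box Wegner estimate --- which in the non-monotone Poisson setting is precisely Theorem~\ref{thm:Wegner}, i.e.\ Stollmann's lemma together with the log-H\"older modulus $s(F_V,\cdot)$ of assumption $\condP$ --- gives $\prob\{\dist(E_0,\sigma(\BH^{(n)}_{\BC^{(n)}_{L_k}(\Bu^{(1)})}))\le 2\ee^{-L_k^{1/2}}\}\le C(2L_k)^{nd}s(F_V,2\ee^{-L_k^{1/2}})\le C'L_k^{nd-A}$, using $|\ln(2\ee^{-L_k^{1/2}})|\asymp L_k^{1/2}$. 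Summing over the $\le C L_k^{nd}$ eigenvalues $E_0$ and integrating over the frozen variables, the double-resonance probability is $\le C''L_k^{2nd-A}$, hence $\le L_k^{-4^Np}$ once $L_0$ (and therefore $L_k$) is large, thanks to the lower bound on $A$ in $\condP$. Adding the three contributions gives the displayed pair bound, and multiplying by the number $\frac{1}{2}3^{2nd}L_{k+1}^{2nd}$ of pairs yields the claim. The two delicate points are (i) extracting genuine conditional independence from the separability structure for the Poisson potential, where the usual monotonicity in the random parameters is absent, and (ii) checking that the log-H\"older Wegner input at the super-exponentially small resolution $\ee^{-L_k^{1/2}}$ dominates the polynomial volume losses with the admissible exponents; should (ii) be too tight, one may instead note that ``both $E$-R'' forces ``neither box is $E$-CNR'' and invoke the pair form Theorem~\ref{thm:Wegner}(B), at the cost of the slightly weaker exponent $L_k^{-p4^{N-n}}$ in place of $L_k^{-4^Np}$.
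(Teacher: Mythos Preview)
Your approach is essentially the same as the paper's: reduce via Lemma~\ref{lem:M} to two separable singular PI cubes, count the at most $\tfrac{1}{2}3^{2nd}L_{k+1}^{2nd}$ pairs, and bound the single-pair probability by the decomposition $\{\text{both }E\text{-R}\}\cup\{\text{one cube }m\text{-non-localized}\}$, invoking Lemma~\ref{lem:loc.prob} (which uses the hypothesis $\dskunnprimeN$) and the Wegner input. The paper's appendix proof is much terser: it simply asserts $\prob\{\mathrm{B}_k\}\le L_k^{-4^Np}+L_k^{-4p4^{N-n}}$, relying implicitly on the identical computation already displayed in the proof of Theorem~\ref{thm:PI.cubes} (at scale $L_k$ rather than $L_{k+1}$), where the resonance term is handled directly by Theorem~\ref{thm:Wegner} as a black box. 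Your detailed reconstruction of the two-volume Wegner bound from the Poisson/log-H\"older structure is therefore more than what is needed here; in the paper's logic that work has been outsourced to the references cited in Theorem~\ref{thm:Wegner}, and your closing remark about invoking Theorem~\ref{thm:Wegner}(B) is in fact exactly what the paper does.
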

\begin{proof}
See the appendix Section \ref{sec:appendix}
\end{proof}

\subsection{Pairs of fully interactive cubes} \label{sec:FI.cubes}
Our aim now is to prove $\dskonN$ for a pair of fully interactive cubes $\BC^{(n)}_{L_{k+1}}(\Bx)$n  and $\BC^{(n)}_{L_{k+1}}(\By)$. We adapt to the continuum a very crucial and hard result obtained in the paper \cite{Eka19a} and which generalized to multi-particle systems  some previous work by von Dreifus and Klein \cite{DK89} on the lattice and Stollmann \cite{Sto01} in the continuum for single particle models.

\begin{lemma}\label{lem:CNR.NS}
Let $J=\kappa(n)+5$ with $\kappa(n)=n^n$ and $E\in\DR$. Suppose that 
\begin{enumerate}
\item[i)] $\BC^{(n)}_{L_{k+1}}(\Bx)$ is $E$-CNR.\\
\item[ii)] $M(\BC^{(n)}_{L_{k+1}}(\Bx),E)\leq J$.
\end{enumerate}
Then there exists $\tilde{L}_2^*(J,N,d)\geq 0$ such that if $L_0\geq \tilde{L}^*_2(J,N,d)$ we have that $\BC^{(n)}_{L_{k+1}}(\Bx)$ is $(E,m)$-NS
\end{lemma}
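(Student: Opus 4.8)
The plan is to run the standard deterministic ``no resonances plus few singular boxes implies non-singular'' argument from the multi-scale analysis, adapting the lattice/continuum version of \cite{DK89,Sto01,Eka19a} to our setting. Throughout set $L=L_{k+1}$, $\ell=L_k$, and abbreviate $\BC=\BC^{(n)}_{L}(\Bx)$ and $\BG=\BG^{(n)}_{\BC}(E)$. By hypothesis (ii) the cube $\BC$ contains at most $J$ pairwise $7N\ell$-distant $(E,m)$-singular subcubes $\BC^{(n)}_{\ell}(\cdot)$ with centers at distance $\geq 2\ell$ from $\partial\BC$; fix a maximal such family and let $\mathcal{S}$ be the union of these bad subcubes together with a $2\ell$-collar of $\partial\BC$. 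The key geometric observation is that any point $\Bz\in\BC$ with $\dist(\Bz,\partial\BC)\geq 2\ell$ whose surrounding subcube $\BC^{(n)}_{\ell}(\Bz)$ is \emph{not} among the bad ones is $(E,m)$-NS, and moreover one can chain together at most $J+1$ such good subcubes to travel from the interior region $\BC^{(n,int)}_{L}(\Bx)$ to near the boundary, avoiding $\mathcal{S}$.

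First I would set up the iteration of the geometric resolvent inequality (Theorem \ref{thm:GRI}). Starting from $\Bone^{(n,out)}_{\Bx}\BG\Bone^{(n,int)}_{\Bx}$, repeatedly insert intermediate good subcubes: each application of GRI costs a factor $C_{geom}$ times the local resolvent norm on a good $(E,m)$-NS subcube, which is $\leq \ee^{-\gamma(m,\ell,n)\ell}$. Since every step advances the ``active'' region by a definite fraction of $\ell$ (the radius of the interior part of a good subcube) and there are only boundedly many ($\leq J+1$, a constant depending on $n$) bad subcubes and one collar to circumvent, the chain terminates after $O(L/\ell)$ good steps at the outer shell, where we use $E$-non-resonance of $\BC$ itself: $\dist(E,\sigma(\BH^{(n)}_{\BC}))\geq \ee^{-L^{1/2}}$, so $\|\BG\|\leq \ee^{L^{1/2}}$. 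The accumulated bound is then of the form $(C_{geom})^{O(L/\ell)}\,\ee^{-\gamma(m,\ell,n)\ell\cdot O(L/\ell)}\,\ee^{L^{1/2}}$.

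The arithmetic to close the induction is the heart of the matter. We need the exponent to beat $\gamma(m,L,n)L$. Writing $\ell=L_k$ with $L\approx L_k^{\alpha}=L_k^{3/2}$, the number of good steps is $\sim L/\ell\sim L_k^{1/2}$, so the main decay term is $\ee^{-\gamma(m,\ell,n)\ell\cdot cL_k^{1/2}}$; since $\gamma(m,\ell,n)=m(1+\ell^{-1/8})^{N-n+1}$ while $\gamma(m,L,n)=m(1+L^{-1/8})^{N-n+1}$ and $L>\ell$, we have $\gamma(m,\ell,n)>\gamma(m,L,n)$, which provides exactly the slack needed to absorb the polynomial-in-$\ell$ number of $C_{geom}$ factors (each contributing $\ln C_{geom}$ to the exponent, total $O(L_k^{1/2}\ln C_{geom})=o(\gamma\ell L_k^{1/2})$) and the $\ee^{L^{1/2}}=\ee^{L_k^{3/4}}$ resonance term. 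Because $\gamma\ell\cdot L_k^{1/2}\sim m L_k^{3/2}$ dominates $L_k^{3/4}$, for $L_0\geq \tilde L_2^*(J,N,d)$ large enough the product is $\leq \ee^{-\gamma(m,L,n)L}$, establishing that $\BC^{(n)}_{L_{k+1}}(\Bx)$ is $(E,m)$-NS.

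The main obstacle I anticipate is the bookkeeping in the chaining step: one must verify that avoiding the $\leq J$ bad subcubes \emph{and} the boundary collar never forces a good subcube whose interior fails to overlap the previous one, i.e., that a ``corridor'' of good subcubes of width $\sim\ell$ genuinely exists connecting $\BC^{(n,int)}_{L}(\Bx)$ to $\BC^{(n,out)}_{L}(\Bx)$. This is where the constant $J=\kappa(n)+5$ and the separability combinatorics of Lemma \ref{lem:separable} and Lemma \ref{lem:M} enter: since at most $\kappa(n)$ pairwise-distant bad cubes can fail to be separable from a reference cube, and we allow a few extra for boundary effects, the number of obstacles is a fixed constant independent of $L$, hence for $L/\ell$ large the corridor exists. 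Once this geometric fact is nailed down, the resolvent-identity iteration and the exponential arithmetic above are routine.
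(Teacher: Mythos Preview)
Your overall architecture---iterate the GRI, count good steps of order $L_{k+1}/L_k$, absorb the combinatorial and non-resonance factors into the slack between $\gamma(m,L_k,n)$ and $\gamma(m,L_{k+1},n)$---matches the paper. The arithmetic you sketch in the third paragraph is essentially what the paper carries out.

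However, your treatment of the bad subcubes has a genuine gap. You propose to \emph{avoid} the $(E,m)$-singular subcubes by finding a ``corridor'' of good subcubes, and you only invoke non-resonance of the big cube $\BC^{(n)}_{L_{k+1}}(\Bx)$ itself to bound $\|\BG\|\leq \ee^{L_{k+1}^{1/2}}$ at the end. But the GRI iteration does not let you choose your path: after applying the GRI at a good cube centered at $\Bx_\ell$, the next center $\Bx_{\ell+1}$ is whichever point on the outer shell \emph{maximizes} the remaining norm, and you have no control over where this lands. The iteration can and will wander into a bad region, so a corridor argument cannot be made to work here. The separability combinatorics of Lemma~\ref{lem:separable} and Lemma~\ref{lem:M} that you invoke are irrelevant to this issue; they control how many pairwise-separable bad cubes can exist, not whether a GRI chain can steer around them.

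The paper's remedy, which you are missing, is to \emph{pass through} a bad region rather than avoid it. When $\BC^{(n)}_{L_k}(\Bx_\ell)$ is $(E,m)$-S, one encloses it in the larger cube $\BC^{(n)}_{2L_k}(\Bx_{i_0})$ around the nearest bad center and applies the GRI with that cube. Here the full strength of hypothesis (i)---$E$-\emph{completely} non-resonant, not merely $E$-NR---is essential: CNR guarantees that the intermediate-scale cube $\BC^{(n)}_{2L_k}(\Bx_{i_0})$ is itself $E$-NR, giving $\|\BG^{(n)}_{\BC^{(n)}_{2L_k}(\Bx_{i_0})}(E)\|\leq \ee^{(2L_k)^{1/2}}$. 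This produces a ``bad step'' factor $\delta_0=18^{nd}C_{geom}^2\ee^{(2L_k)^{1/2}}\ee^{-\gamma(m,L_k,n)L_k}<1$, and since there are at most $J$ bad centers the number of such steps is bounded by a constant. The remaining $n_+\geq L_{k+1}/L_k-7J$ good steps then drive the decay. Without this use of CNR at scale $2L_k$, your argument has no bound whatsoever when the iteration hits a singular subcube.
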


\begin{proof}
Since $M(\BC^{(n)}_{L_{k+1}}(\BX),E)\leq J$, there exist at most $J$ cubes of side length $2L_k$ contained in $\BC^{(n)}_{L_{k+1}}(\Bx)$ that are $(E,m)$-S with centers at distance $\geq 7NL_k$. Therefore, we can find $\Bx_i\in\BC^{(n)}_{L_{k+1}}(\Bx)\cap \Gamma_{\Bx}$ with $\Gamma_{\Bx}=\Bx+\frac{L_k}{3}\DZ^{nd}$.
\[
  \dist(\Bx_i,\partial\BC^{(n)}_{L_{k+1}}(\Bx))\geq 2L_k,\quad i=1,\ldots,r\leq J
	\]
	
	such that, if $\Bx_0\in\BC^{(n)}_{L_{k+1}}(\Bx)\setminus \bigcup_{i=1}^r\BC^{(n)}_{2L_k}(\Bx_i)$, then the cube $\BC^{(n)}_{L_k}(\Bx_0)$ is $(E,m)$-NS. 
	
We do an induction procedure in $\BC^{(n,int)}_{L_{k+1}}(\Bx)$ and start with $\Bx_0\in\BC^{(n,int)}_{L_{k+1}}(\Bx)$. We estimate $\|\Bone_{\BC^{(n,out)}_{L_{k+1}}(\Bx)}\BG^{(n)}_{L_{k+1}}(E)\Bone_{\BC^{(n,int)}_{L_k}(\Bx_0)}\|$. Suppose that $\Bx_0, \ldots\Bx_{\ell}$ have been choosen for $\ell\geq 0$ We have two cases

\begin{enumerate}
\item[case a)] $ \BC^{(n)}_{L_k}(\Bx_{\ell})$ is $(E,m)$-NS

In this case, we apply the (GRI) Theorem \ref{thm:GRI} and obtain 
\begin{gather*}
\|\Bone_{\BC^{(n,out)}_{L_{k+1}}(\Bx)}\BG^{(n)}_{\BC^{(n)}_{L_{k+1}}(\Bx)}(E)\Bone_{\BC^{(n,int)}_{L_{k+1}}(\Bx_0)}\|\\
\leq Cgeom \|\Bone_{\BC^{(n,out)}_{L_{k+1}}(\Bx)}\BG^{(n)}_{\BC^{(n)}_{L_{k+1}}(\Bx)}(E)\Bone_{\BC^{(n,out)}_{L_{k+1}}(\Bx_0)}\|\cdot \\
\|\Bone_{\BC^{(n,out)}_{L_{k+1}}(\Bx)}\BG^{(n)}_{\BC^{(n)}_{L_{k+1}}(\Bx_0)}(E)\Bone_{\BC^{(n,int)}_{L_{k+1}}(\Bx_0)}\|\\
\leq Cgeom \|\Bone_{\BC^{(n,out)}_{L_{k+1}}(\Bx)}\BG^{(n)}_{\BC^{(n)}_{L_{k+1}}(\Bx)}(E)\Bone_{\BC^{(n,out)}_{L_{k+1}}(\Bx)}\|\cdot \ee^{-\gamma(m,L_k,n)L_k}.
\end{gather*}
We replace in the above analysis $\Bx$ with $\Bx_{\ell}$ and we get
\begin{gather*}
\|\Bone_{\BC^{(n,out)}_{L_{k+1}}(\Bx_{\ell})}\BG^{(n)}_{\BC^{(n)}_{L_{k+1}}(\Bx_{\ell})}(E)\Bone_{\BC^{(n,int)}_{L_{k+1}}(\Bx_{\ell})}\|\\
\leq 3^{nd}\|\Bone_{\BC^{(n,out)}_{L_{k+1}}(\Bx_{\ell})}\BG^{(n)}_{\BC^{(n)}_{L_{k+1}}(\Bx_{\ell})}(E)\Bone_{\BC^{(n,int)}_{L_{k+1}}(\Bx_{\ell+1})}\|,
\end{gather*}
where $\Bx_{\ell+1}$ is choosen in such a way that the norm in the right hand side in the above equation is maximal. Observe that $|\Bx_{\ell}-\Bx_{\ell+1}|=L_k/3$. We therefore obtain 
\begin{gather*}
\|\Bone_{\BC^{(n,out)}_{L_{k+1}}(\Bx)}\BG^{(n)}_{\BC^{(n)}_{L_{k+1}}(\Bx_{\ell})}(E)\Bone_{\BC^{(n,int)}_{L_{k+1}}(\Bx_{\ell})}\|\\
\leq C_{geom} 3^{nd}\ee^{-\gamma(m,L_k,n)L_k}\|\Bone_{\BC^{(n,out)}_{L_{k+1}}(\Bx)}\BG^{(n)}_{\BC^{(n)}_{L_{k+1}}(\Bx)}(E)\Bone_{\BC^{(n,int)}_{L_{k+1}}(\Bx_{\ell+1})}\| \\
\leq \delta_+\|\Bone_{\BC^{(n,out)}_{L_{k+1}}(\Bx)}\BG^{(n)}_{\BC^{(n)}_{L_{k+1}}(\Bx)}(E)\Bone_{\BC^{(n,int)}_{L_{k+1}}(\Bx_{\ell+1})}\\
\end{gather*}
with $\delta_+ =3^{nd} C_{geom}\ee^{-\gamma(m,L_k,n)L_k}$.\\  
\item[case (b)]
$\BC^{(n)}_{L_k}(\Bx_{\ell})$ is $(E,m)$-S.
Thus, there exists $i_0=1,\ldots,r$ such that $\BC^{(n)}_{L_k}(\Bx_{\ell})\subset \BC^{(n)}_{2L_k}(\Bx_{i_0})$. We apply again the (GRI) this time with $\BC^{(n)}_{L_{k+1}}(\Bx)$ and $\BC^{(n)}_{2L_k}(\Bx_{i_0})$ and obtain
\begin{gather*}
\|\Bone_{\BC^{(n,out)}_{L_{k+1}}(\Bx)}\BG^{(n)}_{\BC^{(n)}_{L_{k+1}}(\Bx)}(E)\Bone_{\BC^{(n,int)}_{2L_k}(\Bx_{i_0})}\|\leq C_{geom} \|\Bone_{\BC^{(n,out)}_{L_{k+1}}(\Bx)}\BG^{(n)}_{\BC^{(n)}_{L_{k+1}}(\Bx)}(E)\Bone_{\BC^{(n,out)}_{L_{k+1}}(\Bx_{i_0})}\| \\
 \times \|\Bone_{\BC^{(n,out)}_{L_{k}}(\Bx_{i_0})}\BG^{(n)}_{\BC^{(n)}_{L_{k}}(\Bx_{i_0})}(E)\Bone_{\BC^{(n,int)}_{L_{k}}(\Bx_{i_0})}\|\\
\leq C_{geom}\ee^{(2L_k)^{1/2}}\cdot \|\Bone_{\BC^{(n,out)}_{L_{k+1}}(\Bx)}\BG^{(n)}_{\BC^{(n)}_{L_{k+1}}(\Bx)}(E)\Bone_{\BC^{(n,out)}_{2L_k}(\Bx_{i_0})}\|\\
\end{gather*}
We have almost everywhere 
\[
\Bone_{\BC_{2L_k}^{(n,out)}(\Bx_{i_0})}\sum_{\tilde{\Bx}\in\BC^{(n)}_{2L_k}(\Bx_{i_0})\cap\Gamma_{\Bx_{i_0}}, \BC^{(n)}_{L_k}(\tilde{\Bx})\not\subset \BC^{(n)}_{2L_k}(\Bx_{i_0})} \Bone_{\BC^{(n,int)}_{L_k}(\Bx)}
\]
Hence, by choosing $\tilde{\Bx}$ is such a way that the right hand side is maximal, we get 
\[
\|\Bone_{\BC^{(n,out)}_{L_{k+1}}(\Bx)}\BG^{(n)}_{\BC^{(n)}_{L_{k+1}}(\Bx)}(E)\Bone_{\BC^{(n,int)}_{2L_k}(\Bx_{i_0})}\|\leq 6^{nd}\cdot\|\Bone_{\BC^{(n,out)}_{L_{k+1}}(\Bx)}\BG^{(n)}_{\BC^{(n)}_{L_{k+1}}(\Bx)}(E)\Bone_{\BC^{(n,int)}_{L_{k+1}}(\tilde{\Bx})}\|.
\]
Since $\BC^{(n)}_{L_k}(\tilde{\Bx})\not\subset\BC^{(n)}_{2L_k}(\Bx_{i_0})$, $\tilde{\Bx}\in\BC^{(n)}_{2L_k}(\Bx_{i_0})$  and the cubes $\BC^{(n)}_{2L_k}(\Bx_i)$ are disjoint, we obtain that 
\[
\BC^{(n)}_{L_k}(\tilde{\Bx})\not\subset \bigcup_{i=1}^r \BC^{(n)}_{2L_k}(\Bx_i).
\]
so that the cube $\BC^{(n)}_{L_k}(\tilde{\Bx})$ must be $(E,m)$-NS. We therefore perform a new step as in case (a) and obtain
\[
\ldots \leq 6^{nd}3^{nd}C_{geom}\ee^{-\gamma(m,L_k,n)L_k}\cdot \|\Bone_{\BC^{(n,out)}_{L_{k+1}}(\Bx)}\BG^{(n)}_{\BC^{(n)}_{L_{k+1}}(\Bx)}(E)\Bone_{\BC^{(n,int)}_{L_{k+1}}(\Bx_{\ell+1}}\|,
\]
with $\Bx_{\ell+1}\in\Gamma_{\Bx}$ and $|\tilde{\Bx}-\Bx_{\ell+1}| =L_k/3$.

Summarizing, we get $\Bx_{\ell+1}$ with 
\[
\|\Bone_{\BC^{(n,out)}_{L_{k+1}}(\Bx)}\BG^{(n)}_{\BC^{(n)}_{L_{k+1}}(\Bx)}(E)\Bone_{\BC^{(n,int)}_{L_k}(\Bx_{\ell})}\|\leq \delta_0 \|\Bone_{\BC^{(n,out)}_{L_{k+1}}(\Bx)}\BG^{(n)}_{\BC^{(n)}_{L_{k+1}}(\Bx)}(E)\Bone_{\BC^{(n,int)}_{L_{k+1}}(\Bx_{\ell+1})}\|,
\]
with $\delta_0=18^{nd}C_{geom}^2\ee^{(2L_k)^{1/2}}\ee^{-\gamma(m,L_k,n)L_k}$ After $\ell$ iterations with $n_+$ steps of case (a) and $n_0$ steps of case (b), we obtain
\begin{gather*}
\|\Bone_{\BC^{(n,out)}_{L_{k+1}}(\Bx)}\BG^{(n)}_{\BC^{(n)}_{L_{k+1}}(\Bx)}(E)\Bone_{\BC^{(n,int)}_{L_{k+1}}(\Bx_{0})}\|\leq (\delta_+)^{n_+}(\delta_0)^{n_0}\\ 
\times\|\Bone_{\BC^{(n,out)}_{L_{k+1}}(\Bx)}\BG^{(n)}_{\BC^{(n)}_{L_{k+1}}(\Bx)}(E)\Bone_{\BC^{(n,int)}_{L_{k}}(\Bx_{\ell})}\|.
\end{gather*}
Now since $\gamma(m,L_k,n)\geq m$ we have that 
\[
\delta_+\leq 3^{nd}\cdot C_{geom}\ee^{-mL_k}.
\]
So $\delta_+$ can be made arbitrarily small if $L_0$ and hence $L_k$ is large enough. We also have for $\delta_0$

\begin{align*}
\delta_0&=18^{nd} C^2_{geom}\ee^{(2L_k)^{1/2}}\ee^{-\gamma(m,L_k,n)L_k}\\
&18^{nd}C^2_{geom}\ee^{\sqrt{2}L_k^{1/2}}\ee^{-\gamma(m,L_k,n)L_k}\\
&\leq 18^{nd} C^2_{geom}\ee^{\sqrt{2}L_k^{1/2}-mL_k}\leq \frac{1}{2}.
\end{align*}

For large $L_0$ hence $L_k$. Using the (GRI), we can iterate if $\BC^{(n,out)}_{L_{k+1}}(\Bx)\cap\BC^{(n)}_{L_k}(\Bx_{\ell})=\emptyset$. Thus, we can have at least $n_+$ steps of case (a)  with 
\[
n_+\cdot\frac{L_k}{3}+ \sum_{i=1}^r 2L_k\geq \frac{L_{k+1}}{3}-\frac{L_k}{3},
\]
until the induction eventually stop. Since $r\leq J$, we can bound $n_+$ from below:
\begin{align*}
n_+\cdot\frac{L_k}{3}&\geq \frac{L_{k+1}}{3}-\frac{L_k}{3}-r(L_k)\\ 
&\geq \frac{L_{k+1}}{3}-\frac{L_k}{3}-2JL_k
\end{align*}

Which yields 
\begin{align*}
n_+&\geq \frac{L_{k+1}}{L_k}-1-6J\\
&\geq \frac{L_{k+1}}{L_k}-7J
\end{align*}
Therefore 
\begin{equation}\label{eq:delta.plus}
\|\Bone_{\BC^{(n,out)}_{L_{k+1}}(\Bx)}\BG^{(n)}_{\BC^{(n)}_{L_{k+1}}(\Bx)}(E)\Bone_{\BC^{(n,int)}_{L_{k}}(\Bx_{0})}\|\leq \delta_+^{n_+}\cdot\|\BG^{(n)}_{\BC^{(n)}_{L_{k+1}}(\Bx)}(E)\|
\end{equation}
Finally, by $E$-non-resonance of $\BC^{(n)}_{L_{k+1}}(\Bx)$ and since we can cover $\BC^{(n,int)}_{L_{k+1}}(\Bx)$ by $\left(\frac{L_{k+1}}{L_k}\right)^{nd}$ small cubes $\BC^{(n,int)}_{L_k}(\By)$, equation \eqref{eq:delta.plus} with $y$ instead of $\Bx_0$, yields 
\begin{gather*}
\|\Bone_{\BC^{(n,out)}_{L_{k+1}}(\Bx)}\BG^{(n)}_{\BC^{(n)}_{L_{k+1}}(\Bx)}(E)\Bone_{\BC^{(n,int)}_{L_{k+1}}(\Bx)}\|\\
\leq \left(\frac{L_{k+1}}{L_k}\right)\cdot \delta_{+}^{n_+}\cdot\ee^{L_{k+1}^{1/2}}\\
\leq \left(\frac{L_{k+1}}{L_k}\right)\cdot\left[3^{nd}\cdot C_{geom}\cdot \ee^{-\gamma(m,L_k,n)}\right]^{\frac{L_{k+1}}{L_k}-7J}\times \ee^{L_{k+1}^{1/2}}\\
\leq L_{k+1}^{nd}L_{k+1}^{-\frac{nd}{\alpha}}C(n,d)^{\frac{L_{k+1}}{L_k}-7J}\ee^{-\gamma(m,L_k,n)(\frac{L_{k+1}}{L_k}-7J)}\times \ee^{L_{k+1}^{1/2}}\\
\leq L_{k+1}^{nd/3}\ee^{(L_{k+1}^{1/3}-7J)\ln C(n ,d)}\ee^{-\gamma(m,L_k,n)(L_{k+1}^{1/3}-7J)}\ee^{L_{k+1}^{1/2}}\\
\leq\ee^{-\left[\frac{-nd}{3}\ln(L_{k+1})-L_{k+1}^{1/3}\ln(C)+7J\ln(C(n,d))+\gamma(m,L_k,n)L_{k+1}^{1/3}-7J\gamma(m,L_k,n)-L_{k+1}^{1/2}\right]}\\
\leq \ee^{-\left[\frac{-nd}{3}\frac{\ln L_{k+1}}{L_{k+1}}-\frac{L_{k+1}^{1/3}\ln(C(n,d))}{L_{k+1}}+\frac{7J\ln(C(n,d))}{L_{k+1}}+\gamma(m,n,L_k)\frac{L_{k+1}^{1/3}}{L_{k+1}}-7J\frac{\gamma(m,L_k,n)}{L_{k+1}}-L_{k+1}^{-1/2}\right]}\\
\leq \ee^{-m'L_{k+1}}           
\end{gather*}  
where 
\[
m'=\frac{1}{L_{k+1}}\left[n_+\gamma(m,L_k,n)L_k-n_+\ln(2^{Nd} NdL_k^{nd-1})\right]- \frac{1}{L_{k+1}^{1/2}},
\]
with
\[
L_{k+1}L_k^{-1}-7J\leq n_+\leq L_{k+1} L_k^{-1}
\]
we obtain
\begin{align*}
m'&\geq\gamma(m,n,L_k)-\gamma(m,L_k,n)\frac{7JL_k}{L_{k+1}}\\
&-\frac{1}{L_{k+1}}\frac{L_{k+1}}{L_k}\ln((2_{Nd}Nd)L_k^{nd-1})-\frac{1}{L_{k+1}^{1/2 }}\\
&\geq\gamma(m,L_k,n)-\gamma(m,L_k,n)7JL_k^{-1/2}\\
&\quad -L_k^{-1}(\ln(2^{Nd}Nd))-(nd-1)\ln(L_k)-L_k^{-3/4}\\
&\geq\gamma(m,L_k,n)\left[1-(7J+\ln(2^{Nd}Nd)+Nd)L_k^{-1/2}\right]\\
\end{align*} 
if $L_0\geq L_2^*(J,N,d)$ for some $L_2^*(J,N,d)\geq 0$ large enough. Since $\gamma(m,L_k,n)=m(1+L_k^{-1/8})^{N-n+1}$
\[
\frac{\gamma(m,L_k,n)}{\gamma(m,L_{k+1},n)}=\left(\frac{1+L_k^{-1/8}}{1+L_k^{-3/16}}\right)^{N-n+1}\geq \frac{1+L_k^{-1/8}}{1+L_k^{-3/16}}.
\]
Therefore we can compute
\begin{align*}
&\frac{\gamma(m,L_k,n)}{\gamma(m,L_{k+1},n)}\left(1-(7J+\ln(2^{Nd}Nd)+Nd)L_k^{-1/2}\right)\\
&\quad\frac{1+L_k^{-1/8}}{1+L_k^{-3/16}}\left(1-(7J+\ln(2^{Nd}Nd)+Nd)L_k^{-1/2}\right)\geq 1
\end{align*}
provided$L_0\geq\tilde{L}_2^*$ for some large enough $\tilde{L}_2^*(J,N,d)\geq 0$. Finally,we obtain that $m'\geq \gamma(m,L_{k+1},n)$.  This proves the result.
\end{enumerate}
\end{proof}  
\begin{lemma}\label{lem:prob.M2}
Given $k\geq 0$, asssume that property $\dsknN$ holds true for all pairs of separable FI cubes. Then for any $\ell\geq 1$
\[
\prob\left\{M_\FI(\BC^{(n)}_{L_{k+1}}(\Bu),I)\geq 2\ell\ \right\}\leq C(n,N,d,\ell)L_k^{2\ell dn \alpha}L_k^{-2\ell p4^{N-n}}
\]
\end{lemma}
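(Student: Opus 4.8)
The idea is to bound the event $\{M_\FI(\BC^{(n)}_{L_{k+1}}(\Bu),I)\ge 2\ell\}$ by a union, over families of $2\ell$ pairwise well-separated FI sub-cubes of $\BC^{(n)}_{L_{k+1}}(\Bu)$, of events that split into $\ell$ \emph{independent} factors, and to control each factor by the inductive hypothesis \dsknN.

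First I would record the reduction. Since $E\mapsto M_\FI(\BC^{(n)}_{L_{k+1}}(\Bu),E)$ is integer-valued, $M_\FI(\BC^{(n)}_{L_{k+1}}(\Bu),I)\ge 2\ell$ forces the existence of an energy $E\in I\subseteq I_0$ and of FI cubes $\BC^{(n)}_{L_k}(\Bu^{(1)}),\dots,\BC^{(n)}_{L_k}(\Bu^{(2\ell)})\subset\BC^{(n)}_{L_{k+1}}(\Bu)$ with $|\Bu^{(j)}-\Bu^{(j')}|\ge 7NL_k$ for $j\ne j'$, all of them $(E,m)$-S. Grouping them into the $\ell$ pairs $(\Bu^{(2i-1)},\Bu^{(2i)})$ --- each a pair of separable FI cubes, by Lemma~\ref{lem:FI.cubes} and the footnote to the definition of $M_\FI$ --- gives
\[
\{M_\FI(\BC^{(n)}_{L_{k+1}}(\Bu),I)\ge 2\ell\}\ \subset\ \bigcup_{\Bu^{(1)},\dots,\Bu^{(2\ell)}}\ \bigcap_{i=1}^{\ell}\CE_i ,
\]
where $\CE_i:=\{\exists E\in I_0:\ \BC^{(n)}_{L_k}(\Bu^{(2i-1)})\text{ and }\BC^{(n)}_{L_k}(\Bu^{(2i)})\text{ are }(E,m)\text{-S}\}$, and the union is over all $2\ell$-tuples of lattice centres in $\BC^{(n)}_{L_{k+1}}(\Bu)$ subject to that separation.

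Next I would establish the independence of $\CE_1,\dots,\CE_\ell$. Because all $2\ell$ cubes are FI and pairwise at distance $\ge 7NL_k\ge 7nL_k$, Lemma~\ref{lem:FI.cubes} (valid once $L_0$, hence $L_k$, exceeds $2r_0$) yields $\varPi\BC^{(n)}_{L_k}(\Bu^{(j)})\cap\varPi\BC^{(n)}_{L_k}(\Bu^{(j')})=\emptyset$ for \emph{all} $j\ne j'$; consequently the pairs of projections feeding the distinct events $\CE_i$ are pairwise disjoint, and, since $\BH^{(n)}_{\BC^{(n)}_{L_k}(\Bv)}$ depends on the random potential only through its values on (a fixed neighbourhood of) $\varPi\BC^{(n)}_{L_k}(\Bv)$, assumption $\condP$ makes the $\CE_i$ mutually independent for $L_0$ large. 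Applying \dsknN --- which is assumed for pairs of separable FI cubes, and with $I\subseteq I_0$ --- to each factor,
\[
\prob\Bigl(\bigcap_{i=1}^{\ell}\CE_i\Bigr)=\prod_{i=1}^{\ell}\prob(\CE_i)\ \le\ \bigl(L_k^{-2p4^{N-n}}\bigr)^{\ell}=L_k^{-2\ell p4^{N-n}} .
\]

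Finally I would count: the number of admissible $2\ell$-tuples is at most $\bigl((2L_{k+1}+1)^{nd}\bigr)^{2\ell}$, and since $L_{k+1}\le L_k^{\alpha}+1\le 2L_k^{\alpha}$ this is $\le C(n,N,d,\ell)\,L_k^{2\ell nd\alpha}$, the constant also absorbing the (bounded) number of pairings of the $2\ell$ cubes. Combining this union bound with the previous display proves
\[
\prob\{M_\FI(\BC^{(n)}_{L_{k+1}}(\Bu),I)\ge 2\ell\}\ \le\ C(n,N,d,\ell)\,L_k^{2\ell nd\alpha}\,L_k^{-2\ell p4^{N-n}} .
\]
The one step that needs genuine care is the independence: one must verify that the $\ell$ \emph{different} pairs sit on disjoint site-sets, which is exactly where the FI property and the separation $7NL_k$ enter through Lemma~\ref{lem:FI.cubes} --- this is the $\ell$-fold counterpart of the two-cube disjointness underlying the Wegner bound of Theorem~\ref{thm:Wegner}.
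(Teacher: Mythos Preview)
Your proof is correct and follows essentially the same route as the paper's own argument: group the $2\ell$ FI cubes into $\ell$ pairs, use Lemma~\ref{lem:FI.cubes} to get disjoint projections and hence independence of the events $\CE_i$, apply \dsknN\ to each pair, and finish with a counting/union bound yielding the factor $L_k^{2\ell nd\alpha}$. If anything, you are more explicit than the paper about the key point that independence of the $\ell$ events requires \emph{all} $2\ell$ projections to be pairwise disjoint, not merely disjointness within each pair.
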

\begin{proof}
See the proof in the appendix Section \ref{sec:appendix}. 
\end{proof}

\begin{theorem} \label{thm:FI.cubes}
Let $1\leq n\leq N$. There exists $L_2^*=L_2^*(N,d)\geq 0$ such that  if $L_0\geq L_2^*$ and if for $k\geq 0$

\begin{enumerate}
\item[(i)] $\dskunnprimeN$ for all $n'\in[1,n)$ holds true,
\item[(ii)] $\dsknN$ holds true for all pairs of FI cubes
\end{enumerate}
then $\dskonN$ holds true for any pairs of separable FI cubes $\BC^{(n)}_{L_{k+1}}(\Bx)$ and $\BC^{(n)}_{k+1}(\By)$.
\end{theorem}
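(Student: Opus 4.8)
The plan is to carry out the scale-induction step in the continuum multi-particle version of the von Dreifus--Klein / Stollmann scheme, reducing the event that a fixed pair of separable FI cubes violates $\dskonN$ to three events already under control: a genuine $E$-resonance, handled by the Wegner bound Theorem~\ref{thm:Wegner}(B), and the occurrence in one of the two cubes of an anomalously large number of $(E,m)$-singular cubes of the previous scale $L_k$, handled through the induction hypotheses (i)--(ii) by Lemmas~\ref{lem:prob.M} and~\ref{lem:prob.M2}. The deterministic heart of the reduction is Lemma~\ref{lem:CNR.NS}.

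Fix two separable FI cubes $\BC^{(n)}_{L_{k+1}}(\Bx)$ and $\BC^{(n)}_{L_{k+1}}(\By)$, set $J=\kappa(n)+5$, and introduce
\begin{align*}
\mathrm{B}_{k+1}&:=\{\exists E\in I_0:\ \BC^{(n)}_{L_{k+1}}(\Bx)\text{ and }\BC^{(n)}_{L_{k+1}}(\By)\text{ are }(E,m)\text{-S}\},\\
\mathrm{R}&:=\{\exists E\in I_0:\ \text{neither }\BC^{(n)}_{L_{k+1}}(\Bx)\text{ nor }\BC^{(n)}_{L_{k+1}}(\By)\text{ is }E\text{-CNR}\},\\
\mathrm{S}_{\Bz}&:=\{M(\BC^{(n)}_{L_{k+1}}(\Bz),I_0)\geq J+1\},\qquad \Bz\in\{\Bx,\By\}.
\end{align*}
The first step is the deterministic inclusion $\mathrm{B}_{k+1}\subset\mathrm{R}\cup\mathrm{S}_{\Bx}\cup\mathrm{S}_{\By}$, valid once $L_0\geq\tilde{L}_2^*(J,N,d)$: if $\omega\in\mathrm{B}_{k+1}\setminus\mathrm{R}$ and $E\in I_0$ witnesses $(E,m)$-singularity of both cubes, then at least one of them --- say $\BC^{(n)}_{L_{k+1}}(\Bx)$ --- is $E$-CNR, and Lemma~\ref{lem:CNR.NS} would force it to be $(E,m)$-NS unless $M(\BC^{(n)}_{L_{k+1}}(\Bx),E)>J$, i.e.\ unless $\omega\in\mathrm{S}_{\Bx}$. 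The second step breaks up $\mathrm{S}_{\Bz}$: since $J+1=\kappa(n)+6$ and $M\leq M_{\PI}+M_{\FI}$, on $\mathrm{S}_{\Bz}$ one has either $M_{\PI}(\BC^{(n)}_{L_{k+1}}(\Bz),I_0)\geq\kappa(n)+2$ or $M_{\FI}(\BC^{(n)}_{L_{k+1}}(\Bz),I_0)\geq 4$. Hypothesis (i) then feeds Lemma~\ref{lem:prob.M} to bound the first alternative, while hypothesis (ii) feeds Lemma~\ref{lem:prob.M2} with $\ell=2$ to bound the second, recalling that two $L_k$-scale FI sub-cubes with centres $7NL_k$-apart are automatically separable, which is exactly the regime in which $\dsknN$ is assumed.

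Assembling the three contributions, Theorem~\ref{thm:Wegner}(B) gives $\prob\{\mathrm{R}\}\leq L_{k+1}^{-p4^{N-n}}$, while Lemmas~\ref{lem:prob.M} and~\ref{lem:prob.M2} yield
\begin{align*}
\prob\{\mathrm{S}_{\Bx}\}+\prob\{\mathrm{S}_{\By}\}
&\leq 3^{2nd}L_{k+1}^{2nd}\bigl(L_k^{-4^Np}+L_k^{-4p4^{N-n}}\bigr)\\
&\quad+2\,C(n,N,d,2)\,L_k^{4dn\alpha}\,L_k^{-4p4^{N-n}}.
\end{align*}
I would then substitute $L_{k+1}=\lfloor L_k^{\alpha}\rfloor+1$ with $\alpha=3/2$ and use $p>6Nd$ (so $3nd<p4^{N-n}$ strictly) to absorb the polynomial prefactors $L_{k+1}^{2nd}$ and $L_k^{4dn\alpha}=L_{k+1}^{4dn}$ into the gain from decay $L_{k+1}^{-2p4^{N-n}}$ to $L_k^{-4p4^{N-n}}\approx L_{k+1}^{-\frac{8}{3}p4^{N-n}}$; for $L_0\geq L_2^*(N,d)$ each of these terms is then $\leq\tfrac14 L_{k+1}^{-2p4^{N-n}}$, and, exactly as in the derivation of~\eqref{eq:PI.prob}, $\prob\{\mathrm{B}_{k+1}\}\leq L_{k+1}^{-2p4^{N-n}}$, i.e.\ $\dskonN$ for the pair.

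The genuine difficulty is entirely packaged in Lemma~\ref{lem:CNR.NS} --- the deterministic implication that a cube which is $E$-CNR and contains at most $J$ singular sub-cubes of the previous scale is $(E,m)$-NS, driven by the iterated geometric resolvent expansion inside $\BC^{(n,int)}_{L_{k+1}}(\Bx)$; granting it, what remains here is a three-term union bound together with the scaling arithmetic of the last paragraph. The one point that needs care is the choice of the combinatorial constant $J=\kappa(n)+5$: it must be large enough that, once the room allotted to $M_{\PI}$ is exhausted, the leftover still gives $M_{\FI}\geq 4$ (equivalently $\ell=2$), so that the corresponding power of $L_k$ beats $L_{k+1}^{2p4^{N-n}}$ after the $\alpha$-inflation of scales --- and this is precisely what fixes the lower bound on the initial scale $L_0$.
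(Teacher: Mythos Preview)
Your proposal is correct and follows the paper's own proof essentially line by line: the same event decomposition $\mathrm{B}_{k+1}\subset\Sigma\cup\mathrm{S}_{\Bx}\cup\mathrm{S}_{\By}$ via Lemma~\ref{lem:CNR.NS}, the same dichotomy $M_{\PI}\geq\kappa(n)+2$ or $M_{\FI}\geq 4$, the same appeal to Lemmas~\ref{lem:prob.M} and~\ref{lem:prob.M2} (with $\ell=2$), and the same final arithmetic with $\alpha=3/2$ and $p>6Nd$. The only cosmetic discrepancy is that you quote the Wegner bound as $L_{k+1}^{-p4^{N-n}}$ (exactly as Theorem~\ref{thm:Wegner}(B) is stated), whereas the paper's proof silently uses the stronger $L_{k+1}^{-4^{N}p}$; this is an internal inconsistency of the paper, not a flaw in your argument.
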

Above we use the convention $(\textbf{DS}.-1,n,N)$ means no assumption.
\begin{proof}
Consider a pair of separable FI  cubes $\BC^{(n)}_{L_{k+1}}(\Bx)$ and $\BC^{(n)}_{L_{k+1}}(\By)$ and set $J=\kappa(n)+5$. Define
\begin{align*}
\rm{B}_{k+1}&=\left\{\exists E\in I_0: \text{ $\BC^{(n)}_{L_{k+1}}(\Bx)$ and $\BC^{(n)}_{L_{k+1}}(\By)$ are $(E,m)$-S}\right\}\\
\Sigma&=\left\{\exists E\in I_0:\text{ neither $\BC^{(n)}_{L_{k+1}}(\Bx)$ nor $\BC^{(n)}_{L_{k+1}}(\By)$ is $E$-CNR}\right\}\\
\rm{S}_{\Bx}&=\left\{\exists E\in I_0: \text{ $M(\BC^{(n)}_{L_{k+1}}(\Bx);E)\geq J+1$}\right\}\\
\rm{S}_{\By}&=\left\{\exists E\in I_0: \text{ $M(\BC^{(n)}_{L_{k+1}}(\By);E)\geq J+1$}\right\}
\end{align*}
Let $\omega\in\rm{B}_{k+1}$. If $\omega\notin  \Sigma\cup\rm{S}_{\Bx}$, then $\forall E\in I_0$ either $\BC^{(n)}_{L{k+1}}(\Bx)$ or $\BC^{(n)}_{L_{k+1}}(\By)$ is $E$-CNR  and $M(\BC^{(n)}_{L_{k+1}}(\Bx),E)\leq J$. The cube  $\BC^{(n)}_{L_{k+1}}(\Bx)$ cannot be $E$-CNR: indeed, by  Lemma \ref{lem:CNR.NS} it would be $(E,m)$-NS. So the cube $\BC^{(n)}_{L_{k+1}}(\By)$ is $E$-CNR and $(E,m)$-S. This implies again by Lemma \ref{lem:CNR.NS} that
\[
M(\BC^{(n)}_{L_{k+1}}(\By),E)\geq J+1.
\]
Therefore $\omega\in\rm{S}_{\By}$, so that $\rm{B}_{k+1}\subset \Sigma\cup \rm{S}_{\Bx}\cup\rm{S}_{\By}$, hence
\[
\prob\{\rm{B}_{k+1}\}\leq \prob\{\Sigma\}+\prob\{\rm{S}_{\Bx}\}+\prob\{\rm{S}_{\By}\},
\]
and $\prob\{\Sigma\}\leq L_{k+1}^{-4^Np} $ By Theorem \ref{thm:Wegner}. Now let us estimate $\prob\{S_{\Bx}\}$ and similarly $\prob\{\rm{S}_{\By}\}$. Since 
\[
M_{\PI}(\BC^{(n)}_{L_{k+1}}(\Bx),E)+M_{\FI}(\BC^{(n)}_{L_{k+1}}(\Bx),E)\geq M(\BC^{(n)}_{L_{k+1}}(\Bx),E),
\]
the inequality $M(\BC^{(n)}_{L_{k+1}}(\Bx),E)\geq \kappa(n)+6$ implies that either $M_{\PI}(\BC^{(n)}_{L_{k+1}}(\Bx),E)\geq \kappa(n)+2$ or,  $M_{\FI}(\BC^{(n)}_{L_{k+1}}(\Bx),E)\geq 4$. Therefore, by Lemma \ref{lem:prob.M}       and Lemma \ref{lem:prob.M2} with ($\ell=2$),
\begin{align*}
\prob\{\rm{S}_{\Bx}\}&\leq \prob\left\{\exists E\in I: M_{\PI}(\BC^{(n)}_{L_{k+1}}(\Bx),E)\geq \kappa(n)+2\right\}\\
& +\prob\left\{\exists E\in I: M_{\FI}(\BC^{(n)}_{L_{k+1}}(\Bx),E)\geq 4\right\}\\
&\leq \frac{3^{2nd}}{2} L_{k+1}^{2nd}(L_k^{-4^Np}+L_k^{-4^Np4^{N-n }})+C'(n,N,d)L_{k+1}^{4dn-\frac{4p}{\alpha}4^{N-n}}\\
&\leq C''(n,N,d)\left(L_{k+1}^{-\frac{4^Np}{\alpha}+2nd}+L_{k+1}^{-\frac{4p}{\alpha}4^{N-n}+2nd}+L_{k+1}^{-\frac{4p}{\alpha}4^{N-n}+4nd}\right)\\
&\leq C'''(n,N,d)L_{k+1}^{-\frac{4p}{\alpha}4^{N-n}+4nd}\\
&\leq \frac{1}{4} L_{k+1}^{-2p 4^{N-n}},   
\end{align*}
where we used that $\alpha=3/2$, $p\geq 4\alpha Nd=6Nd$. Finally 
\[
\prob\{\rm{B}_{k+1}\}\leq L_{k+1}^{-4^Np}+\frac{1}{2}L_{k+1}^{-2p4^{N-n}}\leq L_{k+1}^{-2p4^{N-n}}.
\]
\end{proof}

\subsection{Mixed pairs of cubes} \label{sec:mixed.cubes}
Finally, it remains only to derive $\dskonN$ in case (III) i.e., for pairs of $n$-particle cubes where one is PI while the other is FI.

\begin{theorem} \label{thm:MI.cubes}
Let $1\leq n\leq N$. There exists $L_3^*=L_3^*(N,d)\geq 0$ such that if $L_0\geq L_3^*(N,d)$ and if for $k\geq 0$
\begin{enumerate}
\item[(i)] 
$\dskunnprimeN$ holds true all $n'\in[1,n)$,\\
\item[(ii)]
$\dsknprimeN$ holds true for all $ n'\in[1,n)$ and \\
\item[(iii)]
$\dsknN$  holds true  for all pairs of FI cubes
\end{enumerate}
then $\dskonN$ holds true for any pair of separable cubes $\BC^{(n)}_{L_{k+1}}(\Bx)$ and $\BC^{(n)}_{L_{k+1}}(\By)$ where one is PI while the other is FI.
\end{theorem}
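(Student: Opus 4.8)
The plan is to reduce the mixed case to the two cases already handled. Let $\BC^{(n)}_{L_{k+1}}(\Bx)$ be the PI cube and $\BC^{(n)}_{L_{k+1}}(\By)$ be the FI cube in the separable pair. The key observation is that a separable pair is singular at a common energy only if at least one of the two cubes is itself ``bad'' in a scale-$L_k$ sense, and the two cubes play asymmetric roles: the PI cube $\BC^{(n)}_{L_{k+1}}(\Bx)$ is controlled exactly as in Subsection \ref{sec:PI.cubes}, via its decomposition $\BC^{(n')}_{L_{k+1}}(\Bx')\times\BC^{(n'')}_{L_{k+1}}(\Bx'')$, Lemma \ref{lem:NR.NS} and Lemma \ref{lem:loc.prob} (these only require $\dsknprimeN$ for $n'\in[1,n)$, available by hypothesis (ii)); the FI cube $\BC^{(n)}_{L_{k+1}}(\By)$ is controlled exactly as in Subsection \ref{sec:FI.cubes} via Lemma \ref{lem:CNR.NS} and the counting Lemmas \ref{lem:prob.M}, \ref{lem:prob.M2} (these require hypotheses (i) and (iii)).

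Concretely, I would introduce the events
\begin{align*}
\rm{B}_{k+1}&=\left\{\exists E\in I_0: \BC^{(n)}_{L_{k+1}}(\Bx)\ \text{and}\ \BC^{(n)}_{L_{k+1}}(\By)\ \text{are}\ (E,m)\text{-S}\right\},\\
\Sigma&=\left\{\exists E\in I_0:\ \text{neither}\ \BC^{(n)}_{L_{k+1}}(\Bx)\ \text{nor}\ \BC^{(n)}_{L_{k+1}}(\By)\ \text{is}\ E\text{-CNR}\right\},\\
\CN_{\Bx}&=\left\{\BC^{(n)}_{L_{k+1}}(\Bx)\ \text{is}\ m\text{-non-localized}\right\},\\
\rm{S}_{\By}&=\left\{\exists E\in I_0:\ M(\BC^{(n)}_{L_{k+1}}(\By);E)\geq \kappa(n)+6\right\}.
\end{align*}
The reduction step is: if $\omega\notin\Sigma\cup\CN_{\Bx}\cup\rm{S}_{\By}$, then for every $E\in I_0$ one of the two cubes is $E$-CNR. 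If $\BC^{(n)}_{L_{k+1}}(\Bx)$ is $E$-CNR (hence $E$-NR) then, being $m$-localized, it is $(E,m)$-NS by Lemma \ref{lem:NR.NS}; if $\BC^{(n)}_{L_{k+1}}(\By)$ is $E$-CNR then, since $M(\BC^{(n)}_{L_{k+1}}(\By);E)\leq\kappa(n)+5=J$, it is $(E,m)$-NS by Lemma \ref{lem:CNR.NS}. Either way some cube of the pair is $(E,m)$-NS for every $E$, contradicting $\omega\in\rm{B}_{k+1}$. Hence $\rm{B}_{k+1}\subset\Sigma\cup\CN_{\Bx}\cup\rm{S}_{\By}$.

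It then remains to bound the three probabilities and add. By Theorem \ref{thm:Wegner}(B), $\prob\{\Sigma\}\leq L_{k+1}^{-4^N p}$. By Lemma \ref{lem:loc.prob}, $\prob\{\CN_{\Bx}\}\leq L_{k+1}^{-4p4^{N-n}}$. For $\prob\{\rm{S}_{\By}\}$ I would argue exactly as in the proof of Theorem \ref{thm:FI.cubes}: $M\geq\kappa(n)+6$ forces $M_{\PI}\geq\kappa(n)+2$ or $M_{\FI}\geq 4$, and Lemma \ref{lem:prob.M} together with Lemma \ref{lem:prob.M2} (with $\ell=2$) gives $\prob\{\rm{S}_{\By}\}\leq \tfrac14 L_{k+1}^{-2p4^{N-n}}$, using $\alpha=3/2$ and $p\geq 6Nd$ so that the exponents $-\tfrac{4p}{\alpha}4^{N-n}+4nd$ and $-\tfrac{4^N p}{\alpha}+2nd$ are both $\leq -2p4^{N-n}$ for $L_0\geq L_3^*(N,d)$. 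Summing,
\[
\prob\{\rm{B}_{k+1}\}\leq L_{k+1}^{-4^N p}+L_{k+1}^{-4p4^{N-n}}+\tfrac14 L_{k+1}^{-2p4^{N-n}}\leq L_{k+1}^{-2p4^{N-n}},
\]
which is $\dskonN$ for the mixed pair. The main obstacle, as in the FI case, is getting the bookkeeping of the exponents right in the bound for $\rm{S}_{\By}$ so that all error terms are genuinely dominated by $L_{k+1}^{-2p4^{N-n}}$; everything else is a direct transcription of the PI and FI arguments, with the asymmetry between the two cubes being the only new point.
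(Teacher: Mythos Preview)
Your proposal is correct and follows essentially the same route as the paper: the paper also decomposes $\rm{B}_{k+1}\subset\Sigma\cup\rm{T}_{\Bx}\cup\rm{S}_{\By}$ (its event $\rm{T}_{\Bx}$ plays exactly the role of your $\CN_{\Bx}$), uses Lemma~\ref{lem:NR.NS} for the PI cube and Lemma~\ref{lem:CNR.NS} for the FI cube, and then bounds the three probabilities via Theorem~\ref{thm:Wegner}(B), Lemma~\ref{lem:loc.prob}, and the $\rm{S}_{\By}$ estimate already obtained in the proof of Theorem~\ref{thm:FI.cubes}. Your write-up is in fact a bit cleaner than the paper's, which has a couple of typos in the final chain of inequalities.
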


\begin{proof}
consider a pair of separable $n$-particle cubes $\BC^{(n)}_{L_{k+1}}(\Bx)$, $\BC^{(n)}_{L_{k+1}}(\By)$ and suppose that
 $\BC^{(n)}_{L_{k+1}}(\Bx)$ is PI while $\BC^{(n)}_{L_{k+1}}(\By)$ is FI. Set $J=\kappa(n)+5$ and introduce the events
\begin{align*}
\rm{B}_{k+1}&=\left\{\exists E\in I_0: \text{ $\BC^{(n)}_{L_{k+1}}(\Bx)$ and $\BC^{(n)}_{L_{k+1}}(\By)$ are $(E,m)$-S}\right\}\\
\Sigma&=\left\{\exists E\in I_0: \text{neither $\BC^{(n)}_{L_{k+1}}(\Bx)$ nor $\BC^{(n)}_{L_{k+1}}(\By)$ is $E$-CNR}\right\}\\
\rm{T}_{\Bx}&=\left\{ \text{$\BC^{(n)}_{L_{k+1}}(\Bx)$ is $(E,m)$-T}\right\}\\
\rm{S}_{\By}&=\left\{\text{$\exists E\in I_0$: $M(\BC^{(n)}_{L_{k+1}}(\By),E)\geq J+1$}\right\}   
\end{align*}
Let $\omega\in\rm{B}_{k+1}\setminus(\Sigma\cup\rm{T}_{\Bx})$ then, for all $E\in I_0$ either $\BC^{(n)}_{L_{k+1}}(\Bx)$ is $E$-CNR or $\BC^{(n)}_{L_{k+1}}(\By)$ is $E$-CNR and $\BC^{(n)}_{L_{k+1}}(\Bx)$ is $E,m)$-NT. The cube $\BC^{(n)}_{L_{k+1}}(\Bx)$ cannot be $E$-CNR. Indeed by Lemma \ref{lem:NR.NS} it would have been $(E,m)$-NS. Thus the cube $\BC^{(n)}_{L_{k+1}}(\By)$ is $E$-CNR, so by Lemma \ref{lem:CNR.NS} $M(\BC^{(n)}_{L_{k+1}}(\By),E)\geq J+1$: otherwise $\BC^{(n)}_{L_{k+1}}(\By)$ would be $(E,m)$-NS. Therefore $\omega\in\rm{S}_{\By}$. Consequently,
\[
\rm{B}_{k+1}\subset \Sigma\cup T_{\Bx}\cup\rm{S}_{\By}.
\]
Recall that the probabilities $\prob\{\rm{T}_{\Bx}\}$ and $\prob\{ \rm{S}_{\By}\}$ have already been estimated in Sections \ref{sec:PI.cubes} and \ref{sec:FI.cubes}. We therefore obtain
\begin{align*}
\prob\{\rm{B}_{k+1}\}&\leq \prob\{\rm{T}_{\Bx}\}+\prob\{\rm{S}_{\By}\}\\
&\leq L_{k+1}^{-4^Np} + \frac{1}{2}L_{k+1}^{-4p4^{N-n}}+\frac{1}{4}L_{k+1}^{-2p4^{N-n}}\leq L_{k+1}^{-2p4^{N-n}}
\end{align*}
\end{proof}
\section{Conclusion: The multi-particle multi-scale analysis} \label{sec:MSA}
\begin{theorem}\label{thm:MSA}
Let $1\leq n\leq N$ and $\BH^{(n)}(\omega)=-\BDelta +\sum_{j=1}^n V(x_j,\omega) +\BU$, where $\BU$, $V$ satisfy $\condI$ and $\condP$ respectively. There exists a positive $m$ such that for any $p\geq 6Nd$ property $\dsknN$ holds true for all $k\geq 0$ provided $L_0$ is large enough.
\end{theorem}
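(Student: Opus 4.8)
The plan is to run a two-parameter induction: an outer induction on the number of particles $n$, and, for each fixed $n$, an inner induction on the scale index $k$. Throughout, $p\geq 6Nd$ is the fixed exponent and the mass $m$ is the one produced by the initial-scale analysis (namely $m=m^*/2$, equivalently $m_1$, with $m^*=2^{-N-1}\tilde\mu$ coming from the single-particle localization theorem quoted above); the specific form $\gamma(m,L,n)=m(1+L^{-1/8})^{N-n+1}$ is exactly what lets the mass survive the passage from scale $L_k$ to $L_{k+1}$, as is verified inside Lemma \ref{lem:CNR.NS}. We take $L_0$ larger than every threshold appearing below --- $L_0\geq\max\{L_1^*,L_2^*,\tilde L_2^*,L_3^*,L_*(m^*,N,d),3\}$ --- all of which depend only on $N$ and $d$ (through $\kappa(N)$ and $J=\kappa(N)+5$), and we assume throughout that the interaction amplitude lies below the threshold $h^*$ of Theorem \ref{thm:ILS.WI}, which is where the weak-interaction hypothesis is used.

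\emph{Base of the outer induction, $n=1$.} For a single particle every cube is FI, so $(\textbf{DS}.0,1,N)$ is just Theorem \ref{thm:VE.WI} specialised to $n=1$ (which rests on Theorem \ref{thm:ILS.np}, hence on single-particle localization), and the scale induction $k\to k+1$ is the classical von Dreifus--Klein/Stollmann step, recovered here as the $n=1$ instance of Theorem \ref{thm:FI.cubes} together with the Wegner bound Theorem \ref{thm:Wegner}; this yields \dsknN for $n=1$ and all $k\geq 0$.

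\emph{Outer induction step.} Fix $2\leq n\leq N$ and assume that \dsknprimeN holds for all $k\geq 0$ and all $n'$ with $1\leq n'<n$. We prove \dsknN for all $k\geq 0$ by induction on $k$. For $k=0$ the bound is Theorem \ref{thm:VE.WI}, since for a separable pair the event that \emph{both} cubes are $(E,m)$-singular for some $E\in I_0$ is contained in the event that the first one is. Assume now \dsknN holds at all scales up to $L_k$; we establish \dskonN. Let $\BC^{(n)}_{L_{k+1}}(\Bx)$ and $\BC^{(n)}_{L_{k+1}}(\By)$ be a separable pair; by Definition \ref{def:PI.FI} each cube is PI or FI, giving three cases. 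If both are PI, Theorem \ref{thm:PI.cubes} applies because \dsknprimeN holds for $1\le n'<n$ (outer hypothesis). If both are FI, Theorem \ref{thm:FI.cubes} applies: its hypothesis (i), \dskunnprimeN for $n'<n$, is the outer hypothesis (vacuous for $k=0$ by the stated convention), and hypothesis (ii), \dsknN for all FI pairs, is part of the inner hypothesis at scale $L_k$. If the pair is mixed, Theorem \ref{thm:MI.cubes} applies, its hypotheses (i)--(iii) being, respectively, the outer hypothesis, the outer hypothesis, and the inner hypothesis restricted to FI pairs at scale $L_k$. In every case the probability is bounded by $L_{k+1}^{-2p4^{N-n}}$; taking the worst case over the type of the pair gives \dskonN, closing the inner and hence the outer induction.

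The step that carries the real weight is the fully interactive case, because that is where the two inductions genuinely interlock: \dskonN for FI pairs is deduced --- via Lemma \ref{lem:CNR.NS}, Lemma \ref{lem:prob.M} and Lemma \ref{lem:prob.M2} with $\ell=2$ --- from \dsknN for FI pairs at the \emph{same} particle number $n$, so the inner induction on $k$ must be phrased with ``\dsknN for all separable pairs, FI ones included'' as its hypothesis, and one must check that the polynomial losses incurred there (notably the factor $L_k^{2\ell dn\alpha}$ of Lemma \ref{lem:prob.M2}) are absorbed by the gain in the exponent --- which is precisely why $p\geq 6Nd=4\alpha Nd$ is imposed in $\condP$ and threaded through every Wegner-type estimate. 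The remaining bookkeeping --- that all threshold scales depend only on $N$ and $d$, that $m$ does not deteriorate along the scales, and that $I_0$ is the interval fixed before Theorem \ref{thm:VE.WI} --- is routine.
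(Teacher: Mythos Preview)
Your proof is correct and follows essentially the same route as the paper: a double induction, first on the particle number $n$ (base case $n=1$ from single-particle theory) and then, for fixed $n$, on the scale index $k$ (base case $k=0$ from the initial-scale estimate, induction step via the PI/FI/mixed trichotomy using Theorems \ref{thm:PI.cubes}, \ref{thm:FI.cubes}, \ref{thm:MI.cubes}). Your version is in fact more explicit than the paper's in matching the hypotheses of each of those three theorems to the corresponding layer of the induction, and in invoking Theorem \ref{thm:VE.WI} for the $k=0$ step.
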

\begin{proof}
We prove that for each $n=1,\ldots,N$, property $\dsknN$ is valid. To do so, we use an induction on the number of particles $n'=1,\ldots,n$. For $n=1$ the property holds true for all $k\geq 0$ by the single-particle localization theory \cite{Sto01}. Now suppose that for all $n'\in[1,n)$ $\dsknprimeN$ holds true for all $k\geq 0$, we aim to prove that $\dsknN$ holds true for all $k\geq 0$. For $k=0$, the property is valid using  Theorem \ref{thm:ILS}. Next, suppose that $\dskprimenN$ holds true for all $k'\in(0,k)$, then by combining this last assumption with $\dsknprimeN$ above, one can conclude that:

\begin{enumerate}
\item[(i)]
 $\dsknN$ holds true  for all $k\geq 0$ and for all pairs of PI cubes using Theorem \ref{thm:PI.cubes}\\
\item[(ii)] 
$\dsknN$ holds true for all $k\geq 0$ and for all pairs of FI cubes using Theorem \ref{thm:FI.cubes}\\
\item[(iii)] 
$\dsknN$ holds true for all $k\geq 0$ and for all pairs of MI cubes using Theorem \ref{thm:MI.cubes}
\end{enumerate}

Hence, Theorem \ref{thm:MSA} is proven.
\end{proof} 

\section{Proofs of the results} \label{sec:proof.results}

\subsection{Proof of Theorem \ref{thm:main.result.exp.loc}}

Using the multi-particle multi-scale analysis bounds in the continuum property $\dskNN$, we extend to multi-particle systems the strategy of Stollmann \cite{Sto01}.

For $\Bx_0\in\DZ^{Nd}$ and an integer $k\geq 0$, using the notations of lemma \ref{lem:separable}
\[
R(\Bx_0):= \max_{1\leq \ell\leq \kappa(N)}|\Bx_0-\Bx_{(\ell)}|; \quad b_k(\Bx_0):=7N+ R(\Bx_0)L_k^{-1},
\]

\[
M_k(\Bx_0):=\bigcup_{\ell=1}^{\kappa(N)} C^{(N)}_{7NL_k}(\Bx^{(\ell)})
\]
and define
\[
A_{k+1}(\Bx_0):=\BC^{(N)}_{bb_{k+1}L_{k+1}}(\Bx_0)\setminus\BC^{(N)}_{b_kL_k}(\Bx_0).
\] 
where the positive parameter $b$ is to be chosen later. We can easily check that
\[
M_k(\Bx_0)\subset\BC^{(N)}_{b_kL_k}(\Bx_0).
\]  
Moreover, if $\Bx\in A_{k+1}(\Bx_0)$, then the cubes $\BC^{(N)}_{L_k}(\Bx)$ and $\BC^{(n)}_{L_k}(\Bx_0)$ are separable by Lemma \ref{lem:separable}. Now, also define
\[
\Omega_k(\Bx_0):=\{ \text{$\exists E\in I_0$ and $\Bx\in A_{k+1}(\Bx_0)\cap \Gamma_k:$ $\BC^{(n)}_{L_k}(\Bx)$ and $\BC^{(n)}_{L_k}(\Bx_0)$ are $(E,m)$-S}\},
\]
with $\Gamma_k:=\Bx_0+\frac{L_k}{3}\DZ^{Nd}$. Now property $\dskNN$ combined with the cardinality of $A_{k+1}(\Bx_0)\cap\Gamma_k$ imply 
\begin{align*}
\prob\{\Omega_k(\Bx_0)\}&\leq(2bb_{k+1}L_{k+1})^{Nd}L_k^{-2p} \\
&\leq (2bb_{k+1})^{Nd}L_{k}^{-2p+\alpha Nd}.
\end{align*}
Since, $p\geq (\alpha Nd+1)/2$ (in fact $p\geq 6Nd$), we get $\sum_{k=0}^{\infty} \prob\{\Omega_k(\Bx_0)\}$ is finite. 
Thus, setting
\[
\Omega_{\infty}:=\{ \forall \Bx_0\in\DZ^{Nd}, \text{ $\Omega_k(\Bx_0)$ occurs finitely many times}\},
\]
by the Borel Cantelli Lemma and the countability of $\DZ^{Nd}$ we have that $\prob\{\Omega_{\infty}\}=1$ Therefore it suffices to pick $\omega\in \Omega_{\infty}$ and prove  the exponential decay  of any nonzero eigenfunction $\BPsi$ of $\BH^{(N)}(\omega)$. 

Let $\BPsi$ be a polynomially bounded eigenfunction satisfying (EDI) (see Theorem \ref{thm:EDI}). Let $\Bx_0\in\DZ^{Nd}$ with positive $\|\Bone_{\BC^{(N)}_1(\Bx_0)}\BPsi\| $ (if there is no such $\Bx_0$, we are done.) The cube $\BC^{(N)}_{L_k}(\Bx_0)$ cannot be $(E,m)$-NS for infinitely many $k$. Indeed, given an integer $k\geq 0$, if $ \BC^{(N)}_{L_k}(\Bx_0)$ is $(E,m)$-NS then by (EDI) and the polynomial bound on $\BPsi,$ we get
\begin{align*}
\|\Bone_{\BC^{(N)}_1(\Bx_0)}\BPsi\| \leq&\|\Bone_{\BC^{(N,out)}_{L_k}(\Bx_0)}\BG^{(N)}_{\BC^{(N)}_{L_k}(\Bx_0)}(E)\Bone_{\BC^{(N,int)}_{L_k}(\Bx_0)}\|\cdot\|\Bone_{\BC^{(N,out)}_{L_k}(\Bx_0)}\BPsi\|\\
&\leq C(1+|\Bx_0|+L_k)^t\cdot\ee^{-mL_k}
\end{align*}
and the last term tends to $0$ as $L_k$ tends to infinity in contradiction with the choice of $\Bx_0$. So there is an integer $k_1=k_1(\omega,E,\Bx_0)$ finite such that $\forall k\geq k_1$ the cube $\BC^{(N)}_{L_k}(\Bx_0)$ is $(E,m)$-S. At the same time, since $\omega\in\Omega_{\infty}$, there exists $k_2=k_2(\omega,\Bx_0$ such that if $k\geq k_2$ $\Omega_k(\Bx_0)$ does not occurs. We conclude that for all $k\geq \max\{k_1,k_2\}$, for all $\Bx\in A_{k+1}(\Bx_0)\cap \Gamma_k$, $\BC^{(N)}_{L_k}(\Bx)$ is $(E,m)$-NS.
Let $\rho\in(0,1)$ and choose positive $b$ such that 
\[
b\geq \frac{1+\rho}{1-\rho},
\]
so that 
\[
\tilde{A}_{k+1}:=\BC^{(N)}_{\frac{bb_{k+1}L_{k+1}}{1+\rho}}(\Bx_0)\setminus\BC^{(N)}_{\frac{b_kL_k}{1-\rho}}(\Bx_0)\subset A_{k+1}(\Bx_0),
\]
for $\Bx\in\tilde{A}_{k+1}(\Bx_0)$.  
\begin{enumerate}
\item[(1)]
Since, $|\Bx-\Bx_0|\geq\frac{b_kL_k}{1-\rho}$,
\begin{align*}
\dist(\Bx,\partial\BC^{(N)}_{b_kL_k}(\Bx_0)&\geq |\Bx-\Bx_0|-b_kL_k\\
&\geq|\Bx-\Bx_0|-(1-\rho)|\Bx-\Bx_0|\\
&=\rho|\Bx-\Bx_0|
\end{align*}
\item[(2)]
Since $|\Bx-\Bx_0|\leq \frac{bb_{k+1}L_{k+1}}{1+\rho}$,
\begin{align*}
\dist(\Bx,\partial\BC^{(N)}_{bb_{k+1}L_{k+1}}(\Bx_0))&\geq bb_{k+1}L_{k+1}-|\Bx-\Bx_0|\\
&\geq (1+\rho)|\Bx-\Bx_0| - |\Bx-\Bx_0|\\
&=\rho|\Bx-\Bx_0|.
\end{align*}
\end{enumerate} 
Thus, 
\[
\dist(\Bx,\partial A_{k+1}(\Bx_0))\geq \rho|\Bx-\Bx_0|.
\]
Now, setting $k_3=\max\{k_1,k_2\}$, the assumption linking $b$ and $\rho$ implies that
\[
\bigcup_{k\geq k_3}\tilde{A}_{k+1}(\Bx_0)=\DR^{Nd}\setminus\BC^{(N)}_{\frac{b_{k_3}L_{k_3}}{1-\rho}}(\Bx_0).
\]
because $\frac{bb_{k+1}L_{k+1}}{1+\rho}\geq \frac{b_kL_k}{1-\rho}$. Let $k\geq k_3$, recall that this implies that all the cubes with centers in $A_{k+1}(\Bx_0)\cap\Gamma_k$ and side length $2L_k$ are $(E,m)$-NS. Thus, for any $\Bx\in\tilde{A}_{k+1}(\Bx_0)$, we choose $\Bx_1\in A_{k+1}(\Bx_0)$ such that $\Bx\in\BC^{(n)}_{L_k}(\Bx_1)$. Therefore
\begin{align*}
\|\BC_1^{(N)}(\Bx)\BPsi\|&\leq \|\Bone_{\BC^{(N,int)}_{L_k}(\Bx_1)}\BPsi\|\\
&\leq C\cdot \ee^{-mL_k}\|\cdot\|\Bone_{\BC^{(N,out)}_{L_k}(\Bx_1)}\BPsi\|.
\end{align*}
Up to a set of Lebesgue measure zero, we can cover $\BC^{(N,out)}_{L_k}(\Bx_1)$ by at most $3^{Nd}$ cubes  
\[
\BC^{(N,int)}_{L_k}(\tilde{\Bx}),\quad \tilde{\Bx}\in\Gamma_k,\quad |\tilde{\Bx}-\Bx_1|=\frac{L_k}{3}.
\]
By choosing $\Bx_2$ which gives a maximal norm, we get
\[
\|\Bone_{\BC^{(N,out)}_{L_k}(\Bx_1)}\BPsi\|\leq 3^{Nd}\cdot\|\Bone_{\BC_{L_k}^{(N,int)}(\Bx_2)}\BPsi\|,
\]
so that
\[
\|\Bone_{\BC^{(N)}_1(\Bx)}\BPsi\|\leq 3^{Nd}\cdot\ee^{-mL_k}\cdot\|\Bone_{\BC^{(N,int)}_{L_k}(\Bx_2)}\BPsi\|.
\]

Thus, by an induction procedure, we find a sequence $\Bx_1,\Bx_2,\ldots,\Bx_n$ in $\Gamma_k\cap A_{k+1}(\Bx_0)$ with the bound

\[
\|\Bone_{\BC^{(N)}_1(\Bx)}\BPsi\| \leq (C\cdot 3^{Nd}\exp(-mL_k))^n\cdot\|\Bone_{\BC^{(N,out)}_{L_k}(\Bx_n)}\BPsi\|.
\]
Since $|\Bx_i-\Bx_{i+1}|=L_k/3$ and $\dist(\Bx,\partial A_{k+1})\geq \rho\cdot|\Bx-\Bx_0|$, we can iterate at least $\rho\cdot|\Bx-\Bx_0|\cdot 3/{L_k}$ times until, we reach the boundary of $A_{k+1}(\Bx_0)$. Next, using the polynomial bound on $\BPsi$, we obtain:

\begin{align*}
\|\Bone_{\BC^{(N)}_1(\Bx)}\BPsi\|&\leq (C\cdot 3^{Nd})^{\frac{3\rho|\Bx-\Bx_0|}{L_k}}\cdot\exp(-3m\rho|\Bx-\Bx_0|)\\
&\times C(1+|\Bx_0|+bL_{k+1})^t\cdot L_{k+1}^{Nd}.
\end{align*}
We can conclude that given $\rho'$ with $\rho'\in(0,1)$, we can find $ k_4\geq k_3$ such that if $k\geq k_4$, then 
\[
\|\Bone_{\BC_1^{(N)}(\Bx)}\BPsi\|\leq \ee^{-\rho\rho'm|\Bx-\Bx_0|},
\]
if $|\Bx-\Bx_0|\geq \frac{b_{k_4}L_{k_4}}{1-\rho}$. This completes the proof of the exponential localization in the max-norm.               
\subsection{Proof of Theorem \ref{thm:main.result.dynamical.loc}}
For the proof of the multi-particle dynamical localization given the multi-particle multi-scale analysis in the continuum, we refer to the paper by Boutet de Monvel et al. \cite{BCS11}.

\section{Appendix}\label{sec:appendix}

\subsection{proof of Lemma \ref{lem:separable}}
(A) Consider positive $L$, $\emptyset\neq \CJ\subset\{1,\ldots,n\}$ and $\By\in\DZ^{nd}$. $\{y_j\}_{j\in\CJ}$ is  called an $L$-cluster if the union 
\[
\bigcup_{j\in\CJ} C^{(1)}_L(y_j),
\]
cannot be decomposed into two non-empty disjoint subsets. Next, given two configurations $\Bx,\By\in\DZ^{nd}$, we proceed as follows:
\begin{enumerate}
\item[(1)]  We decompose the vector $\By$ into  maximal $L$-clusters $\Gamma_1,\ldots=,\Gamma_M$ (each of diameter $\leq 2nL$) with $M\leq n$\\
\item[(2)]
Each position $y_i$ corresponds to exactly one cluster $\Gamma_j,$ $j=j(i)\in\{1,\ldots,M\}.$\\
\item[(3)]
If there exists $j\in\{1,\ldots,M\}$ such that $\Gamma_j\cap \varPi\BC^{(n)}_{L_k}(\Bx)=\emptyset$, then the cubes $\BC^{(n)}_{L_k}(\By)$ and $\BC^{(n)}_{L_k}(\Bx)$ are separable\\
\item[(4)]
If (3) is wrong, then for all $k=1,\ldots,M$ $\Gamma_k\cap\varPi\BC^{(n)}_{L}(\Bx)\neq \emptyset$. Thus for all $k=1,\ldots,M$, $\exists i=1,\ldots,n$ such that $\Gamma_k\cap C^{(1)}_L(x_i)\neq \emptyset$. Now for any $j=1,\ldots,n$ there exists $k=1,\ldots,M$ such $y_j\in\Gamma_k$. Therefore for such $k$, by hypothesis there exists $i=1,\ldots,n$ such that $\gamma_k\cap C^{(1)}_L(x_i)\neq \emptyset$. Next let $z\in\Gamma_k\cap C^{(1)}_L(x_i)$ so that $|z-x_i|\leq L$. We have that 
\begin{align*}
|y_j-x_i|&\leq |y_j-z|+|z-x_i|\\
&\leq 2nL-L+L=2nL
\end{align*}
since $y_j\in\Gamma_k$. 
\end{enumerate}

Notice that above we have the bound $|y_j-z|\leq 2nL-L$ because $y_j$is a center of the $L$-cluster $\Gamma_k$ Hence for all $j=1,\ldots,n$ $y_j$ must belong to one of the cubes $C^{(1)}_{2nL}(x_i)$ for the $n$-positions $(y_1,\ldots,y_n)$. Set $\kappa(n)=n^n$. For any choice of at most $\kappa(n)$ possibilities; $\By=(y_1,\ldots,y_n)$ must belong to the cartesian product of $n$ cubes of side length $2L$ i.e., an $nd$-dimensional cube of size $2nL$, the assertion then follows.

(B) Set $R(\By)=\max_{1\leq i,j\leq n}|y_i-y_j| + 5NL$ and consider a cube $\BC^{(n)}_L(\Bx)$  with $|\By-\Bx|\geq R(\By)$. Then there exist $i_0\in\{1,\ldots,n\}$ such that $|y_{i_0}-x_{i_0}|\geq R(\By)$. Consider  the maximal connected component $\Lambda_{\Bx}:=\bigcup_{i\in\CJ} C^{(1)}_L(x_i)$ of the union $\bigcup_i C^{(1)}_L(x_i)$ containing $x_{i_0}$. Its diameter bis bounded by $2nL$. We have 

\[
\dist(\Lambda_{\Bx};\varPi\BC^{(n)}_L(\By))=\min_{u,v}|u-v|,
\]
now, since
\[
|x_{i_0}-y_{i_0}|\leq |x_{i_0}-u|+|u-v|+|v-y_{i_0}|,
\]
then
\begin{align*}
\dist(\Lambda_\Bx,\varPi\BC^{(n)}_L(\By))&=\min_{u,v}|u-v |-\diam(\Lambda_\Bx)-\max_{v,y_{i_0}}|v-y_{i_0}|.\\
\end{align*} 
Recall that $\diam(\Lambda_{\Bx})\leq 2nL$ and 
\[
\max_{v,y_{i_0}}|v-y_{i_0}|\leq \max_{v}|v-y_j|+\max_{y_{i_0}}|y_j-y_{i_0}|,
\] for some $j=1,\ldots,n$ such that $v\in C^{(1)}_L(y_j)$. Finally, we get 
\[
\dist(\Lambda_{\Bx},\varPi\BC^{(n)}_L(\By))\geq R(\By)-\diam(\Lambda_{\Bx})-(2L+\diam(\varPi\By)),
\]
and the latter quantity is strictly positive. This implies that $\BC^{(n)}_L(\Bx)$ is $\CJ$separable from $\BC^{(n)}_L(\By)$.

\subsection{Proof of Lemma \ref{lem:PI}}
Set $R:=2L+r_0$ and assume that $\diam\varPi\Bu=\max_{i,j}|u_i-u_j|\geq nR$. If the union of cubes $C^{(1)}_{R/2}(u_i)$, $i=1,\ldots,n$ were not decomposable into two (or more) disjoint groups, then, it would be connected hence its diameter would be bounded by $n(2(R/2))=nR$ hence $\diam\varPi\Bu\leq nR$ which contradicts the hypothesis. Therefore, there exists an index subset $\CJ\subset\{1,\ldots,n\}$ such that  $|u_{j_1}-u_{j_2}|\geq 2(R/2)$ for all $j_1\in\CJ$ and $j_2\in\CJ^c$, this implies that 

\begin{align*}
\dist\left(\varPi_{\CJ}\BC^{(n)}_L(\Bu),\varPi_{\CJ^c}\BC^{(n)}_L(\Bu)\right)&=\min_{j_1\in\CJ,j_2\in\CJ^c}\dist\left( C^{(1)}_{L}(u_{j_1}),C^{(1)}_L(u_{j_2})\right)\\
&\geq \min_{j_1\in\CJ,j_2\in\CJ^c}|u_{j_1}-u_{j_2}|-2L\geq r_0.
\end{align*}

\subsection{Proof of Lemma \ref{lem:FI.cubes}} 
If for some positive $R$
\[
R\leq|\Bx-\By|=\max_{1\leq j\leq n}|x_j-y_j|,
\]
then there exists $1\leq j_0\leq n$ such that $|x_{j_0}-y_{j_0}|\geq R$. Since both cubes are fully interactive,
\begin{align*}
&|x_{j_0}-x_i |\leq \diam \varPi_{\Bx}\leq n(2L+r_0),\\
&|y_{j_0}-y_j|\leq \diam\varPi_{\By} \leq n(2L+r_0).\\
\end{align*}

By the triangle inequality, for any $1\leq i,j\leq n$ and $R\geq 7nL\geq 6nL+2nr_0$, we have 
\begin{align*}
|x_i-y_j|&\geq |x_{j_0}-y_{j_0}|-|x_{j_0}-x_i|-|y_{j_0}-y_j|\\
&\geq 6nL+2nr_0-2n(2L+r_0)=2nL.\\
\end{align*}
Therefore, for any $1\leq i,j\leq n$,
\[
\min_{i,j}\dist\left( C^{(1)}_L(x_i),C^{(1)}_L(y_j)\right)\geq \min_{i,j}|x_i-y_j|-2L\geq 2(n-1)L.
\]
which proves the claim. 

\subsection{Proof of Lemma \ref{lem:M}}
Assume that $M^{\sep}(\BC^{(n)}_{L_{k+1}}(\Bu),E)$ is less than $2$ (i.e.,there is no pair of separable cubes of radius $L_k$ in $\BC^{(n)}_{L_{k+1}}(\Bu))$ but $M(\BC^{(n)}(\Bu),E)\geq \kappa(n)+2$. Then $\BC^{(n)}_{L_{k+1}}(\Bu)$  must contain at least $\kappa(n)+2$ cubes $\BC^{(n)}_{L_k}(\Bv_i)$, $0\leq i \leq \kappa(n)+1$ which are not separable but satisfy $|\Bv_i-\Bv_{i'}|\geq 7NL_k$ for all $i\neq i'$. On the other hand, by Lemma \ref{lem:separable} there are at most $\kappa(n)$ cubes   $\BC^{(n)}_{2nL_k}(\By_i)$, such that any cube $\BC^{(n)}_{L_k}(\Bx)$ with $\Bx\notin \bigcup_j \BC^{(n)}_{2nL_k}(\By_j)$, is separable from $\BC^{(n)}_{L_k}(\Bv_0)$. Hence $\Bv_i\in\bigcup_j\BC^{(n)}_{2nL_k}(\By_j)$ for all $i=1,\ldots,\kappa(n)+1$. But since for all $i\neq i'$ $|\Bv_i-\Bv_{i'}|\geq 7NL_k$ there must be at most one center $\Bv_i$ per cube $\BC^{(n)}_{2nL_k}(y_j)$, $1\leq j\leq \kappa(n)$. Hence we come to a contradiction 
\[
\kappa(n)+1\leq \kappa(n).
\]
The same analysis holds true if we consider only PI cubes.

\subsection{Proof of Lemma \ref{lem:prob.M}}
Suppose that $M_{\PI}(\BC^{(n)}_{L_{k+1}}(\Bu),I)\geq \kappa(n)+2$, then by Lemma \ref{lem:M} $M^{\sep}_{\PI}(\BC^{(n)}_{L_{k+1}}(\Bu),I)\geq 2$ i.e., there are at least two separable $(E,m)$-S PI cubes $\BC^{(n)}_{L_k}(\Bu^{(j_1)}$, $\BC^{(n)}_{L_k}(\Bu^{(j_2)})$  inside $\BC^{(n)}_{L_{k+1}}(\Bu)$. The number of possible pairs of centers  $\{\Bu^{(j_1)},\Bu^{(j_2)}\}$ such that 
\[
\BC^{(n)}_{L_k}(\Bu^{(j_1)}), \BC^{(n)}_{L_k}(\Bu^{(j_2)})\subset \BC^{(n)}_{L_{k+1}}(\Bu)
\]
is bounded by $\frac{3^{2nd}}{2}L_{k+1}^{2nd}$. Then, setting
\[
\rm{B}_k=\{\exists E\in I, \BC^{(n)}_{L_k}(\Bu^{(j_1)}), \BC^{(n)}_{L_k}(\Bu^{(j_2)}) \text{ are $(E,m)$-S}\}
\]
\[
\prob\left\{M^{\sep}_{\PI}(\BC^{(n)}_{L_{k+1}}(\Bu),I)\geq 2\right\}\leq \frac{3^{2nd}}{2}L_{k+1}^{2nd}\times \prob\{B_k\}
\]
with $\prob\{\rm{B}_k\}\leq L_k^{-4^Np} + L_k^{-4p4^{N-n}}$  

\subsection{Proof of Lemma \ref{lem:prob.M2}}
Suppose there exist $2\ell$ pairwise separable fully interactive cubes $\BC^{(n)}_{L_k}(\Bu^{(j)})\subset\BC^{(n)}_{L_{k+1}}(\Bu)$, $1\leq j\leq 2\ell.$ Then  by Lemma \ref{lem:FI.cubes} for any pair $\BC^{(n)}_{L_k}(\Bu^{(2i-1)})$, $\BC^{(n)}_{L_k}(\Bu^{(2i)})$  the corresponding random Hamiltonians $\BH^{(n)}_{\BC^{(n)}_{L_k}(\Bu^{(2i-1)})}$ and $\BH^{(n)}_{\BC^{(n)}_{L_k}(\Bu^{(2i)})}$ are independent and so are their spectra and their Green functions. For $i=1,\ldots,\ell,$ we consider the events:
\[
\rm{A}_i=\left\{ \exists E\in I: \BC^{(n)}_{L_k}(\Bu^{(2i-1)}) \text{ and $\BC^{(n)}_{L_k}$ are } (E,m)-S\right\}.
\]
then by assumption  $\dsknN$, we have for $i=1,\ldots,\ell$
\[
\prob\{\rm{A}_i\}\leq L_k^{-2p4^{N-n}}
\]
and by independence of the events $A_1,\ldots,A_{\ell}$
\[
\prob\left\{ \bigcap_{1\leq i\leq \ell}A_i \right\}=\prod_{i=1}^{\ell}\prob\{\rm{A_i}\}\leq \left(L_k^{-2p4^{N-n}}\right)^{\ell}.
\]
To complete the proof, note that the total number of different  families of $2\ell$ cubes  $\BC^{(n)}_{L_k}(\Bu^{(j)})\subset\BC^{(n)}_{L_{k+1}}(\Bu)$, $1\leq j\leq 2\ell$ is bounded  by 
\[
\frac{1}{(2\ell)!}\left|\BC^{(n)}_{L_{k+1}}(\Bu)\right|^{2\ell}\leq C(n,N,d,\ell) L_k^{2\ell dn\alpha}
\]

\bibliographystyle{plain}
\begin{bibdiv}
\begin{biblist}

\bib{AW09}{article}{
    author={M. Aizenmann},
		author={S. Warzel},
		title={Localization bounds for multi-particle systems},
		journal={Commun. Math. Phys.},
		volume={290},
		pages={903--934},
		date={2009}
}

\bib{BCSS10a}{misc}{
 author={A. Boutet de Monvel},
 author={V. Chulaevsky},
 author={P. Stollmann},
  author={Y. Suhov},
	title={Anderson localization  for a multi-particle model with an alloy-type external random potential},
	status={arXiv:math-Ph/1004.1300v1},
	date={2010}
}

\bib{BCSS10b}{article}{
    author={A. Boutet de Monvel},
		author={V. Chulaevsky},
		author={P. Stollmann},
		author={Y. Suhov},
		title={Wegner type-bounds for a multi-particle continuous Anderson model with an alloy-type  external potential},
		journal={J. Stat. Phys.},
		 volume={138},
		pages={553--566},
		date={2010}
}
\bib{BCS11}{article}{
    author={A. Boutet de monvel},
		author={V. Chulaevsky},
		author={Y. Suhov},
		title={Dynamical localization for multi-particle models with an alloy-type external random          potentials},
		journal={nonlinearity},
		volume={24},
		pages={1451--1472},
		date={2011}
}

\bib{CS09}{article}{
   author={V. Chulaevsky},
	 author={Y. Suhov},
	title={Multi-particle Anderson localization. Induction on the number of particles},
	journal={Math. Phys. Anal. Geom.},
	volume={12},
	pages={117--139},
	date= {2009},
}
\bib{C83}{article}{
   author={R. Carmona},
	 title={One-dimensional  Schr\"odinger operators  with random  or deterministics potentials, new spectral types},
	 journal={J. Funct. Anal.},
	volume={51},
	 pages={229--258},
	 date={1983},
}

\bib{DK89}{article}{
    author={H. von Dreifus},
		author={A. Klein},
		title={A new proof of localization  in the Anderson tight-binding model},
		journal={Commun. Math. Phys.},
		volume={124},
		pages={285--299},
		date={1989}
}
\bib{CH94}{article}{
   author={J. M. Combes},
	 author={P. D. Hislop},
	 title={Localization for continuous  random Hamiltonians in $d$-dimensions},
	 journal={J. Funct. Anal.},
	 volume={124},
	 pages={149--180}
}
	 
\bib{CL90}{book}{
     author={R. Carmona},
		 author={J. Lacroix},
		 title={Spectral theory of random Schr\"odinger operators},
		publisher={Birkh\"auser Boston},
		place={Boston Inc.},
		date={1990}
}
\bib{CS08}{article}{
    author={V. Chulaevsky},
		author={Y. Suhov},
		title={Wegner bounds for a two particle tight-binding model},
		journal={Commun. Math. Phys.},
		volume={283},
		pages={479--489},
		date={2008}
}
\bib{DS01}{article}{
  author={D. Damanik},
	author={P. Stollmann},
	title={Multi-scale analysis implies dynamical localization},
	journal={Geom. Funct. Anal.},
	volume={11},
	pages={11--29},
	number={1},
	date={2001}
	}
\bib{DSS02}{article}{
     author={D. Damanik},
		 author={R. Sims},
		 author={G. Stolz},
		 title={Localization fof one-dimensional continuum Bernoulli-Anderson models},
		 journal={Duke Math. Journal},
		 volume={114},
		 pages={59--100},
		 date={2002}
}	

\bib{Eka11}{article}{
    author={T. Ekanga},
		title={On two-particle Anderson localization at low energies},
		journal={C. R. Acad. Sci. Paris Ser I},
		volume={349},
		pages={167--170},
		date={2011}
}
\bib{Eka16}{article}{
    author={T. Ekanga},
		title={Multi-particle localization   for weakly interacting Anderson tight-binding models},
		journal={J. Math. Phys.},
		volume={58},
		date={2016}
}
\bib{Eka19a}{article}{
    author={T. Ekanga},
		title={Localization at low energy in the multi-particle tight-binding Anderson model},
		journal={Rev. Math. Phys.},
		date={2019}
 }
\bib{GK02}{article}{
     author={F. Germinet},
		 author={A. KLein},
		  title={Operator kernel estimates for functions of generalized Schr\"odinger operators},
			journal={Proceeding of the American Mathematical society},
			volume={131},
			pages={911--920},
			date={2002}
}
\bib{Sto01}{book}{
   author={P. Stollmann},
	 title={Caught by disorder bounded states in random media},
	 publisher={Birkh\"auser boston Inc.},
	 place={Boston, MA},
	 date={2001}
}
\bib{KS87}{article}{
   author={S. Kotani},
	 author={B. Simon},
	 title={Localization in general one-dimensional random systems},
	 journal={Commun. Math. Phys.},
	 volume={112},
	 pages={103--119},
	 date={1987}
}
\bib{PF92}{book}{
   author={L. Pastur},
	 author={A. Fogotin},
	 title={Spectra of random and almost-periodic operators},
	 publisher={Springer-Verlag},
	 date={1992}
	}
	\bib{SW86}{article}{
	 author={B. Simon},
	 author={T. Wolf},
	 title={Singular continuous under rank one  perturbation  and localization  for random Hamiltonians },
	 journal={Commun. Pure Appl. Math},
	 volume={39},
	 pages={75--90},
	 date={1986}
}
\bib{S55}{article}{
   author={G. Stolz},
	 title={Localization for the poisson model, " in spectral analysis and partial differential equations" operator  theory: Advances and Applications },
	journal={Birkh\"auser-Verlag},
	volume={78},
	pages={375--380},
	date={1955}
	}
\bib{S95}{article}{
	author={G. Stolz},
	title={Localization for random Schr\"odinger operators with poisson potential},
	journal={Annales de l' I.H.P., Section A},
	volume={63}
	number={3},
	pages={297--314},
	date={1995}
}
	 
\end{biblist}
\end{bibdiv}
\end{document}